\numberwithin{equation}{section}
\newcommand{\bea}{\begin{eqnarray}}
\newcommand{\eea}{\end{eqnarray}}
\newcommand{\Bea}{\begin{eqnarray*}}
\newcommand{\Eea}{\end{eqnarray*}}
\newcommand{\ba}{\begin{array}}
\newcommand{\ea}{\end{array}}
\newcommand{\bt}{\begin{tabular}}
\newcommand{\et}{\end{tabular}}
\newcommand{\btb}{\begin{table}}
\newcommand{\etb}{\end{table}}
\newcommand{\bc}{\begin{center}}
\newcommand{\ec}{\end{center}}
\newcommand{\beq}{\begin{equation}}
\newcommand{\eeq}{\end{equation}}
\newtheorem{theo}{\bf Theorem}[section]
\newtheorem{lemm}{\bf Lemma}
\newtheorem{prop}{\bf Proposition}[section]
\newtheorem{theorem}{Theorem}[section]
\newtheorem{example}{Example}
\begin{document}

\title{ Interquantile Shrinkage in Spatial Quantile  Autoregressive Regression models }

\author{
Ping  Dong$^{1}$ \ Jiawei Hou$^{2}$\  Yunquan Song$^{2}$\footnote{
The corresponding  author: Yunquan Song. Email: syqfly1980@163.com. This research was supported by NNSF project (61503412) of China, NSF project (ZR2019MA016) of Shandong Province of China.
 }\\
$^1$School of Statistics and Information, \\
Shanghai University of International Business and Economics\\
$^2$School of Science, China University of Petroleum
}
\date{}
\maketitle

\begin{abstract} \baselineskip=18pt
Spatial dependent data  frequently occur in many fields such as spatial econometrics and epidemiology. To deal with the dependence of variables and  estimate quantile-specific effects by covariates, spatial quantile autoregressive models (SQAR models) are introduced. Conventional quantile regression only focuses on the fitting models but ignores the examination of multiple conditional quantile functions, which provides a comprehensive view of the relationship between the response and covariates. Thus, it is necessary to study the different regression slopes at different quantiles, especially in situations where the quantile coefficients share some common feature. However, traditional Wald multiple tests not only increase the burden of computation but also bring greater FDR. In this paper, we transform the estimation and examination problem into a penalization problem, which estimates the parameters at different quantiles and identifies the interquantile commonality at the same time. To avoid the endogeneity caused by the spatial lag variables in SQAR models, we also introduce instrumental variables before estimation and propose two-stage estimation methods based on fused adaptive LASSO and fused adaptive sup-norm penalty approaches. The oracle properties of the proposed estimation methods are established. Through numerical investigations, it is demonstrated that the proposed methods lead to higher estimation efficiency than the traditional quantile regression.

\baselineskip=20pt {\it Key words:} 
Spatial quantile autoregressive models; Quantile regression; Instrumental variables; Fused Adaptive LASSO; Fused Adaptive Sup-norm


\end{abstract}

\baselineskip=18pt

\newpage

\section{Introduction}
In many fields including spatial econometrics, epidemiology and regional science, spatial dependent data is a common data type to be observed. Various spatial regression models are suggested for it.
Among them, a kind of simple and intuitive spatial autoregressive (SAR) model caught many academics' attention, which was first proposed by \cite{Cliff}.
Although a lot of studies have been done for models with independent and identically distributed random (i.i.d) disturbances, researchers has realized the heteroscedasticity exists in modelling the spatial data such as for unemployment data, crime rates data, housing prices, etc, see \cite{Anselin-B}, \cite{LeSage}, \cite{Zhang-S}.
\cite{Lin-Lee} extended the generalized method of moments (GMM) method to SAR model which allowed for heteroscedasticity. Then \cite{Kelejian} considered the GMM estimation with heteroscedasticity for a more general spatial model.
Clearly, all these models focused on the conditional mean function and were easily  affected by heteroscedasticity.

Quantile regression (QR) has attracted an increasing amount of attention after being introduced by \cite{Koenker}. It could provide more comprehensive statistical views than traditional regression which was only in the aspect of the condition mean, while quantile regression was studied at multiple quantiles. Quantile regression allows researches to explain the heteroscedasticity through the quantile coefficients not the disturbances.
Combining the SAR model and QR model opened up a new and exciting research direction, spatial quantile autoregressive (SQAR) model, which was been put forward by \cite{Su-Y}. \cite{Su-Y} offered an alternative view for allowing unknown heteroscedasticity in the SAR model, taking into account of both unobserved heterogeneity and spatial dependence.

Conventional multiple-quantile regression methods often carry out analysis at each quantile level separately. However, if the quantile coefficients share some common features across quantile levels, the slope coefficients may appear constant only at a certain quantile region. It's hard for the traditional quantile regression estimation to identify. Thus, distinguishing the quantile regions between constant coefficients and non-constant coefficients is the key problem. A natural idea is to take a hypothesis test, so \cite{Koenker-2005} brought the Wald test to distinguish the commonality of quantile slopes, but this method becomes infeasible for large number of quantiles or predictors by greater False Discovery Rate (FDR).

Fortunately, penalization methods are useful tools to deal with the nonsignificant differences. They are often applied for variable selection to choose and estimate the sparse regression models. Recently, the idea of penalization has been more popular and widely used in machine learning and artificial intelligence techniques.
The earliest the penalization was proposed by the \cite{Hoerl}, where ridge regression was a $L_{2}$-norm regularization method for nonorthogonal problems.
\cite{Tibshirani-1996} suggested $L_{1}$-norm regularization to linear regression to shrinkage the insignificant coefficients to zero, namely least absolute shrinkage and selection operator (LASSO), indicating the beginning of variable selection. For parametric models, \cite{Fan} proposed a class of variable selection procedures based on nonconcave and penalized likelihood method, named the smoothly clipped absolute deviation (SCAD) penalty. Then \cite{Fan-2004} extended the variable selection to partially linear models for longitudinal data.
\cite{Tibshirani-S} introduced the fused LASSO, where pairwise differences between variables were penalized by the $L_{1}$-norm. To obtain a convex objective function, \cite{Zou} and \cite{Friedman} proposed adaptive weights  $L_{1}$-penalty, namely adaptive LASSO. \cite{Rinaldo} modified it to fused adaptive LASSO with better properties.
For grouping structure model, \cite{Yuan-L} introduced group LASSO to identify significant groups of predictors. \cite{Zhang-X} studied the oracle properties of adaptive group LASSO in high-dimensional linear models.

With the development of variable selection methods, \cite{Zou-Y-a} estimated the common slopes by a composite quantile regression method and selected nonzero slopes by adaptive Lasso. \cite{Jiang-2013} used fused adaptive Lasso and fused adaptive sup-norm to smooth neighboring quantiles.
\cite{Ciuperca} extended adaptive group LASSO to quantile model with grouped predictors.

However, if the spatial lag parameter is nonzero, there exists endogeneity caused by the spatial lag variables. This causes the increasing difficulties of estimating the coefficients of SQAR model including the spatial lag parameter and the regression slopes.
\cite{Kim-M} suggested a double stage quantile regression (DSQR). They use quantile regression with random regressors after the endogeneity.
\cite{Su-Y} applied an instrumental variable quantile regression (IVQR) estimator to SQAR model. \cite{Kostov} suggested empirical likelihood quantile regression (ELQR) estimation, which is a non-parametric analogue of likelihood estimation for the linear model.
\cite{Xu-L} considered the instrumental variable (IV) combined MLE for estimation for SAR model with a nonlinear transform of dependent variable.

To avoid the endogeneity and identify the interquantile commonality, we combine instrumental variable and penalization approaches together, and suggest a novel estimation method for SQAR model. In the method, we adopt the fusion idea to shrink the differences of quantile slopes at two adjacent quantile levels toward zero, which could help to employ automatic estimation, and detect quantile regions with constant slope coefficients in spatial quantile autoregressive models at the same time. Two types of fusion penalties were applied in the multiple-quantile regression model: Fused Adaptive LASSO (FAL) and Fused Adaptive sup-norm (FAS).

The remainder of this article is organized as follows. In Section 2, we illustrate the proposed methods and give the asymptotic properties of the proposed penalization FAL and FAS estimators. In Section 3, we discuss the computation issues. A simulation study is conducted to assess the numerical performance of our proposed estimators in Section 4. We apply the proposed methods to analyze international economic growth data in Section 5. All technical details are added in the Appendix.

\section{Methodology}\label{Methodology}
\subsection{Spatial Autoregressive Model}\label{SAR}

SAR model has the following form,
\begin{equation}
	  {\bm{Y}} = \alpha + \lambda {\bm{W}} {\bm{Y}} +   {\bm{X}} {\bm{\beta}} + {\bm{\varepsilon}},
\label{SARModel0}
\end{equation}
where $ {\bm{Y}} = (Y_1,\cdots,Y_n)^T$ is an $n \times 1$ vector response value, and $ {\bm{X}} = ({\bm{X}}_1, \cdots, {\bm{X}}_n)^T$ 
is a matrix of $n$ observations on $p$ exogenous covariates, ${\bm{W}} = \{w_{ij}\}$ is a known $n \times n$ spatial weights matrix, ${\bm{\varepsilon}} = (\varepsilon_1, \cdots, {\varepsilon}_n)^{T}$ denotes an $n$-vector of i.i.d random disturbances with zero mean and finite variance, see \cite{Anselin}. In SAR model there are three type model parameters. The first one is the intercept term $\alpha \in \mathbb R$. The second is nonstochastic spatial lag parameter $\lambda \in \mathbb R$, representing the autocorrelation of response variable. The last one is the the regression coefficient vector ${\bm{\beta}} = (\beta_1, \cdots,\beta_p)^T \in \mathbb R^{p \times 1}$.

The SAR model in equation (\ref{SARModel0}) is also recast as
\[
Y_i = \alpha + \lambda \sum_{j = 1}^n w_{ij}Y_j + {\bm{\beta}}^{T} {\bm{X}}_i + \varepsilon_i,
\]
which intuitively implies that the response of the $i$-th subject is linearly depends on its neighbors and covariates. 
It is usually supposed that the noise of $\varepsilon_i$s are independent and identically  distributed with zero mean and finite variance.  
It means that the covariance  $\textnormal{Cov}({\bm{\varepsilon}}) =\sigma^2{\bm{I}}_n$, with ${\bm{I}}_n \in \mathbb R^{n \times n}$ is an identity matrix.
Then the observations of ${\bm{Y}}$ could be formulated as

\begin{equation}\label{SARModel1}
 {\bm{Y}} = ({\bm{I}}_n - \lambda{\bm{W}})^{-1}(\alpha + {\bm{X}}{\bm{\beta}} + {\bm{\varepsilon}}),
\end{equation}
where $({\bm{I}}_n - \lambda{\bm{W}})$ is needed to guaranteed the invertibility.
According to the work of \cite{Banerjee}, the matrix ${\bm{W}}$ has its largest singular value of 1 under certain normalization operations.
Therefore, $|\lambda| < 1$ is a sufficient condition to ensure the invertibility of $({\bm{I}}_n - \lambda{\bm{W}})$. Based on this, we add one constraint of $\lambda$, that is $|\lambda| < 1$.

Assuming we have weights matrix  ${\bm{W}} $,  let ${\bm{\theta}}= (\alpha, \lambda, {\bm{\beta}}^{T})^{T}$, the log likelihood function can be written as
\begin{equation}\label{Logfunction}
l({\bm{\theta}}, \sigma^2) = -\frac{n}{2}\log(2 \pi \sigma^2) + \log|{\bm{I}}_n - \lambda{\bm{W}}| - \frac{1}{2\sigma^2}||({\bm{I}}_n - \lambda {\bm{W}}){\bm{Y}} - \alpha - {\bm{X}}{\bm{\beta}}||^2
\end{equation}
To solve the maximum optimization problem, we first fix $\bm{\theta}$ and get the sandwiched estimator of $\sigma^2$, that is
\begin{align*}
\tilde{\sigma}^2
= \left[{\bm{Y}} - {\bm{A}}(\lambda)^{-1}(\alpha + {\bm{X}}{\bm{\beta}})\right]^{T}\left[{\bm{A}}(\lambda)^{-1}{\bm{A}}(\lambda)^{T}\right]^{-1}\left[{\bm{Y}} - {\bm{A}}(\lambda)^{-1}(\alpha + {\bm{X}}{\bm{\beta}})\right],
\end{align*}
where ${\bm{A}}(\lambda)= {\bm{I}}_n - \lambda {\bm{W}}$ is a $n \times n$ matrix related to $\lambda$.
Then apply the estimator $\tilde{\sigma}^2$ back to $l({\bm{\theta}}, \sigma^2)$ and obtain $l({\bm{\theta}}) = l({\bm{\theta}}, \tilde{\sigma}^2)$. Take $l({\bm{\theta}})$ as the objective function and the MLE estimator of ${\bm{\theta}}$ could be derived by $\hat{{\bm{\theta}}}_{\rm{MLE}} = \arg \max_{{\bm{\theta}}}l({\bm{\theta}})$. The parameters could be estimated as $\hat{{\bm{\theta}}}_{\rm{MLE}} = ({\hat{\alpha}_{\rm{MLE}}}, {\hat{\lambda}}_{\rm{MLE}}, {\hat{\bm{\beta}}_{\rm{MLE}}}^T )^T$ and the detailed asymptotic properties were provided in \cite{Anselin-B}.

Although the MLE method has many excellent theoretical properties, its computational cost is huge. When $n$ is large, the classical MLE approach becomes computationally expensive, mainly due to the high cost of computing $\log |{\bm{A}}(\lambda)|$ and the relative computational complexity is $O(n^3)$, see \cite{Trefethen-B}, \cite{Barry-P}, \cite{Smirnov-A}). In order to reduce the computational complexity, \cite{Ma-P} proposed a naive least squares method for SAR models, which was simple and effective. The least squares objective function is
\begin{equation}
Q({\bm{\theta}}) = ||{\bm{Y}} - \alpha - \lambda {\bm{W}}{\bm{Y}}  - {\bm{X}}{\bm{\beta}}||^2,
\label{LS-object}
\end{equation}

Equation (\ref{LS-object}) was designated a ordinary least squares problem, where ${\bm{Y}} $ is the response value and $({\bm{W}}{\bm{Y}}, {\bm{X}})$ are the covariates. The OLS estimator of ${\bm{\theta}}$ could be obtained by $\hat{{\bm{\theta}}}_{\rm{OLS}} = \arg \min_{{\bm{\theta}}}Q({\bm{\theta}})$, that is $\hat{{\bm{\theta}}}_{\rm{OLS}} = ({\hat{\alpha}}_{\rm{OLS}}, {\hat{\lambda}}_{\rm{OLS}}, {\hat{\bm{\beta}}_{\rm{OLS}}}^T )^T$. The asymptotic properties of $\hat{{\bm{\theta}}}_{\rm{OLS}}$ under certain conditions were showed in \cite{Huang-L} and \cite{Ma-P}.

However, there remains the problem of endogeneity caused by the spatial lag variables, leading covariates $({\bm{W}}{\bm{Y}}, {\bm{X}})$ of interest to be correlated with the error term ${\bm{\varepsilon}}$.
Intuitively, instrumental variable (IV) could be adopted, which was first proposed by \cite{Wright}. A valid instrument induces changes in the explanatory variable ${\bm{X}}$, but has no independent effect on the dependent variable ${\bm{Y}}$.
\cite{Xu-L} considered the instrumental variable combined MLE for estimation for SAR model with a nonlinear transform of dependent variable.

One computational method of calculating IV estimators is two-stage least squares. Inspired by it, we combine the naive least squares method and instrumental variable together and consider a two stage least squares estimation method, similar to \cite{Xie}.
Denote the ${\bm{U}} = {\bm{W}}{\bm{Y}}$,
we hope to find a replacement ${\bm{\hat{U}}}$ such that ${\text{Cov}}({\bm{\hat{U}}}, {\bm{\varepsilon}}) = 0$ and ${\text{Cov}}({\bm{\hat{U}}}, {\bm{X}}) \neq 0$.
In the first stage, regress each explanatory variable on all of the exogenous variables in the model.
Naturally, exogenous variables ${\bm{V}}$ could be selected as ${\bm{W}}({\bm{I}}_n - \lambda {\bm{W}})^{-1}{\bm{X}}$, according to the deformation of equation (\ref{SARModel0}),
\begin{equation}
 {\bm{Y}} = ({\bm{I}}_n - \lambda {\bm{W}})^{-1} {\alpha} + ({\bm{I}}_n - \lambda {\bm{W}})^{-1}{\bm{X}}{\bm{\beta}}+ ({\bm{I}}_n - \lambda {\bm{W}})^{-1} {\bm{\varepsilon}}.
 \label{SARModel2}
\end{equation}
But the spatial parameter $\lambda$ is unknown, so we choose the first term of ${\bm{W}}({\bm{I}}_n - \lambda{\bm{W}})^{-1}{\bm{X}}$ and let
${\bm{V}} = [{\bm{1}}_n, {\bm{X}}, \bm{WX}]$, where ${\bf{1}}_n$ is a $n \times 1$ vector with all $1$'s,. The first stage is regressing ${\bm{U}}$ on the exogenous variables ${\bm{V}}$, and assume the regression equation has the following representation
\begin{equation}
{\bm{U}} = {\bm{V}} {\bm{\Pi}} + {\bm{v}},
\label{SAR-IV}
\end{equation}
where ${\bm{v}}$ is the unknown error term different from ${\bm{\varepsilon}}$. The estimator of ${\bm{\Pi}}$ could be derived as $\hat{{\bm{\Pi}}} = ({\bm{V}}^T{\bm{V}} )^{-1}{\bm{V}}^T {\bm{U}}$ and the the predicted value is $\hat{{\bm{U}}} = {\bm{V}} \hat{{\bm{\Pi}}}$.

Then replace ${\bm{W}}{\bm{Y}}$ by the predictor ${\bm{\hat{U}}}$. Thus, the correlation between the spatially lagged endogenous variable and the error term is eliminated. The second stage is taking a least squares regression for
\begin{equation}
{\bm{Y}} = \alpha + \lambda {\bm{\hat{U}}} +   {\bm{X}} {\bm{\beta}} + {\bm{\varepsilon}}.
\label{SARModel3}
\end{equation}
Combining equation (\ref{SARModel2}) and (\ref{SARModel3}), the estimators of $\bm{\theta}$ could be obtained. The same strategy for choosing instrumental variables can be found in \cite{McMillen}.

\subsection{Spatial Quantile Autoregressive Model} \label{SQARM}
Denote $U_i = \sum_{j=1}^{n}w_{ij}Y_j$ as the $i$th element of ${\bm{U}} $, which is a $p \times 1$ vector of strictly exogenous regressors, and $\varepsilon_i$ are the iid error terms that are independent of ${\bm{X}}_i$. We consider the following location-scale model,
\[
Y_i = \alpha + \lambda U_i + {\bm{\beta}}^{T} {\bm{X}}_i  + \varepsilon_i.
\]

Suppose we are interested in regression at $K$ quantile levels $0<\tau_{1}<\cdots<\tau_{K}<1$, where $K$ is a finite integer. Denote $Q_{\tau_k}(Y_i|\mathcal{F}_{-i}, {\bm{X}}_i)$ as the $\tau_{k}$th conditional quantile function of given $\mathcal{F}_{-i}$ and ${\bm{X}}_i$, $k = 1, 2, \cdots,K$. $\mathcal{F}_{-i}$ is the $\sigma$-field of $\{Y_j: j \neq i\}$. $\tau_k$th quantile of $\epsilon_i$ is zero, that is $Q_{\tau_k}(\epsilon_i|\mathcal{F}_{-i}, {\bm{X}}_i) = 0 $ for $i = 1, 2, \cdots, n$. Then the spatial quantile autoregressive (SQAR) model is represented as following,

\begin{equation}\label{SQARmodel1}
Q_{\tau_k}(Y_i|\mathcal{F}_{-i}, {\bm{X}}_i)  = \alpha_k + \lambda_{k} {U_i}  +  {\bm{\beta}}^{T}_k {\bm{X}}_i,
\end{equation}
where  $k$ is corresponding to the $\tau_k$th quantile of $\varepsilon_i$, $\alpha_k$ is the intercept term that is $\tau_k$-independent, $\lambda_k$ is the scalar spatial lag parameter that is $\tau_k$-dependent, and ${\bm{\beta}}_k$ is a $p$-vector regression parameters that is also $\tau_k$-dependent.
In matrix form, (\ref{SQARmodel1}) equals
\begin{equation}\label{SQARmodel2}
{\bm{Y}}  = \alpha_k + \lambda_{k} {\bm{W}}{\bm{Y}}  +  {\bm{X}} {\bm{\beta}}_k + {\bm{\varepsilon}}
\end{equation}

where we are interested in the parameter ${\bm{\theta}}_{(k)} = \left(\alpha_k, \lambda_k, {\bm{\beta}}_{k}^{T} \right)^{T}$.

For SQAR models, \cite{Su-Y} applied an instrumental variable quantile regression(IVQR) estimator. But it needs grid searching and costs a lot of computation time. So, we adopt a simpler approach, which is the straight forward extension of two stage least squares estimation method, see \cite{Kim-M}, \cite{Zietz}, \cite{Liao-W}. This is similar to the procedure in Section \ref{SAR}.

At $\tau_k$ quantile, we assume that ${\bm U}={\bm{W}}{\bm{Y}}$ has the following representation:
\begin{equation}
{\bm{U}} = {\bm V} {\bm\Pi}_k + {\bm v},
\end{equation}
where ${\bm V} = [{\bm {1}}_n, {\bm X}, {\bm W}{\bm X}]$ is a $n \times (2p + 1)$ matrix,  ${\bm\Pi}_k$ is a $\tau_k$-dependent $(2p+1) \times 1 $ matrix of unknown parameters and ${\bm v}$ is a $n \times 1$ matrix of error terms. In the first stage, we estimate quantile regressions for ${\bm{U}}$ using the instruments variables ${\bm{V}}$ as explanatory variables by minimizing objective function as following,
\begin{equation}\label{Objfunction-00}
    R_{IV}^0(\tau_k, {\bm\Pi}_k)  = \sum_{i = 1}^{n} \rho_{\tau_k} (U_i -  {\bm V}_i {\bm\Pi}_k),
\end{equation}
where $\rho_{\tau}(r)= r \psi_{\tau}(r) = \tau r I(r>0)+(\tau-1) r I(r \leq 0)$ is the quantile check function and $I(\cdot)$ is the indicator function, where $\psi_{\tau}(r) = \tau - 1_{[r \leq 0]}$ and $1_{[\cdot]}$ is the Kronecker index, see Koenker and Bassett (1978).
Denote the predictor of ${\bm{U}}$  as $\hat{\bm{U}}_k$ at $\tau_k$ quantile.

The reduced-form representation of $\bm Y$ is
\begin{equation}
{\bm{Y}} = {\bm V} {\bm\pi}_k + {\bm u},
{\label{Y_represent}}
\end{equation}
where ${\bm \pi}_k = \begin{bmatrix}
\begin{pmatrix}
&{ 1} \\
&{\bm 0}_{p}\\
&{\bm 0}_{p}\\
\end{pmatrix},
{{\bm \Pi}_{k}},
\begin{pmatrix}
& {0} \\
&{\bm {1}}_p \\
&{\bm 0}_{p}
\end{pmatrix}
\end{bmatrix}{\bm {\theta}}_{(k)} = H({\bm \Pi}_k) {\bm {\theta}}_{(k)}$, ${\bf{0}}_p$ is a $p \times 1$ vector with all $0$'s, ${\bf{1}}_p$ is a $p \times 1$ vector with all $1$'s and ${\bm u} = \lambda_k{\bm v} + {\bm \varepsilon}$.
Another representation of (\ref{Y_represent}) is
\begin{equation}
 {\bm{Y}} = \alpha_k + \lambda_k {\bm V} {\bm\Pi}_k + {\bm{X}} {\bm{\beta}}_k + {\bm u}.
\end{equation}
In the second stage, we replace ${\bm V} {\bm\Pi}_k$ by estimator, and apply quantile regressions of ${\bm{Y}}$ on ${\bm{X}}$ and $\hat{\bm{U}}_k$.
Thus, the correlation between the spatially lagged endogenous variables and the error terms are eliminated.

The objective function could be defined as,
\begin{equation}\label{Objfunction-0}
    R_{IV}^1(\tau_k, \alpha_k, \lambda_k, {\bm{\beta}}_k)  = \sum_{i = 1}^{n} \rho_{\tau_k} (Y_i - \alpha_k - \lambda_k {\hat{U}}_{ki} - {\bm{\beta}}_k^T{\bm{X}}_{i})
\end{equation}
where ${\hat{U}}_{ki}$ is the $i$th element of $\hat{\bm{U}}_k$.

Minimizing the quantile loss function at each quantile level separately is equivalent to minimizing the following combined loss function,

\begin{equation}\label{Objfunction-sum}
    \sum_{k=1}^K R_{IV}^1 (\tau_k, {\bm{\theta}}_{(k)}) = \sum_{k=1}^K \sum_{i = 1}^{n} \rho_{\tau_k} (Y_i - \alpha_k - \lambda_k {\hat{U}}_{ki} - {\bm{\beta}}_{k}^{T}{\bm{X}}_i).
\end{equation}

Denote $\mathbb{X}_i = (1, {\bm X}_i)^T$, $F_{i}$ as the conditional cumulative distribution function of $Y$ given $\mathcal{F}_{-i}$ and ${\bm{X}}_i$.
${\bm V}_i$ is the $i$th row vector of ${\bm V}$. To establish the asymptotic properties of the estimators $\hat{\bm \alpha}_{k}$ obtained by minimizing equation (\ref{Objfunction-sum}), we assume the following regularity conditions:

\begin{itemize}
    \item [(A1)] For $k=1, \cdots, K, i=1, \cdots, n,$ the conditional density function of $Y$ given $\mathcal{F}_{-i}$ and ${\bm{X}}_i$,  denoted as $f_{i},$ is continuous and has a bounded first derivative, and $f_{i}\left\{Q_{\tau_k}(Y_i|\mathcal{F}_{-i}, {\bm{X}}_i)\right\}$ is uniformly bounded away from zero and infinity.
    \item [(A2)] The row and column sums of the matrices ${\bm{W}}$ and ${\bm{I}}_n - \lambda {\bm{W}}$ are bounded uniformly in absolute value.
    \item [(A3)] For matrix ${\bm{G}} = {\bm{W}}({\bm{I}}_n - \lambda {\bm{W}})^{-1}$, there exists a constant $c$ such that $c{\bm{I}}_n - {\bm{G}}{\bm{G}}^{T}$ is positive semidefinite for all $n$.
    \item [(A4)] For all $\tau_k, k=1, 2, \cdots, K$, ${\bm{\theta}}_{(k)}$ is in the interior of the set $\mathcal{R} \times \mathcal{D}$, and $\mathcal{R} \times \mathcal{D}$ is compact and convex.
    \item[(A5)] The sequence $\{(u_i, v_i , \mathbb{X}_i)\}$ is independent and identically distributed, where $u_i$ and $v_i$ are the $i$th elements in $\bm u$ and $\bm v$ respectively.
    \item[(A6)] The second geometric moment of ${\bm X}$ and the third geometric moment of ${\mathbb X}_i$ are finite, that is $E(\|{\bm X}\|^2) < \infty$ and $E(\|\mathbb{X}_i\|^3) < \infty$.
    \item[(A7)] $H({\bm \Pi}_k)$ is full column rank, for $k = 1, 2, \cdots, K$.
    \item[(A8)] The condition densities $g_1(\cdot|x)$ and $g_2(\cdot|x)$, respectively for $u_i$ and $v_i$, are Lipschitz continuous for all $x$. Moreover, $B_1 = E\{g_1(0|{\mathbb{X}}_i){\mathbb{X}}_i {\mathbb{X}}_i^{T}\}$ and $B_2 = E\{g_2(0|{\mathbb{X}}_i){\mathbb{X}}_i \mathbb{X}_i^{T}\}$ are finite and positive definite.
    \item[(A9)] $E(\psi_{\tau}(u_i)|{\mathbb{X}}_i) = 0$ and $E(\psi_{\tau}(v_i)|{\mathbb{X}}_i) = 0$.

\end{itemize}
Assumption (A1) is the basic assumption for quantile regression.  Assumptions (A2)-(A4) are required in the setting of SAR model, see \cite{Lee}; \cite{Zhang-S}. Assumptions (A5)-(A9) are needed for two stage IV estimators, see \cite{Kim-M}.

\begin{theo}
 Assuming conditions $(A1)$-$(A9)$ hold, we have
\[n^{1/2}(\hat{\bm{\theta}}_{(k)} - {\bm{\theta}}_{(k)}) \stackrel{d}\longrightarrow N({\bm 0},  {\bm \Sigma_{k}}), ~~~\mbox{as}~~~ n \rightarrow \infty,\]
where ${\bm \Sigma_{k}} = D_{k} \Omega D_{k}^{T}$, $D_{k} = (H({\bm \Pi}_k)^{T} A_1 H({\bm \Pi}_k))^{-1} H({\bm \Pi}_k)^{T}[{\bm I} , A_1 A_2^{-1}\lambda_k]$, $\Omega = E({\bm{\Sigma}}_u \otimes {\bm V}_i {\bm V}_i^{T})$, $A_1 = E\{g_1(0|{\bm{V}}_i){\bm{V}}_i {\bm{V}}_i^{T}\}$, $A_2 = E\{g_2(0|{\bm{V}}_i){\bm{V}}_i {\bm{V}}_i^{T}\}$, ${\bm{\Sigma}}_u$ is the matrix of general term $\psi_{\tau}(u_i)\psi_{\tau}(v_i)$.
\end{theo}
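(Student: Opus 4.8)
The plan is to derive a joint Bahadur (asymptotically linear) representation of the two-stage estimator $\hat{\bm{\theta}}_{(k)}$ and then push it through a central limit theorem. The organizing identity is that, evaluated at the true first-stage parameter $\bm{\Pi}_k$, the second-stage regressor $\bm{g}_i(\bm{\Pi}_k):=(1,\bm{V}_i\bm{\Pi}_k,\bm{X}_i^T)^T$ equals $H(\bm{\Pi}_k)^T\bm{V}_i^T$, so that $\bm{g}_i(\bm{\Pi}_k)^T\bm{\theta}_{(k)}=\bm{V}_i\bm{\pi}_k$ and, by (A9), the $\tau_k$th conditional quantile of the reduced-form residual $u_i=\lambda_k v_i+\varepsilon_i$ is zero. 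Abbreviating $H:=H(\bm{\Pi}_k)$, this identity is what rewrites every second-stage moment matrix in terms of the instrument $\bm{V}_i$ and produces the matrices $A_1,A_2,\Omega$ appearing in the statement.

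First I would record the first-stage expansion. Since $\hat{\bm{\Pi}}_k$ minimizes the convex objective (\ref{Objfunction-00}), standard quantile M-estimation theory under (A1), (A5), (A6) and (A8) yields
\[
n^{1/2}(\hat{\bm{\Pi}}_k-\bm{\Pi}_k)=A_2^{-1}\,n^{-1/2}\sum_{i=1}^n\psi_{\tau_k}(v_i)\bm{V}_i^T+o_p(1),
\]
with $A_2=E\{g_2(0|\bm{V}_i)\bm{V}_i\bm{V}_i^T\}$ invertible by (A8).

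Next I would expand the second-stage first-order condition. Writing $S_n(\bm{\theta},\bm{\Pi})=n^{-1}\sum_i\psi_{\tau_k}(Y_i-\bm{g}_i(\bm{\Pi})^T\bm{\theta})\,\bm{g}_i(\bm{\Pi})$, the convexity of (\ref{Objfunction-0}) together with a convexity/monotonicity lemma for quantile objectives reduces the task to a pointwise linearization of $S_n$ at $(\hat{\bm{\theta}}_{(k)},\hat{\bm{\Pi}}_k)$ around $(\bm{\theta}_{(k)},\bm{\Pi}_k)$. Because the check function is nonsmooth, the key technical step is a stochastic-equicontinuity argument showing that, on an $O(n^{-1/2})$ neighborhood, the centered process $S_n-E[S_n]$ is asymptotically negligible after recentering, so that $S_n$ is governed by its smooth population mean $E[S_n]$. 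Differentiating $E[S_n]$ in the two directions then gives the Hessian block $\partial_{\bm{\theta}}=-E\{g_1(0|\bm{V}_i)\bm{g}_i\bm{g}_i^T\}=-H^TA_1H$, which is invertible because $H$ is full column rank by (A7), and, through the dependence of the residual on $\bm{\Pi}$ via the term $\lambda_k\bm{V}_i\bm{\Pi}$, a cross block proportional to $\lambda_k H^TA_1$. Solving for $\hat{\bm{\theta}}_{(k)}$ and substituting the first-stage expansion gives
\[
n^{1/2}(\hat{\bm{\theta}}_{(k)}-\bm{\theta}_{(k)})=D_k\,n^{-1/2}\sum_{i=1}^n\bm{\xi}_i+o_p(1),
\]
where $\bm{\xi}_i=(\psi_{\tau_k}(u_i)\bm{V}_i^T,\ \psi_{\tau_k}(v_i)\bm{V}_i^T)^T$ and $D_k=(H^TA_1H)^{-1}H^T[\bm{I},\,A_1A_2^{-1}\lambda_k]$ is exactly the matrix in the statement (the precise form of the cross block, including its sign, follows from carefully tracking the differentiation above).

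Finally, the summands $\bm{\xi}_i$ are i.i.d. by (A5) with mean zero by (A9) and finite second moments by (A6) and the boundedness of $\psi_{\tau_k}$; since $\bm{\xi}_i\bm{\xi}_i^T=\bm{\Sigma}_u\otimes\bm{V}_i\bm{V}_i^T$, the Lindeberg--L\'evy CLT gives $n^{-1/2}\sum_i\bm{\xi}_i\stackrel{d}{\longrightarrow}N(\bm{0},\Omega)$ with $\Omega=E(\bm{\Sigma}_u\otimes\bm{V}_i\bm{V}_i^T)$. Transferring this through the fixed matrix $D_k$ by Slutsky's theorem yields $n^{1/2}(\hat{\bm{\theta}}_{(k)}-\bm{\theta}_{(k)})\stackrel{d}{\longrightarrow}N(\bm{0},D_k\Omega D_k^T)$. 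I expect the main obstacle to be the third step: rigorously justifying the generated-regressor linearization in the presence of the nonsmooth check function, that is, establishing the uniform equicontinuity that lets the first-stage error $\hat{\bm{\Pi}}_k-\bm{\Pi}_k$ propagate into the second-stage score with the coefficient $\lambda_kH^TA_1$; once this is in place the remaining steps are routine.
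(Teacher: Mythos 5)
Your proposal is correct in outline, but it follows a genuinely different route from the paper. The paper's own proof of Theorem 2.1 contains no linearization at all: it is a verification-plus-citation argument. It checks that conditions (A1)--(A9) imply the assumptions of \cite{Kim-M} --- lifting the i.i.d.\ condition (A5) from $\{(u_i,v_i,\mathbb{X}_i)\}$ to $\{(u_i,v_i,{\bm V}_i)\}$, deducing $E(\|{\bm V}_i\|^3)<\infty$ from (A6), and showing $A_1$ and $A_2$ are finite and positive definite by writing $A_1=B_1+B_1'$ under (A8) --- and then invokes Proposition 2 of \cite{Kim-M} to conclude. What you propose is essentially to reprove that proposition from scratch: a first-stage Bahadur representation, a generated-regressor linearization of the nonsmooth second-stage score via stochastic equicontinuity, and a CLT for the stacked influence function ${\bm\xi}_i$. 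Your bookkeeping is right: the identity ${\bm g}_i=H^T{\bm V}_i^T$, the Hessian block $-H^TA_1H$ (invertible by (A7)), the cross block proportional to $\lambda_k H^T A_1 A_2^{-1}$, and the resulting $D_k$ and $\Omega$ reproduce the theorem's covariance exactly. The trade-off is clear: your route is self-contained and makes transparent where $A_1$, $A_2$, $\Omega$ and $D_k$ come from, whereas the paper's route is short and delegates the hard analysis; but note that the step you yourself flag as the main obstacle --- justifying the uniform linearization in the presence of the check function and the estimated first stage --- is precisely the technical content of Kim and Muller's proof, which the paper deliberately avoids redoing. Moreover, the side conditions you assert in passing (invertibility of $A_2$ from (A8), moments of ${\bm V}_i$ from (A6)) are not immediate, since (A6) and (A8) are stated in terms of $\mathbb{X}_i$ rather than ${\bm V}_i$; the short lifting arguments supplying them are exactly what the paper's appendix writes out, so a complete version of your proof would need to include them as well.
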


However, in some applications, the quantile slope may be constant in certain quantile regions for some predictors. If we still take estimations at each quantile level, the information of common features will be ignored, leading to the reduction of the efficiency. The best strategy is to borrow information from neighboring quantiles, see \cite{Zou-Y-b}; \cite{Jiang-2014}.

In the following sections, we denote $\beta_{k, 0} = \lambda_k$, and $\beta_{k, l}$ as the slope corresponding to the $l$th predictor at the quantile level $\tau_{k}$ where $l = 1, 2, \cdots, p$ and $k = 1, 2, \cdots, K$.
Denote $d_{k, l}=\beta_{k, l}-\beta_{k-1, l}$ as the slope difference at two neighboring quantiles $\tau_{k-1}$ and $\tau_{k},$ with $k=2, \cdots, K$ and $d_{1, l}=\beta_{1, l}$ for $l=0, 1, 2,  \cdots, p$. The parameter vector $\bm{\theta}=\left({\bm{\alpha}}^T,  \boldsymbol{d}_{1}^{T}, \cdots, \boldsymbol{d}_{K}^{T}\right)^{T} \in \mathbb{R}^{(p + 1)K}$ denote the collection of unknown parameters, where $\bm{\alpha} = \left(\alpha_{1}, \cdots, \alpha_{K}\right)^{T}$, and $\boldsymbol{d}_{k}=\left(d_{k, 0}, d_{k, 1}, \cdots, d_{k, p}\right)^{T}$.
Therefore, the $\tau_{k}$ th quantile
coefficient vector can be written as
\[
{\bm{\alpha}}_{k} = \left(\alpha_k, \lambda_k, {\bm{\beta}}_{k}^{T} \right)^{T}=T_{k} \bm{\theta},
\]
where ${\bm{T}}_{k}=\left(\bm{D}_{k, 0}, \bm{D}_{k, 1}, \bm{D}_{k, 2} \right)  \in \mathbb{R}^{(p+2) \times(p+2) K}$. $\bm{D}_{k, 0}$ is a $(p+2) \times K$ matrix with 1 in the
first row and the $k$ th column, but zero elsewhere, that is
\[
\boldsymbol{D}_{k, 0} =
    \begin{pmatrix}
        0 & \cdots & 1 & \cdots & 0 \\
        0 & \cdots & 0 & \cdots & 0 \\
        \vdots &   & \vdots &   & \vdots \\
         0 & \cdots & 0 & \cdots & 0 \\
    \end{pmatrix}_{(p+2) \times K}
\]
$\boldsymbol{D}_{k, 1} = {\bf{1}}_k^T \otimes ({\bf{0}}_{p+1}, {\bm{I}}_{p+1})^T$ is a $( p+2) \times k(p+1)$ matrix, where ${\bf{1}}_k$ is a $k \times 1$ vector with all $1$'s, ${\bf{0}}_{p+1} \in \mathbb{R}^{p+1}$ is a $(p+1) \times 1$ zero vector,  $\bm{I}_{p+1}$ is the $(p+1) \times (p+1)$ identity matrix of dimension $p+1$.
That is

\[
\boldsymbol{D}_{k, 1} =
    \addtocounter{MaxMatrixCols}{20}
    \begin{pmatrix}
        0 &  0 & \cdots & 0 & \cdots \cdots  & 0 &  0 &  \cdots& 0  \\
        1 &  0 & \cdots & 0 & \cdots \cdots  & 1 &  0 &   \cdots& 0   \\
        0 &  1 & \cdots & 0 & \cdots \cdots  & 0 &  1 &   \cdots & 0   \\
        \vdots & \vdots &   & \vdots &   & \vdots &   \vdots  & &  \vdots  \\
        0 & 0 & \cdots  & 1 & \cdots \cdots  & 0 &  0 &   \cdots& 1    \\
    \end{pmatrix}_{( p+2) \times k(p+1)}
\]
$\boldsymbol{D}_{k, 2}$ is a $( p+2) \times (K-k)(p+1)$ zero matrix.

Define ${\mathbf{Z}}_{ k i}^{T}=\left({{1}}, \hat{U}_{k i},  {\bm{X}}_{i}^{T}\right)
\boldsymbol{T}_{k} \in \mathbb{R}^{1 \times(p+2) {K}}$.
With these reparameterizations, the combined quantile objective function (\ref{Objfunction-sum}) can be rewritten as

\[
Q_0({\bm{\theta}}) = \sum_{k=1}^{K} \sum_{i=1}^{n} \rho_{\tau_{k}}\left(Y_{i}-\mathbf{Z}_{k i}^{T} \boldsymbol{\theta}\right) .
\]

\subsection{Penalized Joint Spatial Quantile Estimators}
In order to detect the insignificant and the constant quantile slope coefficients, we propose to shrink the interquantile slope differences $\{\beta_{k,l} - \beta_{k-1,l}: k = 2,\cdots,K,l =0, 1, 2, \cdots,p\}$ towards zero simultaneously, resulting in a simpler model structure and inducing the smoothness across quantiles.

\subsubsection{Penalised Fused Adaptive Lasso Estimator }
In this section, we first present the penalized spatial quantile estimator 
to shrink interquantile slope differences towards zero. The estimator could be obtained by minimizing the following objective function
\begin{equation}\label{Objfunc_p1}
    Q_1(\bm{\theta}) = \sum_{k=1}^{K} \sum_{i=1}^{n} \rho_{\tau_{k}}\left(Y_{i}-\mathbf{Z}_{k i}^{T} {\bm{\theta}}\right) + \tilde{\gamma}_{1n} \sum_{k=2}^{K} \sum_{l=1}^{p}\tilde{\omega}_{k, l} |d_{k,l}|,
\end{equation}
where $\tilde{\gamma}_{1n} \geq 0$ is a tuning parameter controlling the degree of penalization. The adaptive weight for $d_{k, l} = \beta_{k,l} - \beta_{k-1,l}$ is $\tilde{\omega}_{k, l}$ and we set
$\tilde{\omega}_{k, l} = \left|\tilde{d}_{k, l}\right|^{-1} = |\tilde{\beta}_{k,l} - \tilde{\beta}_{k-1,l}|^{-1}$, $k = 2, 3, \cdots, K$, $l = 0, 1, 2, \cdots, p$. Notice that $\tilde{\lambda}_k$ and $\tilde{\bm{\beta}}_k$ are the estimators obtained by minimizing $R_{IV}^1(\tau_k, \alpha_k, \lambda_k, {\bm{\beta}}_k)$ in equation (\ref{Objfunction-0}).
Write ${\bm{\beta}}_{(l)} = (\beta_{1,l}, \beta_{2, l},\cdots, \beta_{K, l})^T$.
If $d_{k, l}$ is shrunk to $0$, this implies the parameter ${\bm{\beta}}_{(l)}$ remain the same for the $(k-1)$th and the $k$th quantile levels. By employing this penalty, we can identify the quantile regions where each ${\bm{\beta}}_{(l)}$ varies or remains unvarying.

To establish asymptotic properites of penalised Lasso estimator, we give two more assumptions.

\begin{itemize}

    \item [(A10)] $\max _{1 \leq i \leq n}\left\|{\bm{X}}_{i}\right\|=o\left(n^{1 / 2}\right)$.
    \item [(A11)] For $1 \leq k \leq K$, there exist some positive definite matrices ${\bm\Gamma}_{k}$ and ${\bm\Omega}_{k}$ such that $\lim _{n \rightarrow \infty} n^{-1} \sum_{i=1}^{n} \mathbf{Z}_{i k} \mathbf{Z}_{i k}^{T}={\bm\Gamma}_{k}$ and $\lim _{n \rightarrow \infty} n^{-1} \sum_{i=1}^{n} f_{i}\{Q_{\tau_{k}}({Y_i}|\mathcal{F}_{-i}, {\bm{X}}_{i})
        \} \mathbf{Z}_{i k} \mathbf{Z}_{i k}^{T} = {\bm\Omega}_{k}$.
\end{itemize}

Before theoretical properties, we first denote ${\bm{\theta}}_{0}=(\theta_{j, 0}, j=1, \cdots, (p+2) K)$ as the true value of $\bm \theta$. Let the index sets $\mathcal{A}_{1}=\{1, \cdots, K\}$,
$\mathcal{A}_{2}= \{j:\theta_{j, 0} \neq 0, j=K+1, \cdots, (p+2) K\},$ and $\mathcal{A}=\mathcal{A}_{1} \cup \mathcal{A}_{2} .$ We write ${\bm \theta}_{\mathcal{A}}=\left(\theta_{j}: j \in \mathcal{A}\right)^{T},$ and its truth as ${\bm \theta}_{\mathcal{A}, 0}=\left(\theta_{j, 0}: j \in \mathcal{A}\right)^{T}$.

Without loss of generality, we assume that the quantile slopes ${\bm{\beta}}_{(l)}$ vary for the first $s_l$ ($s_l<K$ )quantiles, but remain constant for the remaining $(K-s_l)$ quantile levels.
Suppose the model structure is known, the oracle estimator $\hat{\bm \theta}_{\mathcal{A}} \in {\mathbb{R}}^{K+\sum_{l=0}^ps_l}$ can be obtained by
\[
\hat{\boldsymbol{\theta}}_{\mathcal{A}}=\arg \min _{\boldsymbol{\theta}_{\mathcal{A}}} \sum_{k=1}^{K} \sum_{i=1}^{n} \rho_{\tau_{k}}\left(Y_{i}-{\bm Z}_{i k, \mathcal{A}}^{T} \boldsymbol{\theta}_{\mathcal{A}}\right),
\]
where ${\bm Z}_{i k, A} \in \mathbb{R}^{{K+\sum_{l=0}^ps_l}}$ contains the first ${K+\sum_{l=0}^ps_l}$ elements of ${\bm Z}_{i k}$. The properties of the oracle and fused adaptive LASSO estimators are as following.

\begin{prop}
 Under conditions $(\mathrm{A} 1)-(\mathrm{A} 3),$ we have
\[
n^{1 / 2}\left(\hat{\boldsymbol{\theta}}_{\mathcal{A}}-\boldsymbol{\theta}_{\mathcal{A}, 0}\right) \stackrel{d}{\longrightarrow} N\left({\mathbf{0}}, {\bm{\Sigma}}_{\mathcal{A}}\right), \text { as } n \rightarrow \infty
\]
where ${\bm{\Sigma}}_{\mathcal{A}}=\left(\sum_{k=1}^{K} {\bm{\Omega}}_{k, \mathcal{A}}\right)^{-1}\left\{\sum_{k=1}^{K} \tau_{k}\left(1-\tau_{k}\right) {\bm{\Gamma}}_{k, \mathcal{A}}\right\}\left(\sum_{k=1}^{K} {\bm{\Omega}}_{k, \mathcal{A}}\right)^{-1}, {\bm{\Omega}}_{k, \mathcal{A}}$ and ${\bm{\Gamma}}_{k, \mathcal{A}}$ are the top-left $({K+\sum_{l=0}^ps_l}) \times({K+\sum_{l=0}^ps_l})$ submatrices of ${\bm{\Omega}}_{k}$ and ${\bm{\Gamma}}_{k},$ respectively.
\end{prop}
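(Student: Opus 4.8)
The statement concerns the \emph{oracle} estimator, so no penalty term is involved and the problem reduces to the joint asymptotics of a multi-quantile $M$-estimator on the known reduced support $\mathcal{A}$. The plan is to use the convex-objective machinery (Knight's identity together with the convexity/argmin lemma of Pollard). First I would recenter and rescale: writing $\boldsymbol{\delta} = n^{1/2}(\boldsymbol{\theta}_{\mathcal{A}} - \boldsymbol{\theta}_{\mathcal{A},0})$ and $\epsilon_{ik} = Y_i - \mathbf{Z}_{ik,\mathcal{A}}^T\boldsymbol{\theta}_{\mathcal{A},0}$ for the residual at the truth, whose $\tau_k$-conditional quantile is $0$, I define the convex random function
\[
Z_n(\boldsymbol{\delta}) = \sum_{k=1}^K\sum_{i=1}^n\Bigl[\rho_{\tau_k}\bigl(\epsilon_{ik} - n^{-1/2}\mathbf{Z}_{ik,\mathcal{A}}^T\boldsymbol{\delta}\bigr) - \rho_{\tau_k}(\epsilon_{ik})\Bigr],
\]
whose unique minimizer is exactly $\hat{\boldsymbol{\delta}} = n^{1/2}(\hat{\boldsymbol{\theta}}_{\mathcal{A}} - \boldsymbol{\theta}_{\mathcal{A},0})$. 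It then suffices to show that $Z_n$ converges in distribution, as a process in $\boldsymbol{\delta}$, to a quadratic with a unique minimizer; the convexity lemma transfers the convergence to the argmin.

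Second, I would apply Knight's identity $\rho_\tau(u-v)-\rho_\tau(u) = -v\,\psi_\tau(u) + \int_0^v[I(u\le s)-I(u\le 0)]\,ds$ termwise, splitting $Z_n(\boldsymbol{\delta}) = -\boldsymbol{\delta}^T\mathbf{W}_n + R_n(\boldsymbol{\delta})$, where the linear part carries the score
\[
\mathbf{W}_n = n^{-1/2}\sum_{k=1}^K\sum_{i=1}^n \mathbf{Z}_{ik,\mathcal{A}}\,\psi_{\tau_k}(\epsilon_{ik}).
\]
Using the conditional moment identity $E[\psi_{\tau_k}(\epsilon_{ik})\mid\mathcal{F}_{-i},\mathbf{X}_i]=\tau_k - F_i(Q_{\tau_k})=0$ from (A1), the summands are conditionally mean zero, so a Lindeberg central limit theorem, whose negligibility condition is supplied by the design control (A10) together with the moment assumptions, yields $\mathbf{W}_n \stackrel{d}{\longrightarrow} N(\mathbf{0}, \mathbf{V})$. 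Computing $\mathbf{V}$ is where care is needed: for each $i$ the cross-quantile scores satisfy $E[\psi_{\tau_k}(\epsilon_{ik})\psi_{\tau_{k'}}(\epsilon_{ik'})] = \min(\tau_k,\tau_{k'})-\tau_k\tau_{k'}$, so in general $\mathbf{V} = \lim n^{-1}\sum_i\sum_{k,k'}(\min(\tau_k,\tau_{k'})-\tau_k\tau_{k'})\,\mathbf{Z}_{ik,\mathcal{A}}\mathbf{Z}_{ik',\mathcal{A}}^T$; retaining only the diagonal $k=k'$ blocks produces the meat $\sum_k\tau_k(1-\tau_k)\boldsymbol{\Gamma}_{k,\mathcal{A}}$ appearing in $\boldsymbol{\Sigma}_{\mathcal{A}}$, and I would verify that this is indeed the intended form, tracking carefully the overlapping block structure of $\mathbf{Z}_{ik,\mathcal{A}}$ induced by $\boldsymbol{T}_k$.

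Third, for the remainder $R_n$ I would take conditional expectations of the integral terms and Taylor-expand $F_i$ about $Q_{\tau_k}(Y_i\mid\cdot)$, using the bounded first derivative in (A1); the leading term is $\tfrac12\boldsymbol{\delta}^T\bigl(n^{-1}\sum_i\sum_k f_i\{Q_{\tau_k}\}\mathbf{Z}_{ik,\mathcal{A}}\mathbf{Z}_{ik,\mathcal{A}}^T\bigr)\boldsymbol{\delta}$, which by (A11) converges to $\tfrac12\boldsymbol{\delta}^T\bigl(\sum_k\boldsymbol{\Omega}_{k,\mathcal{A}}\bigr)\boldsymbol{\delta}$, while the higher-order term is $O(n^{-1/2})$. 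I would then show $R_n(\boldsymbol{\delta})$ concentrates at this deterministic limit by bounding its conditional variance by $o(1)$, pointwise in $\boldsymbol{\delta}$. Consequently $Z_n(\boldsymbol{\delta})\stackrel{d}{\longrightarrow} -\boldsymbol{\delta}^T\mathbf{W} + \tfrac12\boldsymbol{\delta}^T\mathbf{S}\boldsymbol{\delta}$ with $\mathbf{W}\sim N(\mathbf{0},\mathbf{V})$ and $\mathbf{S}=\sum_k\boldsymbol{\Omega}_{k,\mathcal{A}}$ positive definite; the minimizer is $\mathbf{S}^{-1}\mathbf{W}$, and the argmin lemma gives $\hat{\boldsymbol{\delta}}\stackrel{d}{\longrightarrow} \mathbf{S}^{-1}\mathbf{W}\sim N(\mathbf{0},\mathbf{S}^{-1}\mathbf{V}\mathbf{S}^{-1})=N(\mathbf{0},\boldsymbol{\Sigma}_{\mathcal{A}})$.

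The main obstacle, beyond the routine non-smoothness dispatched by Knight's identity, is twofold. The genuinely delicate analytic step is showing that the remainder $R_n$ concentrates at its quadratic mean uniformly enough on compact $\boldsymbol{\delta}$-sets for the convexity lemma to apply; this is where the density regularity (A1) and the design and moment controls (A10)--(A11) are all consumed. The conceptually subtle point specific to this model is that the regressors $\mathbf{Z}_{ik,\mathcal{A}}$ contain the first-stage generated quantity $\hat U_{ki}$, so strictly the first-stage estimation error ought to propagate into the variance, as reflected by the correction matrices $D_k$ in Theorem~2.1; obtaining the clean sandwich $\boldsymbol{\Sigma}_{\mathcal{A}}$ here amounts to treating the limiting Gram matrices $\boldsymbol{\Gamma}_k,\boldsymbol{\Omega}_k$ of (A11)---which are defined through $\mathbf{Z}_{ik}$ and hence already absorb $\hat U_{ki}$---as the primitive objects, i.e. conditioning on a well-behaved first stage. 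Reconciling this treatment with the generated-regressor correction is the step I would scrutinize most.
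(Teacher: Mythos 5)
Your proposal follows essentially the same route as the paper's own proof: recenter and rescale, apply Knight's identity to split the objective into a linear score term, a deterministic quadratic, and a remainder (the paper's $M_n^{(1)}+M_n^{(2)}+M_n^{(3)}$), establish the score's normality by Lindeberg--Feller with Cram\'er--Wold, kill the remainder by a mean/variance computation, and upgrade pointwise to uniform convergence via the convexity lemma so that the argmin converges. The one place where you are \emph{more} careful than the paper is exactly the place worth dwelling on: the cross-quantile covariance. The paper's proof only states the marginal laws $\boldsymbol{S}_k \sim N\bigl(0,\tau_k(1-\tau_k)\boldsymbol{\Gamma}_{k,\mathcal{A}}\bigr)$ and then passes directly to the sandwich with meat $\sum_k \tau_k(1-\tau_k)\boldsymbol{\Gamma}_{k,\mathcal{A}}$, i.e.\ it implicitly treats the $\boldsymbol{S}_k$ as uncorrelated across $k$. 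As you observe, this cannot be taken for granted: the scores $\psi_{\tau_k}$ and $\psi_{\tau_{k'}}$ are evaluated at the same response $Y_i$, giving conditional covariance $\tau_k\wedge\tau_{k'}-\tau_k\tau_{k'}\neq 0$, and the design vectors $\mathbf{Z}_{ik,\mathcal{A}}$ share blocks across $k$ through the cumulative reparameterization $\boldsymbol{T}_k$, so the limits $\lim_n n^{-1}\sum_i \mathbf{Z}_{ik,\mathcal{A}}\mathbf{Z}_{ik',\mathcal{A}}^{T}$ for $k\neq k'$ do not vanish either. Hence the meat produced by a faithful execution of this argument is the double sum $\sum_{k,k'}(\tau_k\wedge\tau_{k'}-\tau_k\tau_{k'})\,\lim_n n^{-1}\sum_i \mathbf{Z}_{ik,\mathcal{A}}\mathbf{Z}_{ik',\mathcal{A}}^{T}$, and the single-sum form in the proposition is recovered only by discarding the off-diagonal terms --- a step the paper performs silently and does not justify. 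Your second caveat is likewise apt: the paper handles the generated regressor $\hat{U}_{ki}$ by fiat, building it into the limits $\boldsymbol{\Gamma}_k,\boldsymbol{\Omega}_k$ of condition (A11) rather than propagating first-stage error as Theorem 2.1 does through the matrices $D_k$. So your plan is not merely equivalent to the paper's proof; where it deviates (refusing to drop the cross terms without verification), it is the mathematically sound version, and completing it would either require proving those cross terms negligible (they are not, in general) or restating $\boldsymbol{\Sigma}_{\mathcal{A}}$ with the double-sum meat.
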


However, in practice, the true model structure is usually unknown beforehand. Hence we take into account the full parameter vector $\bm \theta$ and estimate it as $\hat{\bm{\theta}}_{\mathrm{FAL}}=\arg \min _{\bm{\theta}} Q_1({\bm{\theta}})$, where $Q_1({\bm{\theta}})$ is defined in (\ref{Objfunc_p1}). We show that $\hat{\bm{\theta}}_{\mathrm{FAL}}$ has the following oracle property.

\begin{theo}
Suppose that conditions (A1)-(A11) hold. If $n^{1 / 2} \tilde{\gamma}_{1n} \rightarrow 0$ and $n \tilde{\gamma}_{1n} \rightarrow \infty$ as $n \rightarrow \infty,$ we have
\begin{itemize}
\item[1.] {\textbf{Sparsity:}}
$$\operatorname{Pr}\left(\left\{j: \hat{\theta}_{j, \mathrm{FAL}} \neq 0, j=K+1, \cdots, (p+2) K\right\}=\mathcal{A}_{2}\right) \rightarrow 1.$$
\item[2.] {\textbf{Asymptotic normality:}}
$$n^{1 / 2}\left(\hat{\bm\theta}_{\mathcal{A}, \mathrm{FAL}}-{\bm\theta}_{\mathcal{A}, 0}\right) \stackrel{d}{\longrightarrow} N\left({\bm{0}}, {\bm\Sigma}_{\mathcal{A}}\right),$$
where ${\bm\Sigma}_{\mathcal{A}}$ is the covariance matrix of the oracle estimator given in Proposition 1.
\end{itemize}
\end{theo}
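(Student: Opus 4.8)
The plan is to follow the convex-objective / argmin route of Knight (1998) and Pollard (1991), handling the sparsity and normality claims through a single reparametrized objective, and to inherit as much as possible from Theorem 2.1 and Proposition 2.1. First I would center and rescale: set $\bm\theta = \bm\theta_{0} + n^{-1/2}\mathbf{u}$ and define the random convex function
\[
V_n(\mathbf{u}) = Q_1\bigl(\bm\theta_{0} + n^{-1/2}\mathbf{u}\bigr) - Q_1(\bm\theta_{0}),
\]
whose minimizer is $\hat{\mathbf{u}} = n^{1/2}(\hat{\bm\theta}_{\mathrm{FAL}} - \bm\theta_{0})$. Since $Q_1$ is convex in $\bm\theta$, so is $V_n$; it therefore suffices to identify the in-distribution limit of $V_n$ as a convex function with a unique minimizer, after which the convexity lemma transfers convergence of the objectives to convergence of their argmins.

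Second, for the loss part I would apply Knight's identity to each summand $\rho_{\tau_{k}}(r_{ki} - n^{-1/2}\mathbf{Z}_{ki}^{T}\mathbf{u})$ with $r_{ki} = Y_i - \mathbf{Z}_{ki}^{T}\bm\theta_{0}$. The linear terms assemble into $-\mathbf{u}^{T}\mathbf{W}_n$ with $\mathbf{W}_n = n^{-1/2}\sum_{k,i}\mathbf{Z}_{ki}\psi_{\tau_{k}}(r_{ki})$, which by (A9) is centered and, under the moment and spatial conditions (A2)--(A6), converges to $N(\mathbf{0}, \sum_{k}\tau_{k}(1-\tau_{k})\bm\Gamma_{k})$; the remaining integral terms, after taking conditional expectations and using the density condition (A1) and the limit (A11), converge to the quadratic $\tfrac12\mathbf{u}^{T}\bigl(\sum_{k}\bm\Omega_{k}\bigr)\mathbf{u}$. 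This reproduces exactly the limit underlying Proposition 2.1, so I would borrow that computation rather than redo it.

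Third, I would treat the penalty $\tilde\gamma_{1n}\sum_{k,l}\tilde\omega_{k,l}\bigl(|d_{k,l}^{0} + n^{-1/2}u_{(k,l)}| - |d_{k,l}^{0}|\bigr)$ by splitting on the active set $\mathcal{A}$. For a truly nonzero difference ($d_{k,l}^{0}\neq 0$) the pilot weight $\tilde\omega_{k,l} = |\tilde d_{k,l}|^{-1}$ converges in probability to $|d_{k,l}^{0}|^{-1} = O_p(1)$ by the $\sqrt n$-consistency of $\tilde{\bm\theta}$ from Theorem 2.1, and $n^{1/2}\tilde\gamma_{1n}\to 0$ forces the corresponding contribution to vanish; hence on $\mathcal{A}$ the penalty is asymptotically negligible, so the minimizer over the active coordinates matches the oracle and yields the asymptotic normality claim with covariance $\bm\Sigma_{\mathcal{A}}$. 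For a truly zero difference ($d_{k,l}^{0}=0$) the pilot difference satisfies $\tilde d_{k,l} = O_p(n^{-1/2})$, so $\tilde\omega_{k,l}$ diverges, and combined with $n\tilde\gamma_{1n}\to\infty$ the penalty on these coordinates overpowers the $O_p(1)$ loss whenever the coordinate is moved away from zero. A subgradient (KKT) argument then shows that with probability tending to one every such coordinate is estimated exactly at zero, giving the sparsity claim; this is the standard adaptive-LASSO mechanism in which the two rate conditions play complementary roles.

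The step I expect to be the main obstacle is the penalty/sparsity part, and specifically the joint action of the two tuning rates with the data-driven weights: one must show the diverging weights on the null set genuinely dominate the loss, uniformly over all penalized pairs $(k,l)$, which is precisely where the $\sqrt n$-consistency of the first-stage estimates from Theorem 2.1 is indispensable. A secondary technical difficulty is that the regressors $\mathbf{Z}_{ki}$ contain the generated covariate $\hat U_{ki}$ from the first-stage IV quantile fit and the data are spatially dependent, so the CLT and LLN used for $\mathbf{W}_n$ and for the quadratic remainder cannot be the plain i.i.d.\ versions; they must be justified under (A2)--(A6) exactly as in the proof of Theorem 2.1, and the first-stage estimation error must be shown not to perturb the second-stage limit beyond what is already encoded in $\bm\Gamma_{k}$ and $\bm\Omega_{k}$.
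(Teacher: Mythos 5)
Your proposal is sound and reaches both conclusions, but it follows a genuinely different architecture from the paper's proof: yours is the Zou (2006)/Knight--Fu epi-convergence route, while the paper follows the Fan--Li (2001)/Jiang et al.\ (2013, 2014) route. For normality you analyze the single full-dimensional rescaled objective $V_n(\mathbf{u})$, whose pointwise limit is $+\infty$ at any $\mathbf{u}$ perturbing a null coordinate (since there $\tilde\omega_{k,l}\asymp n^{1/2}$ and $n\tilde\gamma_{1n}\to\infty$) and equals the oracle quadratic limit on the active subspace; convexity then delivers root-$n$ consistency and asymptotic normality in one stroke, with no appeal to sparsity. The paper instead proves a separate root-$n$ consistency lemma by the Fan--Li ball argument, and its normality proof works with the objective restricted to the subspace $\{(\bm\theta_{\mathcal{A}},\mathbf{0})\}$ --- an argument that identifies $n^{1/2}(\hat{\bm\theta}_{\mathcal{A},\mathrm{FAL}}-\bm\theta_{\mathcal{A},0})$ with the restricted minimizer only on the event that the null coordinates are exactly zero, so it consumes part 1 as input. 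For sparsity you invoke a KKT/subgradient argument, whereas the paper uses an objective-comparison contradiction: it forms $\bm\theta^*$ by zeroing out $\hat{\bm\theta}_{\mathcal{A}^C}$ and shows $Q_1(\bm\theta^*)<Q_1(\hat{\bm\theta})$ with probability tending to one, because the penalty on the null coordinates, of order $n\tilde\gamma_{1n}\cdot n^{1/2}$, dominates the $O_p(1)$ change in the check loss. The trade-off is clear: your route is logically cleaner (normality independent of sparsity, no separate consistency lemma), but its delicate point --- which you correctly flag --- is the KKT step: with the piecewise-linear check loss the subdifferential is set-valued, and one must bound the loss subgradient at $\hat{\bm\theta}_{\mathrm{FAL}}$ by $O_p(n^{1/2})$ (via root-$n$ consistency plus a uniformity/Bahadur-type argument) to see it cannot match a penalty term of order $n^{1/2}\,(n\tilde\gamma_{1n})$, uniformly over the penalized pairs; the paper's comparison argument sidesteps subgradient calculus entirely, which is precisely why that style is common for nonsmooth losses. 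A final point in your favor: you explicitly note that the generated regressor $\hat U_{ki}$ and the spatial dependence force the CLT for $\mathbf{W}_n$ to be justified by the Theorem 2.1 machinery rather than i.i.d.\ arguments --- an issue the paper's appendix passes over silently (it conditions on (A10)--(A11) and carries out the proof only for $p=1$, citing Jiang et al.\ (2014) for the general case).
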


\subsubsection{Penalised Fused Adaptive Sup-norm Estimator }
In fused adaptive Lasso estimation, the slope coeffients and the interquantile slope differences are penalized individually.
In this section, we first present another penalty to shrink the different quantile levels associated with each predictor as one group.
Denote ${\bm{d}}_{(-2)} = \left(\beta_{1,0}, \beta_{1,1}, \cdots, \beta_{1, p}\right)^{T} \in \mathbb{R}^{p+1}$ as the slope coefficients at $\tau_{1}$, ${\bm{ d}}_{(-1)}=\left(\alpha_{1}, \cdots, \alpha_{K}\right)^{T} \in {\mathbb{R}}^{K}$ as the vector of the intercept terms, ${\bm{d}}_{(0)} = (\lambda_{2} - \lambda_{1}, \lambda_{3} - \lambda_{2}, \cdots, \lambda_{K} - \lambda_{K-1})^T$ as a vector of interquantile slope differences corresponding to the spatial lag terms,
${\bm{d}}_{(l)} = (\beta_{2,l} - \beta_{1,l}, \beta_{3,l} - \beta_{2,l}, \cdots, \beta_{K,l} - \beta_{K-1, l})^T$, $l = 1, 2, \cdots, p$, as a vector of interquantile slope differences corresponding to the $l$the predictor. The new parameter vector $\bm{\theta}$ could be recorded as $\bm{\theta}=\left({\bm d}_{(-2)}^{T}, {\bm d}_{(-1)}^{T}, {\bm d}_{(0)}^{T}, \cdots, {\bm d}_{(p)}^{T}\right)^{T}$ and update the covariates vector $\mathbf{Z}_{i k}$ with the order of elements of the new parameter vector $\bm \theta$.

The penalised estimator could be obtained by minimizing the following objective function
\begin{equation}\label{Objfunc_p2}
    Q_2(\bm{\theta})=\sum_{k=1}^{K} \sum_{i=1}^{n} \rho_{\tau_{k}}\left(Y_{i}-\mathbf{Z}_{k i}^{T} {\bm{\theta}}\right) + \tilde{\gamma}_{2n} \sum_{l=0}^{p}  \tilde{\omega}_{(l)} \| {\bm{d}}_{(l)}\|_{\infty} ,
\end{equation}
where $\tilde{\omega}_{(l)} = (\|{\tilde{\bm{d}}}_{(l)}\|_{\infty})^{-1} = (\max_{k}|\beta_{k,l} - \beta_{k-1,l}|)^{-1}$, $l = 0, 1, 2, \cdots, p$ are the group-wise adaptive weights. ${\tilde{\bm{d}}}_{(l)}$ are the initial estimators calculated from the quantile regression method with instrument variables by minimizing equation (\ref{Objfunction-0}). $\tilde{\gamma}_{2n} > 0$ is the tuning parameter controls the degree of the group-wise penalization on the interquantile coefficients differences.

To derive the asymptotic property of the Fused Adaptive Sup-norm method, we define the index sets $\mathcal{B}_{1}=\{-2,-1\}$, $\mathcal{B}_{2}=\left\{l:\left\|{\bm d}_{(l)}\right\| \neq 0, l=0, 1, \cdots, p\right\}$, and $\mathcal{B}=\mathcal{B}_{1} \cup \mathcal{B}_{2}$. Assume $\bm{\theta}_{\mathcal{B}}=\left({\bm d}_{(l)}^{T}, l \in \mathcal{B}\right)^{T}$ is the nonnull subset of ${\bm{\theta}}$ and the true parameter vector ${\bm \theta}_{\mathcal{B}, {0}}=\left({\bm d}_{(l), 0}^{T}: l \in \mathcal{B}\right)^{T}$.
Without loss of generality, we assume $\left\|{\bm d}_{(l)}\right\|\neq 0$ for $l<g~(g \geq 0)$ and $\left\|{\bm d}_{(l)}\right\|=0$ for $l=g+1, \cdots, p$, that is, ${\bm{\beta}}_{(l)} $ vary across quantiles for $l<g$, the remaining ${\bm{\beta}}_{(l)}$ are constant for $l\geq g$.

\begin{prop}
 Let $\hat{\bm {\theta}}_{\mathcal{B}}$ be the oracle estimator of ${\bm {\theta}}_{\mathcal{B}, 0}$ obtained by knowing the true structure. Assuming that conditions (A1)-(A11) hold, we have
\[
n^{1 / 2}\left(\hat{\bm{\theta}}_{\mathcal{B}}-\bm{\theta}_{\mathcal{B}, 0}\right) \stackrel{d}{\longrightarrow} N\left({\bm{0}}, \bm{\Sigma}_{\mathcal{B}}\right), \text { as } n \rightarrow \infty
\]
where ${\bm{\Sigma}}_{\mathcal{B}}=\left(\sum_{k=1}^{K} {\bm{\Omega}}_{k, \mathcal{B}}\right)^{-1}\left\{\sum_{k=1}^{K} \tau_{k}\left(1-\tau_{k}\right) {\bm{\Gamma}}_{k, \mathcal{B}}\right\}\left(\sum_{k=1}^{K} {\bm{\Omega}}_{k, \mathcal{B}}\right)^{-1}, {\bm{\Omega}}_{k, \mathcal{B}}$ and ${\bm{\Gamma}}_{k, \mathcal{B}}$ are the top-left $m \times m$ submatrices of ${\bm{\Omega}}_{k}$ and ${\bm{\Gamma}}_{k},$ respectively, where $m=p+1 + K+g(k-1)$.
\end{prop}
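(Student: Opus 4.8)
The plan is to recognize that, once the true sparsity structure is known, the oracle estimator $\hat{\bm{\theta}}_{\mathcal{B}}$ is simply the unpenalized joint quantile-regression estimator obtained after deleting from each design vector $\mathbf{Z}_{ki}$ the columns associated with the groups $\bm{d}_{(l)}$, $l \geq g$, that are known to vanish; call the surviving design $\mathbf{Z}_{ki,\mathcal{B}}$. Consequently the group structure of the sup-norm penalty plays no role in the proposition itself, and the argument is structurally identical to that of Proposition 1, with the index set $\mathcal{A}$ replaced by $\mathcal{B}$ and the relevant top-left submatrices taken to be the $m \times m$ blocks $\bm{\Omega}_{k,\mathcal{B}}$ and $\bm{\Gamma}_{k,\mathcal{B}}$ defined in the statement. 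I would therefore establish asymptotic normality through the standard convexity-based route for quantile M-estimation (Pollard, 1991), combined with Knight's identity.

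First I would recenter and rescale. Writing $\bm{u} = n^{1/2}(\bm{\theta}_{\mathcal{B}} - \bm{\theta}_{\mathcal{B},0})$ and $e_{ik} = Y_i - \mathbf{Z}_{ki,\mathcal{B}}^T \bm{\theta}_{\mathcal{B},0}$, define
\[
V_n(\bm{u}) = \sum_{k=1}^{K}\sum_{i=1}^{n}\left[ \rho_{\tau_k}\!\left(e_{ik} - n^{-1/2}\mathbf{Z}_{ki,\mathcal{B}}^T \bm{u}\right) - \rho_{\tau_k}(e_{ik}) \right],
\]
which is convex in $\bm{u}$ and minimized at $\hat{\bm{u}} = n^{1/2}(\hat{\bm{\theta}}_{\mathcal{B}} - \bm{\theta}_{\mathcal{B},0})$. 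Applying Knight's identity to each summand splits $V_n$ into a linear part $-\bm{u}^T \bm{W}_n$, with $\bm{W}_n = n^{-1/2}\sum_{k=1}^{K}\sum_{i=1}^{n}\mathbf{Z}_{ki,\mathcal{B}}\,\psi_{\tau_k}(e_{ik})$, and a remainder built from the integrated indicator increments. Using (A9) the score terms have conditional mean zero and conditional variance $\tau_k(1-\tau_k)$; together with the moment bound (A6), the growth condition (A10), and the limit $\bm{\Gamma}_{k,\mathcal{B}}$ in (A11), the Lindeberg--Feller central limit theorem then gives $\bm{W}_n \stackrel{d}{\longrightarrow} N\!\left(\bm{0}, \sum_{k=1}^{K}\tau_k(1-\tau_k)\bm{\Gamma}_{k,\mathcal{B}}\right)$.

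Next I would handle the remainder. Taking its conditional expectation and expanding each conditional distribution function $F_i$ about the $\tau_k$-quantile, assumption (A1) supplies the density factor $f_i\{Q_{\tau_k}(Y_i\mid \mathcal{F}_{-i}, \bm{X}_i)\}$, and (A11) gives $n^{-1}\sum_i f_i\{\cdots\}\mathbf{Z}_{ki,\mathcal{B}}\mathbf{Z}_{ki,\mathcal{B}}^T \to \bm{\Omega}_{k,\mathcal{B}}$, so the expected remainder converges to $\tfrac12 \bm{u}^T(\sum_k \bm{\Omega}_{k,\mathcal{B}})\bm{u}$; a variance computation shows the remainder concentrates at this limit. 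Hence $V_n(\bm{u}) \stackrel{d}{\longrightarrow} V(\bm{u}) = -\bm{u}^T \bm{W} + \tfrac12 \bm{u}^T(\sum_k \bm{\Omega}_{k,\mathcal{B}})\bm{u}$ pointwise in $\bm{u}$, where $\bm{W}$ is the Gaussian limit above. Because $V_n$ is convex and $\sum_k \bm{\Omega}_{k,\mathcal{B}}$ is positive definite (from (A11) and the rank condition (A7)), the convexity lemma transfers pointwise convergence of the objective to convergence of the minimizer, yielding $\hat{\bm{u}} \stackrel{d}{\longrightarrow} (\sum_k \bm{\Omega}_{k,\mathcal{B}})^{-1}\bm{W} \sim N(\bm{0}, \bm{\Sigma}_{\mathcal{B}})$ with the stated sandwich covariance.

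The step I expect to be the main obstacle is justifying that the fitted regressors $\hat{U}_{ki}$ embedded in $\mathbf{Z}_{ki,\mathcal{B}}$ may be treated as the true regressors to first order, so that $\bm{\Gamma}_{k,\mathcal{B}}$ and $\bm{\Omega}_{k,\mathcal{B}}$ are indeed the correct limits and no extra first-stage variance enters the sandwich. This requires the $n^{1/2}$-consistency of the first-stage quantile--IV fit under (A5)--(A9) and a careful argument that its estimation error contributes negligibly to the second-stage score $\bm{W}_n$; unlike Theorem 1, whose covariance explicitly carries the two-stage correction through $D_k$ and $\Omega$, the oracle proposition is stated in the reduced-design form, so this reduction must be verified rather than assumed. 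Secondary care is needed for the uniform (over compact $\bm{u}$) control of the Knight remainder and for checking that deleting the zero-group columns does not disturb positive-definiteness of the limiting Hessian.
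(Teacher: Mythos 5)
Your proposal takes essentially the same approach as the paper: the paper's own proof of this proposition is just the remark that it is ``similar to the proof of Proposition 2.1 and thus is skipped,'' and the Proposition 2.1 argument is exactly the Knight-identity decomposition, Lindeberg--Feller CLT for the linear term, and convexity-lemma treatment of the remainder that you carry out on the reduced design $\mathbf{Z}_{ki,\mathcal{B}}$. The issue you flag as the main obstacle---that the generated regressors $\hat{U}_{ki}$ require a first-stage negligibility argument so that no two-stage correction enters the sandwich covariance---is a genuine point of care, but it is one the paper itself passes over silently in both oracle propositions, so it marks no gap between your argument and theirs.
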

Theorem 2.3 shows that when the true model structure is unknown, the fused adaptive sup-norm penalized estimator of $\bm \theta$ has the following oracle property. The estimator could be obtain by minimizing the objective funciton (\ref{Objfunc_p2}), $\hat{\bm \theta}_{\mathrm{FAS}} = \arg \min _{\bm{\theta}} Q_2({\bm{\theta}})$.

\begin{theorem}
Suppose that conditions (A1)-(A11) hold. If $n^{1 / 2} \tilde{\gamma}_{2n} \rightarrow 0$ and $n \tilde{\gamma}_{2n} \rightarrow \infty$ as $n \rightarrow \infty,$ we have

\begin{itemize}
\item[1.] \textbf{Sparsity:} $$\operatorname{Pr}\left(\left\{l:\left\|\hat{\bm{d}}_{(l), \mathrm{FAS} }\right\| \neq \boldsymbol{0}, l=0, 1, \cdots, p\right\}=\mathcal{B}_{2}\right) \rightarrow 1$$
\item[2.] \textbf{Asymptotic normality:}
$$n^{1 / 2}\left(\hat{\bm{\theta}}_{\mathcal{B}, \mathrm{FAS}}-{\bm{\theta}}_{\mathcal{B}}\right) \stackrel{d}{\longrightarrow} N\left(0, {\bm{\Sigma}}_{\mathcal{B}}\right),$$
where ${\bm{\Sigma}}_{\mathcal{B}}$ is the covariance matrix of the oracle estimator given in Proposition 2.3.

\end{itemize}

\end{theorem}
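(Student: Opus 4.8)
The plan is to adapt the argument used for the fused adaptive LASSO estimator in Theorem 2.2 to the group sup-norm penalty, combining Knight's identity with Pollard's convexity lemma. First I would recenter and rescale by writing $\bm{\theta} = \bm{\theta}_0 + \bm{u}/\sqrt{n}$ and studying the shifted objective
\[
V_n(\bm{u}) = Q_2\!\left(\bm{\theta}_0 + \bm{u}/\sqrt{n}\right) - Q_2(\bm{\theta}_0),
\]
whose minimizer is $\hat{\bm{u}} = \sqrt{n}\,(\hat{\bm{\theta}}_{\mathrm{FAS}} - \bm{\theta}_0)$. Since $Q_2$ is convex in $\bm{\theta}$, the function $V_n$ is convex in $\bm{u}$, so it will suffice to identify the finite-dimensional limit of $V_n$ and then invoke the convexity lemma to transfer the convergence to the argmin.

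Second, I would decompose the unpenalized part of $V_n$ using Knight's identity
\[
\rho_{\tau}(r - s) - \rho_{\tau}(r) = -s\,\psi_{\tau}(r) + \int_0^s \left\{ I(r \le t) - I(r \le 0) \right\} dt,
\]
applied with $r = Y_i - \mathbf{Z}_{ki}^{T}\bm{\theta}_0$ and $s = \mathbf{Z}_{ki}^{T}\bm{u}/\sqrt{n}$. This splits the loss into a linear term $-n^{-1/2}\sum_{k,i}\psi_{\tau_k}(r_{ki})\,\mathbf{Z}_{ki}^{T}\bm{u}$ and a remainder. By condition (A11) and a standard conditional-expectation computation, the linear term converges to $\bm{u}^{T}\bm{T}$ with $\bm{T}\sim N(\bm{0},\,\sum_{k}\tau_k(1-\tau_k)\bm{\Gamma}_k)$, while the remainder converges in probability to the quadratic form $\frac{1}{2}\bm{u}^{T}\big(\sum_{k}\bm{\Omega}_k\big)\bm{u}$, reproducing the unpenalized limit that underlies the oracle proposition for the FAS estimator.

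Third, I would analyze the penalty increment $\tilde{\gamma}_{2n}\sum_{l=0}^{p}\tilde{\omega}_{(l)}\big(\|\bm{d}_{(l),0} + \bm{u}_{(l)}/\sqrt{n}\|_{\infty} - \|\bm{d}_{(l),0}\|_{\infty}\big)$ on the two index regimes separately. For $l \in \mathcal{B}_2$ the initial estimator is root-$n$ consistent, so $\tilde{\omega}_{(l)} \to \|\bm{d}_{(l),0}\|_{\infty}^{-1} > 0$ and the triangle inequality bounds the increment by $O(n^{-1/2}\|\bm{u}_{(l)}\|)$; multiplied by $n^{1/2}\tilde{\gamma}_{2n} \to 0$ this contribution vanishes, so on the oracle coordinates the penalty is asymptotically negligible and the restricted limit is precisely the convex quadratic-plus-linear objective of the oracle proposition, delivering the $N(\bm{0},\bm{\Sigma}_{\mathcal{B}})$ law. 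For $l \notin \mathcal{B}_2$ one has $\|\tilde{\bm{d}}_{(l)}\|_{\infty} = O_p(n^{-1/2})$, hence $\tilde{\omega}_{(l)} = O_p(n^{1/2})$, and the increment equals $\tilde{\gamma}_{2n}\tilde{\omega}_{(l)}\,n^{-1/2}\|\bm{u}_{(l)}\|_{\infty}$ with coefficient $n^{1/2}\tilde{\gamma}_{2n}\tilde{\omega}_{(l)} = (n\tilde{\gamma}_{2n})\,O_p(1) \to \infty$, since $n\tilde{\gamma}_{2n}\to\infty$; this blow-up forces the limiting objective to equal $+\infty$ unless $\bm{u}_{(l)} = \bm{0}$, which is the sparsity statement.

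The main obstacle I anticipate is the non-smoothness of the sup-norm: unlike the separable $\ell_1$ penalty in the FAL case, $\|\cdot\|_{\infty}$ is non-differentiable not only at the origin but all along the ridges where several coordinates of $\bm{d}_{(l)}$ tie for the maximum, and its subdifferential is the convex hull of the signed coordinate directions attaining that maximum, so the penalty cannot be differentiated coordinatewise. I would sidestep this by arguing entirely at the level of the convex limit of $V_n$ rather than through its gradient: strict convexity, inherited from the positive-definiteness of $\sum_k \bm{\Omega}_k$ guaranteed by (A11), ensures the limit has a unique minimizer, so Pollard's lemma yields $\hat{\bm{u}}\stackrel{d}{\longrightarrow}\arg\min$, and the two regime computations above pin that minimizer down. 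Making the divergence argument for the true-zero groups fully rigorous — that the argmin concentrates on $\{\bm{u}_{(l)} = \bm{0} : l \notin \mathcal{B}_2\}$ with probability tending to one — is the delicate step, and it is exactly what produces the sparsity half of the conclusion.
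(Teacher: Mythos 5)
Your asymptotic-normality argument is essentially the paper's own: Knight's identity splits the loss into a linear term and a quadratic remainder converging to $-\sum_k\bm{\delta}^T\bm{S}_k+\tfrac12\bm{\delta}^T(\sum_k\bm{\Omega}_k)\bm{\delta}$, the adaptive weights on groups in $\mathcal{B}_2$ stay bounded so the penalty increment is of order $n^{1/2}\tilde{\gamma}_{2n}\to 0$, and the convexity lemma transfers the limit to the argmin, yielding $N(\bm{0},\bm{\Sigma}_{\mathcal{B}})$. The paper routes this through a separate root-$n$ consistency lemma (its Lemma 2) before invoking the same machinery, but the substance coincides.

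The sparsity half, however, has a genuine gap. You deduce sparsity from the fact that the epi-limit of $V_n$ is $+\infty$ off the subspace $\{\bm{u}_{(l)}=\bm{0},\ l\notin\mathcal{B}_2\}$, so that the limiting argmin is supported there. But convergence in distribution of $\hat{\bm{u}}=\sqrt{n}(\hat{\bm{\theta}}_{\mathrm{FAS}}-\bm{\theta}_0)$ to a law supported on that subspace only gives $\hat{\bm{d}}_{(l),\mathrm{FAS}}=o_p(n^{-1/2})$ for the true-zero groups; it does not give $\Pr(\hat{\bm{d}}_{(l),\mathrm{FAS}}=\bm{0})\to 1$, which is what the theorem asserts. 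The distinction is not pedantic: replace $\|\bm{d}_{(l)}\|_{\infty}$ by the smooth penalty $\|\bm{d}_{(l)}\|_2^2$ with the same diverging weights, and the rescaled objective has exactly the same $+\infty$ epi-limit off the subspace, yet its minimizer is never exactly zero. Exact zeroness is produced by the kink of the sup-norm at the origin and must be extracted by an argument that uses that kink. The paper does this by contradiction: having established root-$n$ consistency of $\hat{\bm{\theta}}_{\mathrm{FAS}}$, it supposes $\hat{\bm{\theta}}_{\mathcal{B}^{C}}\neq\bm{0}$, compares $Q_2$ at $\hat{\bm{\theta}}$ with $Q_2$ at the vector $\bm{\theta}^{*}$ obtained by zeroing out the $\mathcal{B}^{C}$ block, notes that the two loss terms differ by $O_p(1)$ while the penalty saving $n\tilde{\gamma}_{2n}\sum_{l>g}\tilde{\omega}_{(l)}\|\hat{\bm{\theta}}_{(l)}\|_{\infty}$ dominates because $n\tilde{\gamma}_{2n}\to\infty$ and $\sqrt{n}\|\tilde{\bm{d}}_{(l)}\|_{\infty}=O_p(1)$, and so contradicts the minimality of $\hat{\bm{\theta}}_{\mathrm{FAS}}$; a KKT/subgradient verification would be an equivalent fix within your framework. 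A smaller point in the same step: what you need for the zero groups is not the upper bound $\tilde{\omega}_{(l)}=O_p(n^{1/2})$ but the lower bound $n^{-1/2}\tilde{\omega}_{(l)}\geq 1/M$ with probability tending to one (which holds precisely because $\sqrt{n}\|\tilde{\bm{d}}_{(l)}\|_{\infty}$ is tight); as written, $(n\tilde{\gamma}_{2n})\,O_p(1)\to\infty$ is not a valid implication.
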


\subsection{Estimation of the variance of the noise}

Similar to section \ref{SAR}, we give the estimator of the variance of the noise $\hat{\sigma}_k^2$,
\begin{align}
\hat{\sigma}^2_k
= \left[{\bm{Y}} - {\bm{A}}({\lambda}_k)^{-1}(\alpha_k + {\bm{X}}{\bm{\beta}}_k)\right]^{T}\left[{\bm{A}}(\lambda)^{-1}{\bm{A}}(\lambda)^{T}\right]^{-1}\left[{\bm{Y}} - {\bm{A}}({\lambda}_k)^{-1}(\alpha_k + {\bm{X}}{\bm{\beta}}_k)\right],
\label{eq_sigma_2}
\end{align}
where ${\bm{A}}(\lambda_k)= {\bm{I}}_n - \lambda_k {\bm{W}}$ is related to $\lambda_k$, $\lambda_k$ and ${\bm{\beta}}_k$ could be  estimated by the   solutions of (\ref{Objfunc_p1}) or (\ref{Objfunc_p2}).

It is noticed that ${\bm{A}}(\lambda_k) $ is a nonsingular matrix, then
$
\left[{\bm{A}}(\lambda_k) {\bm{A}}(\lambda_k) ^{T}\right]^{-1}
= ({\bm{I}}_n - \lambda _k {\bm{W}})^T  ({\bm{I}}_n - \lambda_k {\bm{W}}).
$
Let ${\bm{P}}_k = {\bm{A}}(\lambda_k)(\alpha_k + {\bm{X}} {\bm{\beta}}_k)$, then
${\bm{P}}_k$ could be calculated by solving a linear system,
\begin{align*}
{\bm{P}}_k
= ({\bm{I}}_n - \lambda_k  {\bm{W}})^{-1} (\alpha_k + {\bm{X}} {\bm{\beta}}_k)  .
\end{align*}
Thus $\hat{\sigma}^2_k$ defined by (\ref{eq_sigma_2}) can be calculated by
\begin{equation}\label{eq_sigma_2_2}
    \hat{\sigma}_k^2 = \frac{1}{n}\|({\bm{I}}_n - \lambda_k {\bm{W}})\cdot({\bm{Y}} - {\bm{P}}_k)\|_2^2.
\end{equation}

\section{Computation}

In our work, we focus on shrink the interquantile slope differences towards zero. The above minimization (\ref{Objfunc_p1}) and (\ref{Objfunc_p2}) are equivalent to linearly constrained minimization problems, which can be formulated as a linear programming problem with linear constraints.
Minimizing (\ref{Objfunc_p1}) is equivalent to solving
\begin{equation}\label{Linear-constrained1}
    \hat{\bm{\theta}} = \arg \min _{\bm{\theta}} \sum_{k=1}^{K} \sum_{i=1}^{n} \rho_{\tau_{k}}\left(Y_{i}-\bm{Z}_{k i}^{T} \bm{\theta}\right), ~~~ s.t.  \sum_{k=2}^{K} \sum_{l=0}^{p}\tilde{\omega}_{k, l} |d_{k,l}|  \leq t,
\end{equation}
where $t > 0$ is a tuning parameter that plays a similar role as $\tilde{\gamma}_{1n}$. Adopting this constrained minimization in (\ref{Linear-constrained1}) gives us a natural range of the tuning parameter, that is, $t \in [0, t_1^0]$, where $t_1^0 = (K-1)(p+1)$, see \cite{Jiang-2014}. Similarly, minimizing (\ref{Objfunc_p2}) is equivalent to solving
\begin{equation}\label{Linear-constrained2}
    \hat{\bm{\theta}} = \arg \min _{\bm{\theta}} \sum_{k=1}^{K} \sum_{i=1}^{n} \rho_{\tau_{k}}\left(Y_{i}-\bm{Z}_{k i}^{T} \bm{\theta}\right), ~~~ s.t.  \sum_{l=0}^{p}  \tilde{\omega}_{(l)} \| {\bm{d}}_{(l)}\|_{\infty} \leq t,
\end{equation}
where the tuning parameter $t \in [0, t_2^0]$ with $t_2^0 = p+1$, see \cite{Jiang-2014}.

The choose of tuning parameter $t$ is important for the solution to (\ref{Linear-constrained1}) and (\ref{Linear-constrained2}). We consider Akaike information criterion (AIC) 
to choose tuning parameter $t$, that is,
\[
\operatorname{AIC}(t)= \operatorname{Loss}(t) +\frac{1}{n} \operatorname{edf}(t),
\]
where $\operatorname{Loss}(t) = \sum_{k=1}^{K} \log \left[\sum_{i=1}^{n} \rho_{\tau_{k}}\left(Y_{i}-\bm{Z}_{k i}^{T} \hat{\bm{\theta}}(t)\right) \right]$ denotes the quantile logarithmic loss and measures the goodness of fit; see (\cite{Akaike}), \cite{Bondell}.
The the second term focuses on the complexity of model by degree of freedom $\operatorname{edf}(t)$ with a multiplier $1/n$. 
The effective degree of freedom is also associated with the tuning parameter $t$ or $\tilde{\gamma}_{1n}$ and $\tilde{\gamma}_{2n}$. Here we set $\operatorname{edf}(t)$ as the number of nonzero unique quantile slope coefficients over predictors in both FAL and FAS approaches, see \cite{Jiang-2013}, \cite{Zhao-R}.
Also, we could apply Bayesian Information Criterion (BIC) to choose the tuning parameter, see \cite{Schwarz}. Different from AIC, the multiplier on $\operatorname{edf}(t)$ of BIC is $\log(n)/2n$.
The detailed procedures are in the following.
\begin{itemize}
\item[] {\emph{Step 1.}} Given $\tau_k, k=1, 2, \cdots, K$, run ordinary quantile regression of ${\bm{U}}$ on ${\bm{V}}$ by minimizing equation (\ref{Objfunction-00}), and get the predictor $\hat{\bm{U}}_k$.
\item[] {\emph{Step 2.}} Take ${\bm{Z}}_{ k i}^{T}=\left(1, \hat{U}_{k i}, {\bm{X}}_{i}^{T}\right)\boldsymbol{T}_{k}$ as the covariates and $Y_i$ as the response, where $\hat{U}_{ki}$ is the $i$th element of $\hat{\bm{U}}_k$  and $\boldsymbol{T}_{k}$ is defined in Section \ref{SQARM}.
\item[] {\emph{Step 3.}} For each given $t $, estimator $\hat{\bm{\theta}}(t)$ could be obtained by solving (\ref{Linear-constrained1}) or (\ref{Linear-constrained2}).
\item[] {\emph{Step 4.}} Choose tuning parameter $t$ by AIC or BIC above, and denote $t^*$.
\item[] {\emph{Step 5.}} Solve (\ref{Linear-constrained1}) and (\ref{Linear-constrained2}) at $t = t^*$, then the FAL or FAS estimators could be obtained.
\end{itemize}

\section{ Simulation Study}

In this section, we conduct Monte Carlo simulations to evaluate the performance of our proposed methods for SQAR models.
In each example, the simulation is repeated 500 times with 9 quantile levels $\tau \in \mathbb{S}_{\tau}$, where $\mathbb{S}_{\tau}=\{0.1,0.2, \ldots, 0.9\}$. We compare the following approaches: the conventional quantile regression method with instrument variables (RQ), the Fused LASSO method without adaptive weights (FL), the Fused Adaptive LASSO (FAL) method, the Fused Sup-norm method without adaptive weights (FS), and the Fused Adaptive Sup-norm (FAS) method.
To evaluate various approaches, we examine the median of squared error (MedSE), that is the median of  $\|\hat{\bm{\theta}}_{(k)} - {\bm{\theta}}_{(k)} \|^2$ over 500 simulations, which has been used in \cite{Liu-C} and \cite{Liang-L}.

The data generating process is based on the model  (\ref{SQARmodel1}). Let the spatial weight matrix ${\bm {W}}_n = {\bm{I}}_{m_1} \otimes {\bm{B}}_{m_2}$, where ${\bm{B}}_{m_2} = (1/(m_2 - 1)) ({\bm{1}}_{m_2}{\bm{1}}_{m_2} - {\bm{I}}_{m_2})$, $\otimes$ is the
Kronecker product and ${\bm{1}}_{m_2}$ is an $m_2$-dimensional column vector of ones, see \cite{Case} and \cite{Lee}.

\begin{example}
The model in this example only has a univariate predictor. The data are generated from
\begin{equation}
\label{exam:1}
Y_i = \alpha(\tau_{n,i}) + \lambda(\tau_{n,i}) U_i + {{\beta}(\tau_{n,i})} {X}_{i} + e_i, ~~i = 1, \cdots, n,
\end{equation}
where $\tau_{n,i}$ is randomly from $\mathbb{S}_{\tau}$, $\alpha(\tau_{n,i}) = \alpha +  b F_n^{-1}(\tau_{n,i})$, $\lambda(\tau_{n,i}) = \lambda + c_0 F_n^{-1}(\tau_{n,i})$, $\beta(\tau_{n,i}) = \beta + c_1 F_n^{-1}(\tau_{n,i})$. $F_n^{-1}(\tau_{n,i})$ is the $\tau_{n,i}$ quantile of distribution $F_n$. Two distribution $F_n$ are considered as $N(0,1)$. $X_{i}$ are generated from $U(0, 1)$.
The sample size $n$ is chosen as $n = m_1 \times m_2 = 80,~120, ~160$, where $m_2 = 4$ and $m_1 =  20,~30,~40$. $\lambda$ is chosen as $\lambda = 0.2,~0.5,~0.8$, implying the low autocorrelation, medium autocorrelation and high autocorrelation respectively. $\alpha$ is chosen as $\alpha = 3$, and $\beta$ is chosen as $\beta = 3$.
Four settings of regression coefficients are considered as
\begin{itemize}
\item[] {\textrm{I}:}  ~~$b = 0.5$, $c_0 = 0.1$ and $c_1 = 0.2$.  The data are from a heteroscedastic model, where all the coefficients vary across quantile
levels.
\item[] {\textrm{II}:} ~$b = 0.5$, $c_0 = 0$ and $c_1 = 0.2$.  The data are from a  heteroscedastic model, where intercept term $\alpha(\tau_{n,i}) $ and slope coefficients of predictor $\beta(\tau_{n,i})$ vary across quantile levels, the spatial lag parameter is a constant.
\item[] {\textrm{III}:} $b = 0.5$, $c_0 = 0.1$ and $c_1 = 0$.  The data are from a  heteroscedastic model, where intercept term $\alpha(\tau_{n,i})$ and spatial lag parameter $\lambda(\tau_{n,i})$ vary across quantile levels, the slope coefficient of predictor $\beta(\tau_{n,i})$ stays invariant for all quantiles.
\item[] {\textrm{IV}:} $b = 0.5$, $c_0 = 0$ and $c_1 =  0$.  The data are from a homoscedastic model, where only the intercept term $\alpha(\tau_{n,i}) $ varies across quantile levels, the spatial lag parameter $\lambda(\tau_{n,i})$ and slope coefficients of predictor $\beta(\tau_{n,i})$ stay invariant for all quantiles.
\end{itemize}
\end{example}

\begin{table}[htbp]
\tiny
\caption{\small{The MedSE of coefficients for four settings with $n=80$, where $F_n$ is $N(0,1)$ in Example 1.}}
\center
    \begin{tabular}{rlrrrrrrrrr} \hline \hline
          &       & \multicolumn{9}{c}{$\tau$ } \\
          &       & 0.1   & 0.2   & 0.3   & 0.4   & 0.5   & 0.6   & 0.7   & 0.8   & 0.9 \\ \hline
          &       & \multicolumn{9}{c}{I: $b = 0.5,~c_0 = 0.1,~c_1 = 0.2$} \\
          & RQ    & 0.7361  & 0.6044  & 0.4944  & 0.4103  & 0.3659  & 0.3810  & 0.4715  & 0.4479  & 0.6859  \\
          & FL    & 0.2756  & 0.2542  & 0.2548  & 0.2642  & 0.2571  & 0.2742  & 0.2832  & 0.2564  & 0.2720  \\
          & FAL   & 0.3082  & 0.2993  & 0.2655  & 0.2611  & 0.2556  & 0.2702  & 0.2778  & 0.2731  & 0.3177  \\
          & FS    & 0.3647  & 0.3356  & 0.3289  & 0.2931  & 0.2825  & 0.2622  & 0.2870  & 0.2950  & 0.3267  \\
          & FAS   & 0.3597  & 0.3283  & 0.3292  & 0.3003  & 0.2678  & 0.2707  & 0.2610  & 0.2767  & 0.3384  \\
          &       & \multicolumn{9}{c}{II: $b = 0.5,~c_0 = 0,~c_1 = 0.2$} \\
          & RQ    & 0.4863  & 0.3018  & 0.2704  & 0.2948  & 0.2433  & 0.2194  & 0.2599  & 0.3044  & 0.4875  \\
          & FL    & 0.1885  & 0.1625  & 0.1525  & 0.1613  & 0.1577  & 0.1516  & 0.1385  & 0.1528  & 0.1799  \\
          & FAL   & 0.1902  & 0.1552  & 0.1605  & 0.1557  & 0.1632  & 0.1484  & 0.1559  & 0.1815  & 0.2003  \\
          & FS    & 0.2509  & 0.2064  & 0.2001  & 0.1995  & 0.1959  & 0.1625  & 0.1783  & 0.1881  & 0.2706  \\
          & FAS   & 0.2645  & 0.2040  & 0.1886  & 0.1868  & 0.1774  & 0.1507  & 0.1630  & 0.1954  & 0.2367  \\
    \multicolumn{1}{l}{$\lambda = 0.2$} &       & \multicolumn{9}{c}{III: $b = 0.5,~c_0 = 0.1,~c_1 = 0$} \\
          & RQ    & 0.6082  & 0.5190  & 0.4522  & 0.3570  & 0.3298  & 0.3224  & 0.4073  & 0.4219  & 0.6336  \\
          & FL    & 0.2356  & 0.2410  & 0.2433  & 0.2404  & 0.2341  & 0.2361  & 0.2233  & 0.2246  & 0.2202  \\
          & FAL   & 0.2535  & 0.2382  & 0.2516  & 0.2496  & 0.2351  & 0.2299  & 0.2084  & 0.2224  & 0.2358  \\
          & FS    & 0.3287  & 0.3094  & 0.3015  & 0.2781  & 0.2522  & 0.2532  & 0.2412  & 0.2585  & 0.3374  \\
          & FAS   & 0.3338  & 0.3151  & 0.2991  & 0.2865  & 0.2684  & 0.2348  & 0.2412  & 0.2575  & 0.3193  \\
          &       & \multicolumn{9}{c}{IV: $b = 0.5,~c_0 = 0,~c_1 = 0$} \\
          & RQ    & 0.4604  & 0.2849  & 0.2614  & 0.2765  & 0.2221  & 0.2123  & 0.2599  & 0.2768  & 0.4917  \\
          & FL    & 0.1702  & 0.1568  & 0.1661  & 0.1520  & 0.1456  & 0.1311  & 0.1318  & 0.1251  & 0.1337  \\
          & FAL   & 0.1638  & 0.1553  & 0.1725  & 0.1491  & 0.1390  & 0.1316  & 0.1419  & 0.1272  & 0.1403  \\
          & FS    & 0.2113  & 0.1920  & 0.1980  & 0.1818  & 0.1717  & 0.1464  & 0.1600  & 0.1896  & 0.2529  \\
          & FAS   & 0.2014  & 0.1863  & 0.1967  & 0.1746  & 0.1630  & 0.1468  & 0.1401  & 0.1644  & 0.2035  \\
          &       &       &       &       &       &       &       &       &       &  \\
          &       & \multicolumn{9}{c}{I: $b = 0.5,~c_0 = 0.1,~c_1 = 0.2$} \\
          & RQ    & 0.8619  & 0.6545  & 0.6133  & 0.6308  & 0.5545  & 0.6513  & 0.6871  & 0.8136  & 1.2680  \\
          & FL    & 0.3197  & 0.3127  & 0.3020  & 0.3299  & 0.3288  & 0.3410  & 0.3485  & 0.3757  & 0.4075  \\
          & FAL   & 0.3056  & 0.2935  & 0.3002  & 0.3272  & 0.3325  & 0.3646  & 0.3793  & 0.3822  & 0.3930  \\
          & FS    & 0.3626  & 0.3233  & 0.3508  & 0.3703  & 0.3601  & 0.3523  & 0.3813  & 0.4055  & 0.4183  \\
          & FAS   & 0.3822  & 0.3284  & 0.3431  & 0.3740  & 0.3676  & 0.3388  & 0.4052  & 0.3930  & 0.4090  \\
          &       & \multicolumn{9}{c}{II: $b = 0.5,~c_0 = 0,~c_1 = 0.2$} \\
          & RQ    & 0.5216  & 0.3391  & 0.2811  & 0.2674  & 0.1991  & 0.2374  & 0.3083  & 0.3619  & 0.5391  \\
          & FL    & 0.1634  & 0.1378  & 0.1383  & 0.1276  & 0.1257  & 0.1463  & 0.1423  & 0.1600  & 0.1860  \\
          & FAL   & 0.1600  & 0.1467  & 0.1465  & 0.1417  & 0.1321  & 0.1530  & 0.1748  & 0.1659  & 0.1995  \\
          & FS    & 0.2198  & 0.2033  & 0.1665  & 0.1686  & 0.1426  & 0.1513  & 0.1499  & 0.1762  & 0.2564  \\
          & FAS   & 0.2373  & 0.1916  & 0.1661  & 0.1537  & 0.1343  & 0.1355  & 0.1567  & 0.1698  & 0.2459  \\
    \multicolumn{1}{l}{$\lambda = 0.5$} &       & \multicolumn{9}{c}{III: $b = 0.5,~c_0 = 0.1,~c_1 = 0$} \\
          & RQ    & 0.7031  & 0.6020  & 0.5903  & 0.5547  & 0.5424  & 0.5815  & 0.6076  & 0.7203  & 1.1287  \\
          & FL    & 0.2948  & 0.2698  & 0.2828  & 0.3058  & 0.3124  & 0.3176  & 0.3219  & 0.3063  & 0.3206  \\
          & FAL   & 0.2652  & 0.2494  & 0.2784  & 0.2868  & 0.3075  & 0.3070  & 0.3200  & 0.3009  & 0.3505  \\
          & FS    & 0.3624  & 0.2901  & 0.2986  & 0.3153  & 0.3089  & 0.3190  & 0.3979  & 0.3637  & 0.3692  \\
          & FAS   & 0.3450  & 0.3150  & 0.3225  & 0.3278  & 0.3305  & 0.3365  & 0.4314  & 0.3971  & 0.4202  \\
          &       & \multicolumn{9}{c}{IV: $b = 0.5,~c_0 = 0,~c_1 = 0$} \\
          & RQ    & 0.4436  & 0.3162  & 0.2644  & 0.2788  & 0.1857  & 0.2006  & 0.2682  & 0.3424  & 0.5146  \\
          & FL    & 0.1260  & 0.1298  & 0.1285  & 0.1286  & 0.1252  & 0.1160  & 0.1215  & 0.1188  & 0.1116  \\
          & FAL   & 0.1328  & 0.1361  & 0.1300  & 0.1278  & 0.1236  & 0.1119  & 0.1106  & 0.1018  & 0.1079  \\
          & FS    & 0.2063  & 0.1656  & 0.1694  & 0.1631  & 0.1238  & 0.1208  & 0.1283  & 0.1475  & 0.1619  \\
          & FAS   & 0.1849  & 0.1435  & 0.1666  & 0.1755  & 0.1341  & 0.1313  & 0.1276  & 0.1491  & 0.1681  \\
          &       &       &       &       &       &       &       &       &       &  \\
          &       & \multicolumn{9}{c}{I: $b = 0.5,~c_0 = 0.1,~c_1 = 0.2$} \\
          & RQ    & 2.7947  & 2.1128  & 2.9254  & 2.8017  & 2.6896  & 2.9395  & 5.0333  & 4.7663  & 7.4942  \\
          & FL    & 0.7116  & 0.7434  & 0.8296  & 0.8624  & 0.8993  & 0.9016  & 0.9921  & 0.9747  & 0.9964  \\
          & FAL   & 0.7940  & 0.8002  & 0.9057  & 0.9735  & 1.0403  & 1.0152  & 1.1523  & 1.0489  & 1.3374  \\
          & FS    & 0.6810  & 0.6872  & 0.7777  & 0.8705  & 0.9891  & 1.1163  & 1.2030  & 1.1275  & 1.2637  \\
          & FAS   & 0.9139  & 0.8063  & 1.0421  & 1.1484  & 1.2156  & 1.2924  & 1.6199  & 1.5367  & 2.1910  \\
          &       & \multicolumn{9}{c}{II: $b = 0.5,~c_0 = 0,~c_1 = 0.2$} \\
          & RQ    & 0.8598  & 0.7497  & 0.5430  & 0.4917  & 0.4068  & 0.3961  & 0.4861  & 0.6886  & 0.7544  \\
          & FL    & 0.1753  & 0.1675  & 0.1722  & 0.1753  & 0.1659  & 0.1627  & 0.1768  & 0.1804  & 0.2270  \\
          & FAL   & 0.1885  & 0.1684  & 0.1818  & 0.1754  & 0.1690  & 0.1612  & 0.1693  & 0.1889  & 0.2147  \\
          & FS    & 0.2630  & 0.1835  & 0.2023  & 0.1778  & 0.1905  & 0.1778  & 0.1899  & 0.2168  & 0.2278  \\
          & FAS   & 0.2948  & 0.2303  & 0.2243  & 0.1874  & 0.2018  & 0.1957  & 0.1873  & 0.2209  & 0.2587  \\
    \multicolumn{1}{l}{$\lambda = 0.8$} &       & \multicolumn{9}{c}{III: $b = 0.5,~c_0 = 0.1,~c_1 = 0$} \\
          & RQ    & 2.6980  & 1.8952  & 2.6252  & 2.2426  & 2.2561  & 2.9997  & 4.4895  & 4.5375  & 7.4040  \\
          & FL    & 0.6922  & 0.6815  & 0.6857  & 0.7192  & 0.7606  & 0.8405  & 0.9076  & 0.8970  & 0.9606  \\
          & FAL   & 0.7531  & 0.7480  & 0.8717  & 0.8814  & 0.9219  & 0.9079  & 1.0385  & 0.9878  & 1.3273  \\
          & FS    & 0.7161  & 0.6420  & 0.6382  & 0.8123  & 0.9543  & 1.0082  & 0.9729  & 0.9486  & 1.1019  \\
          & FAS   & 1.0054  & 0.7510  & 0.9247  & 0.9981  & 1.0529  & 1.2140  & 1.4839  & 1.3804  & 1.9347  \\
          &       & \multicolumn{9}{c}{IV: $b = 0.5,~c_0 = 0,~c_1 = 0$} \\
          & RQ    & 0.7631  & 0.5842  & 0.5760  & 0.4094  & 0.3876  & 0.3912  & 0.4474  & 0.5601  & 0.7211  \\
          & FL    & 0.1413  & 0.1615  & 0.1617  & 0.1731  & 0.1525  & 0.1486  & 0.1377  & 0.1320  & 0.1351  \\
          & FAL   & 0.1609  & 0.1509  & 0.1545  & 0.1463  & 0.1292  & 0.1396  & 0.1279  & 0.1377  & 0.1470  \\
          & FS    & 0.1937  & 0.2005  & 0.1762  & 0.1794  & 0.1474  & 0.1486  & 0.1481  & 0.1439  & 0.1607  \\
          & FAS   & 0.2411  & 0.2293  & 0.2084  & 0.2139  & 0.1892  & 0.1539  & 0.1452  & 0.1638  & 0.1612  \\ \hline
    \end{tabular}%
\label{tab:1}
\end{table}%

\begin{table}[htbp]
\tiny
\caption{\small{The MedSE of coefficients for four settings with $n=120$, where $F_n$ is $N(0,1)$ in Example 1.}}
\center
    \begin{tabular}{rlrrrrrrrrr} \hline \hline
          &       & \multicolumn{9}{c}{$\tau$ } \\
          &       & 0.1   & 0.2   & 0.3   & 0.4   & 0.5   & 0.6   & 0.7   & 0.8   & 0.9 \\ \hline
          &       & \multicolumn{9}{c}{I: $b = 0.5,~c_0 = 0.1,~c_1 = 0.2$} \\
          & RQ    & 0.3559  & 0.3513  & 0.2660  & 0.2685  & 0.2158  & 0.2848  & 0.3249  & 0.3858  & 0.4761  \\
          & FL    & 0.1843  & 0.1782  & 0.1672  & 0.1642  & 0.1666  & 0.1802  & 0.1771  & 0.1977  & 0.2607  \\
          & FAL   & 0.1967  & 0.1787  & 0.1711  & 0.1678  & 0.1617  & 0.1732  & 0.1786  & 0.1872  & 0.2421  \\
          & FS    & 0.2464  & 0.2085  & 0.1920  & 0.1888  & 0.1849  & 0.2163  & 0.2268  & 0.2576  & 0.2756  \\
          & FAS   & 0.2332  & 0.2090  & 0.1842  & 0.1855  & 0.1814  & 0.2087  & 0.2133  & 0.2416  & 0.2691  \\
          &       & \multicolumn{9}{c}{II: $b = 0.5,~c_0 = 0,~c_1 = 0.2$} \\
          & RQ    & 0.3533  & 0.2052  & 0.2014  & 0.1770  & 0.1706  & 0.1809  & 0.1973  & 0.1951  & 0.3334  \\
          & FL    & 0.1335  & 0.1184  & 0.1062  & 0.1274  & 0.1304  & 0.1198  & 0.1364  & 0.1366  & 0.1580  \\
          & FAL   & 0.1437  & 0.1108  & 0.1035  & 0.1274  & 0.1252  & 0.1178  & 0.1293  & 0.1361  & 0.1732  \\
          & FS    & 0.2327  & 0.1541  & 0.1356  & 0.1451  & 0.1404  & 0.1525  & 0.1479  & 0.1609  & 0.1879  \\
          & FAS   & 0.2098  & 0.1421  & 0.1287  & 0.1271  & 0.1321  & 0.1339  & 0.1267  & 0.1676  & 0.1852  \\
    \multicolumn{1}{l}{$\lambda = 0.2$} &       & \multicolumn{9}{c}{III: $b = 0.5,~c_0 = 0.1,~c_1 = 0$} \\
          & RQ    & 0.3350  & 0.2763  & 0.2539  & 0.2537  & 0.1949  & 0.2552  & 0.3135  & 0.3532  & 0.4313  \\
          & FL    & 0.1985  & 0.1680  & 0.1590  & 0.1593  & 0.1540  & 0.1562  & 0.1498  & 0.1703  & 0.1812  \\
          & FAL   & 0.1865  & 0.1712  & 0.1598  & 0.1710  & 0.1548  & 0.1512  & 0.1533  & 0.1737  & 0.2025  \\
          & FS    & 0.2329  & 0.1950  & 0.1785  & 0.1763  & 0.1631  & 0.1755  & 0.1875  & 0.2222  & 0.2546  \\
          & FAS   & 0.2114  & 0.1872  & 0.1728  & 0.1719  & 0.1588  & 0.1691  & 0.1825  & 0.1982  & 0.2368  \\
          &       & \multicolumn{9}{c}{IV: $b = 0.5,~c_0 = 0,~c_1 = 0$} \\
          & RQ    & 0.3119  & 0.1875  & 0.1676  & 0.1673  & 0.1464  & 0.1622  & 0.1683  & 0.1868  & 0.2931  \\
          & FL    & 0.1114  & 0.1036  & 0.1083  & 0.1129  & 0.1064  & 0.1049  & 0.1043  & 0.1054  & 0.1077  \\
          & FAL   & 0.1254  & 0.1064  & 0.1041  & 0.1063  & 0.1044  & 0.1033  & 0.1018  & 0.0977  & 0.1061  \\
          & FS    & 0.1921  & 0.1490  & 0.1384  & 0.1419  & 0.1351  & 0.1275  & 0.1207  & 0.1485  & 0.1499  \\
          & FAS   & 0.1775  & 0.1372  & 0.1309  & 0.1293  & 0.1244  & 0.1192  & 0.1137  & 0.1271  & 0.1504  \\
          &       &       &       &       &       &       &       &       &       &  \\
          &       & \multicolumn{9}{c}{I: $b = 0.5,~c_0 = 0.1,~c_1 = 0.2$} \\
          & RQ    & 0.5240  & 0.3687  & 0.3572  & 0.4004  & 0.3899  & 0.3904  & 0.4635  & 0.5392  & 0.7573  \\
          & FL    & 0.1722  & 0.1889  & 0.1659  & 0.1851  & 0.1926  & 0.2211  & 0.2342  & 0.2820  & 0.3321  \\
          & FAL   & 0.1893  & 0.1705  & 0.1711  & 0.1842  & 0.1871  & 0.2184  & 0.2305  & 0.2576  & 0.3221  \\
          & FS    & 0.2243  & 0.2286  & 0.2279  & 0.2273  & 0.2312  & 0.2485  & 0.2510  & 0.2818  & 0.3575  \\
          & FAS   & 0.2317  & 0.2246  & 0.2263  & 0.2281  & 0.2164  & 0.2497  & 0.2481  & 0.3052  & 0.3800  \\
          &       & \multicolumn{9}{c}{II: $b = 0.5,~c_0 = 0,~c_1 = 0.2$} \\
          & RQ    & 0.3315  & 0.2482  & 0.1902  & 0.1945  & 0.1700  & 0.1948  & 0.2191  & 0.2320  & 0.3915  \\
          & FL    & 0.1292  & 0.1097  & 0.0990  & 0.1151  & 0.1045  & 0.0978  & 0.1096  & 0.1164  & 0.1495  \\
          & FAL   & 0.1239  & 0.1134  & 0.1115  & 0.1120  & 0.1063  & 0.0999  & 0.1071  & 0.1210  & 0.1477  \\
          & FS    & 0.1698  & 0.1374  & 0.1096  & 0.1249  & 0.1290  & 0.1279  & 0.1408  & 0.1571  & 0.1871  \\
          & FAS   & 0.1699  & 0.1407  & 0.1135  & 0.1217  & 0.1250  & 0.1333  & 0.1384  & 0.1465  & 0.1838  \\
    \multicolumn{1}{l}{$\lambda = 0.5$} &       & \multicolumn{9}{c}{III: $b = 0.5,~c_0 = 0.1,~c_1 = 0$} \\
          & RQ    & 0.3315  & 0.2482  & 0.1902  & 0.1945  & 0.1700  & 0.1948  & 0.2191  & 0.2320  & 0.3915  \\
          & FL    & 0.1292  & 0.1097  & 0.0990  & 0.1151  & 0.1045  & 0.0978  & 0.1096  & 0.1164  & 0.1495  \\
          & FAL   & 0.1239  & 0.1134  & 0.1115  & 0.1120  & 0.1063  & 0.0999  & 0.1071  & 0.1210  & 0.1477  \\
          & FS    & 0.1698  & 0.1374  & 0.1096  & 0.1249  & 0.1290  & 0.1279  & 0.1408  & 0.1571  & 0.1871  \\
          & FAS   & 0.1699  & 0.1407  & 0.1135  & 0.1217  & 0.1250  & 0.1333  & 0.1384  & 0.1465  & 0.1838  \\
          &       & \multicolumn{9}{c}{IV: $b = 0.5,~c_0 = 0,~c_1 = 0$} \\
          & RQ    & 0.3433  & 0.2289  & 0.1612  & 0.1706  & 0.1620  & 0.1791  & 0.1514  & 0.2094  & 0.3569  \\
          & FL    & 0.1089  & 0.1053  & 0.1031  & 0.1016  & 0.0951  & 0.0903  & 0.0932  & 0.0880  & 0.0955  \\
          & FAL   & 0.1074  & 0.0952  & 0.0912  & 0.0901  & 0.0871  & 0.0876  & 0.0823  & 0.0853  & 0.0913  \\
          & FS    & 0.1463  & 0.1321  & 0.1218  & 0.1238  & 0.1291  & 0.1090  & 0.1139  & 0.1370  & 0.1634  \\
          & FAS   & 0.1457  & 0.1295  & 0.1200  & 0.1277  & 0.1187  & 0.1102  & 0.1039  & 0.1128  & 0.1380  \\
          &       &       &       &       &       &       &       &       &       &  \\
          &       & \multicolumn{9}{c}{I: $b = 0.5,~c_0 = 0.1,~c_1 = 0.2$} \\
          & RQ    & 1.5669  & 1.2887  & 1.5433  & 2.0641  & 1.6935  & 2.1308  & 2.2500  & 3.2445  & 5.9037  \\
          & FL    & 0.4692  & 0.4579  & 0.4548  & 0.5164  & 0.5354  & 0.5729  & 0.5716  & 0.6037  & 0.6327  \\
          & FAL   & 0.5101  & 0.4693  & 0.5164  & 0.4804  & 0.5447  & 0.6434  & 0.7339  & 0.8091  & 1.0014  \\
          & FS    & 0.4865  & 0.4321  & 0.5028  & 0.5512  & 0.5526  & 0.5393  & 0.5949  & 0.6180  & 0.7059  \\
          & FAS   & 0.7185  & 0.5167  & 0.6073  & 0.6802  & 0.7519  & 0.8259  & 0.8542  & 0.8941  & 1.3331  \\
          &       & \multicolumn{9}{c}{II: $b = 0.5,~c_0 = 0,~c_1 = 0.2$} \\
          & RQ    & 0.6257  & 0.4470  & 0.3910  & 0.3233  & 0.3265  & 0.3618  & 0.3657  & 0.4479  & 0.5828  \\
          & FL    & 0.1406  & 0.1187  & 0.1095  & 0.1188  & 0.1087  & 0.1208  & 0.1145  & 0.1191  & 0.1525  \\
          & FAL   & 0.1440  & 0.1228  & 0.1184  & 0.1160  & 0.1027  & 0.1030  & 0.1027  & 0.1184  & 0.1604  \\
          & FS    & 0.1870  & 0.1806  & 0.1540  & 0.1364  & 0.1248  & 0.1224  & 0.1358  & 0.1453  & 0.1754  \\
          & FAS   & 0.1788  & 0.1878  & 0.1607  & 0.1494  & 0.1404  & 0.1639  & 0.1647  & 0.1623  & 0.2275  \\
    \multicolumn{1}{l}{$\lambda = 0.8$} &       & \multicolumn{9}{c}{III: $b = 0.5,~c_0 = 0.1,~c_1 = 0$} \\
          & RQ    & 1.6366  & 1.1917  & 1.4157  & 1.7806  & 1.8727  & 1.7215  & 2.2950  & 3.1020  & 5.3243  \\
          & FL    & 0.4367  & 0.3893  & 0.4051  & 0.4420  & 0.4739  & 0.5236  & 0.4881  & 0.5352  & 0.5701  \\
          & FAL   & 0.3723  & 0.3819  & 0.4048  & 0.4543  & 0.4851  & 0.5592  & 0.6055  & 0.6587  & 0.8423  \\
          & FS    & 0.4462  & 0.4234  & 0.4423  & 0.4898  & 0.5182  & 0.5163  & 0.5933  & 0.6462  & 0.8291  \\
          & FAS   & 0.6126  & 0.5321  & 0.6437  & 0.6857  & 0.7725  & 0.7167  & 0.8832  & 0.8884  & 1.3151  \\
          &       & \multicolumn{9}{c}{IV: $b = 0.5,~c_0 = 0,~c_1 = 0$} \\
          & RQ    & 0.5818  & 0.4403  & 0.3215  & 0.3192  & 0.3845  & 0.4038  & 0.3432  & 0.4179  & 0.5668  \\
          & FL    & 0.1227  & 0.1202  & 0.1136  & 0.1172  & 0.1126  & 0.1268  & 0.0984  & 0.1042  & 0.1371  \\
          & FAL   & 0.1345  & 0.1229  & 0.1190  & 0.1201  & 0.1190  & 0.1145  & 0.1119  & 0.1043  & 0.1232  \\
          & FS    & 0.2039  & 0.1619  & 0.1332  & 0.1567  & 0.1457  & 0.1466  & 0.1315  & 0.1428  & 0.1944  \\
          & FAS   & 0.1964  & 0.1723  & 0.1472  & 0.1477  & 0.1520  & 0.1409  & 0.1577  & 0.1531  & 0.2016  \\ \hline
    \end{tabular}%
\label{tab:2}
\end{table}%

\begin{table}[htbp]
\tiny
\caption{\small{The MedSE of coefficients for four settings with $n=160$, where $F_n$ is $N(0,1)$ in Example 1.}}
\center
    \begin{tabular}{rlrrrrrrrrr} \hline \hline
          &       & \multicolumn{9}{c}{$\tau$ } \\
          &       & 0.1   & 0.2   & 0.3   & 0.4   & 0.5   & 0.6   & 0.7   & 0.8   & 0.9 \\ \hline
          &       & \multicolumn{9}{c}{I: $b = 0.5,~c_0 = 0.1,~c_1 = 0.2$} \\
          & RQ    & 0.3437  & 0.2256  & 0.2254  & 0.2385  & 0.1975  & 0.1899  & 0.2270  & 0.2782  & 0.3245  \\
          & FL    & 0.1485  & 0.1281  & 0.1327  & 0.1272  & 0.1364  & 0.1458  & 0.1706  & 0.1746  & 0.2114  \\
          & FAL   & 0.1565  & 0.1333  & 0.1387  & 0.1278  & 0.1326  & 0.1503  & 0.1633  & 0.1740  & 0.1891  \\
          & FS    & 0.1992  & 0.1755  & 0.1734  & 0.1562  & 0.1471  & 0.1532  & 0.1735  & 0.1889  & 0.2498  \\
          & FAS   & 0.1933  & 0.1710  & 0.1622  & 0.1401  & 0.1324  & 0.1523  & 0.1741  & 0.1798  & 0.2157  \\
          &       & \multicolumn{9}{c}{II: $b = 0.5,~c_0 = 0,~c_1 = 0.2$} \\
          & RQ    & 0.1972  & 0.1490  & 0.1259  & 0.1390  & 0.1386  & 0.1400  & 0.1325  & 0.1638  & 0.2194  \\
          & FL    & 0.1102  & 0.0907  & 0.0902  & 0.0893  & 0.0898  & 0.0939  & 0.1025  & 0.1195  & 0.1422  \\
          & FAL   & 0.1016  & 0.0845  & 0.0888  & 0.0842  & 0.0873  & 0.0930  & 0.0952  & 0.1147  & 0.1566  \\
          & FS    & 0.1508  & 0.1098  & 0.1040  & 0.1023  & 0.1077  & 0.1161  & 0.1017  & 0.1336  & 0.1645  \\
          & FAS   & 0.1145  & 0.1010  & 0.1031  & 0.1066  & 0.0977  & 0.1009  & 0.1075  & 0.1230  & 0.1563  \\
    \multicolumn{1}{l}{$\lambda = 0.2$} &       & \multicolumn{9}{c}{III: $b = 0.5,~c_0 = 0.1,~c_1 = 0$} \\
          & RQ    & 0.2865  & 0.2018  & 0.2026  & 0.2024  & 0.1758  & 0.1634  & 0.2064  & 0.2612  & 0.3205  \\
          & FL    & 0.1287  & 0.1203  & 0.1266  & 0.1172  & 0.1135  & 0.1220  & 0.1274  & 0.1327  & 0.1368  \\
          & FAL   & 0.1394  & 0.1326  & 0.1301  & 0.1168  & 0.1169  & 0.1218  & 0.1314  & 0.1504  & 0.1565  \\
          & FS    & 0.1724  & 0.1580  & 0.1524  & 0.1481  & 0.1399  & 0.1348  & 0.1496  & 0.1791  & 0.2182  \\
          & FAS   & 0.1601  & 0.1432  & 0.1307  & 0.1345  & 0.1305  & 0.1290  & 0.1590  & 0.1671  & 0.1804  \\
          &       & \multicolumn{9}{c}{IV: $b = 0.5,~c_0 = 0,~c_1 = 0$} \\
          & RQ    & 0.1925  & 0.1372  & 0.1247  & 0.1156  & 0.1185  & 0.1212  & 0.1156  & 0.1461  & 0.1867  \\
          & FL    & 0.0866  & 0.0764  & 0.0797  & 0.0830  & 0.0767  & 0.0796  & 0.0812  & 0.0841  & 0.0878  \\
          & FAL   & 0.0890  & 0.0820  & 0.0880  & 0.0836  & 0.0796  & 0.0806  & 0.0774  & 0.0808  & 0.0876  \\
          & FS    & 0.1157  & 0.1064  & 0.0972  & 0.0973  & 0.0914  & 0.1027  & 0.0906  & 0.1033  & 0.1302  \\
          & FAS   & 0.1003  & 0.0956  & 0.0954  & 0.0890  & 0.0871  & 0.0965  & 0.0865  & 0.0949  & 0.1173  \\
          &       &       &       &       &       &       &       &       &       &  \\
          &       & \multicolumn{9}{c}{I: $b = 0.5,~c_0 = 0.1,~c_1 = 0.2$} \\
          & RQ    & 0.4352  & 0.2862  & 0.2398  & 0.2818  & 0.3030  & 0.2907  & 0.3564  & 0.3932  & 0.6359  \\
          & FL    & 0.2031  & 0.1631  & 0.1549  & 0.1505  & 0.1808  & 0.1740  & 0.2127  & 0.2080  & 0.2757  \\
          & FAL   & 0.2180  & 0.1648  & 0.1599  & 0.1618  & 0.1817  & 0.1846  & 0.1991  & 0.2043  & 0.2581  \\
          & FS    & 0.2384  & 0.1962  & 0.1954  & 0.1632  & 0.1876  & 0.2101  & 0.2085  & 0.2414  & 0.3055  \\
          & FAS   & 0.2513  & 0.1959  & 0.1942  & 0.1764  & 0.1775  & 0.2047  & 0.2122  & 0.2683  & 0.3134  \\
          &       & \multicolumn{9}{c}{II: $b = 0.5,~c_0 = 0,~c_1 = 0.2$} \\
          & RQ    & 0.2368  & 0.1599  & 0.1364  & 0.1332  & 0.1534  & 0.1362  & 0.1397  & 0.1864  & 0.2586  \\
          & FL    & 0.1005  & 0.0894  & 0.0842  & 0.0866  & 0.0892  & 0.0858  & 0.0842  & 0.0948  & 0.1325  \\
          & FAL   & 0.0977  & 0.0753  & 0.0690  & 0.0852  & 0.0842  & 0.0804  & 0.0828  & 0.0911  & 0.1098  \\
          & FS    & 0.1275  & 0.1158  & 0.0987  & 0.0888  & 0.1006  & 0.1008  & 0.1075  & 0.1076  & 0.1293  \\
          & FAS   & 0.1155  & 0.0959  & 0.0859  & 0.0851  & 0.1004  & 0.0977  & 0.0910  & 0.0969  & 0.1247  \\
    \multicolumn{1}{l}{$\lambda = 0.5$} &       & \multicolumn{9}{c}{III: $b = 0.5,~c_0 = 0.1,~c_1 = 0$} \\
          & RQ    & 0.4004  & 0.2785  & 0.1963  & 0.2508  & 0.2587  & 0.2611  & 0.3359  & 0.3901  & 0.5623  \\
          & FL    & 0.1521  & 0.1527  & 0.1438  & 0.1424  & 0.1374  & 0.1513  & 0.1643  & 0.1609  & 0.1947  \\
          & FAL   & 0.1672  & 0.1516  & 0.1475  & 0.1608  & 0.1610  & 0.1551  & 0.1724  & 0.1859  & 0.2108  \\
          & FS    & 0.2210  & 0.1960  & 0.1694  & 0.1639  & 0.1759  & 0.1808  & 0.2082  & 0.2332  & 0.3037  \\
          & FAS   & 0.2110  & 0.1911  & 0.1730  & 0.1736  & 0.1660  & 0.1744  & 0.2113  & 0.2273  & 0.2946  \\
          &       & \multicolumn{9}{c}{IV: $b = 0.5,~c_0 = 0,~c_1 = 0$} \\
          & RQ    & 0.2159  & 0.1234  & 0.1371  & 0.1334  & 0.1378  & 0.1402  & 0.1153  & 0.1499  & 0.2254  \\
          & FL    & 0.0870  & 0.0872  & 0.0798  & 0.0853  & 0.0901  & 0.0764  & 0.0734  & 0.0724  & 0.0724  \\
          & FAL   & 0.0843  & 0.0810  & 0.0770  & 0.0849  & 0.0862  & 0.0837  & 0.0808  & 0.0783  & 0.0836  \\
          & FS    & 0.1129  & 0.0934  & 0.0909  & 0.0907  & 0.0934  & 0.0934  & 0.0814  & 0.1005  & 0.1218  \\
          & FAS   & 0.0982  & 0.0806  & 0.0804  & 0.0877  & 0.0906  & 0.0851  & 0.0791  & 0.0862  & 0.1001  \\
          &       &       &       &       &       &       &       &       &       &  \\
          &       & \multicolumn{9}{c}{I: $b = 0.5,~c_0 = 0.1,~c_1 = 0.2$} \\
          & RQ    & 1.1799  & 1.2027  & 1.0979  & 0.9894  & 1.4687  & 1.8547  & 1.8186  & 3.1203  & 4.5802  \\
          & FL    & 0.3876  & 0.3864  & 0.3701  & 0.3691  & 0.4234  & 0.4567  & 0.4797  & 0.5398  & 0.6090  \\
          & FAL   & 0.4029  & 0.3517  & 0.3300  & 0.3589  & 0.4064  & 0.4873  & 0.4982  & 0.5990  & 0.7033  \\
          & FS    & 0.4732  & 0.4102  & 0.3580  & 0.4630  & 0.4572  & 0.5216  & 0.5664  & 0.6272  & 0.7437  \\
          & FAS   & 0.6155  & 0.4834  & 0.4407  & 0.5220  & 0.5513  & 0.5772  & 0.7003  & 0.8905  & 1.2072  \\
          &       & \multicolumn{9}{c}{II: $b = 0.5,~c_0 = 0,~c_1 = 0.2$} \\
          & RQ    & 0.4043  & 0.3259  & 0.2624  & 0.2006  & 0.2581  & 0.2527  & 0.2921  & 0.3525  & 0.5165  \\
          & FL    & 0.1111  & 0.1071  & 0.1089  & 0.0884  & 0.0888  & 0.0924  & 0.0985  & 0.0909  & 0.1234  \\
          & FAL   & 0.1027  & 0.0878  & 0.0871  & 0.0870  & 0.0842  & 0.0710  & 0.0909  & 0.0849  & 0.1275  \\
          & FS    & 0.1002  & 0.1102  & 0.1060  & 0.1001  & 0.1057  & 0.1123  & 0.1158  & 0.1090  & 0.1593  \\
          & FAS   & 0.1297  & 0.1064  & 0.1113  & 0.1026  & 0.1086  & 0.1153  & 0.1112  & 0.1208  & 0.1650  \\
    \multicolumn{1}{l}{$\lambda = 0.8$} &       & \multicolumn{9}{c}{III: $b = 0.5,~c_0 = 0.1,~c_1 = 0$} \\
          & RQ    & 1.1699  & 1.2744  & 1.0082  & 0.9725  & 1.4371  & 1.6950  & 1.5545  & 3.0615  & 4.4454  \\
          & FL    & 0.3115  & 0.2944  & 0.2890  & 0.3566  & 0.3889  & 0.4022  & 0.4583  & 0.5093  & 0.5552  \\
          & FAL   & 0.2920  & 0.3035  & 0.2906  & 0.3184  & 0.3559  & 0.3909  & 0.4351  & 0.5183  & 0.6824  \\
          & FS    & 0.3617  & 0.3115  & 0.3572  & 0.4142  & 0.3957  & 0.4945  & 0.5454  & 0.6196  & 0.6528  \\
          & FAS   & 0.4991  & 0.4732  & 0.4463  & 0.4486  & 0.4314  & 0.5287  & 0.5921  & 0.8962  & 1.0627  \\
          &       & \multicolumn{9}{c}{IV: $b = 0.5,~c_0 = 0,~c_1 = 0$} \\
          & RQ    & 0.3704  & 0.2795  & 0.2318  & 0.1936  & 0.2401  & 0.2107  & 0.2850  & 0.3041  & 0.4709  \\
          & FL    & 0.0936  & 0.0876  & 0.0824  & 0.0857  & 0.0801  & 0.0719  & 0.0787  & 0.0726  & 0.0793  \\
          & FAL   & 0.0895  & 0.0812  & 0.0789  & 0.0794  & 0.0750  & 0.0734  & 0.0694  & 0.0584  & 0.0693  \\
          & FS    & 0.1178  & 0.1148  & 0.1109  & 0.0899  & 0.0902  & 0.0788  & 0.0919  & 0.1042  & 0.1324  \\
          & FAS   & 0.1266  & 0.1008  & 0.1055  & 0.0834  & 0.0963  & 0.0997  & 0.0979  & 0.1052  & 0.1416  \\ \\ \hline
    \end{tabular}%
\label{tab:3}
\end{table}%

Table \ref{tab:1}-\ref{tab:3} report the MedSE of coefficients for four settings where $F_n$ are considered as $N(0,1)$. Apparently, FL, FAL, FS and FAS methods have smaller MedSE than the conventional RQ method. Among them, FAL or FL performs best. Higher autocorrelation in the SQAR models brings larger MedSE. And with the increasing of sample size, the MedSE become smaller.
Observing the performance for four settings, when $\lambda$ and $\beta$ are invariant for all quantiles, the penalized estimation methods have the smallest MedSE; when $\lambda$ varies across quantile levels, the MedSE were little affected, no matter whether the $\beta$ is varying or not.

\begin{example}
To explore the influence of the $F_n$ distribution, we consider $F_n$ as $t(3)$ distribution in this example. Other settings are as same
as Example 1.
\end{example}
All the results are shown in Table \ref{tab:4}-\ref{tab:6}. Compared with Example 1, the performance of five methods become worse, especially the conventional RQ method.

\begin{table}[htbp]
\tiny
\caption{\small{The MedSE of coefficients for four settings with $n=80$, where $F_n$ is $t(3)$ in Example 2.}}
\center
    \begin{tabular}{rlrrrrrrrrr} \hline \hline
          &       & \multicolumn{9}{c}{$\tau$ } \\
          &       & 0.1   & 0.2   & 0.3   & 0.4   & 0.5   & 0.6   & 0.7   & 0.8   & 0.9 \\ \hline
          &       & \multicolumn{9}{c}{I: $b = 0.5,~c_0 = 0.1,~c_1 = 0.2$} \\
          & RQ    & 0.9463  & 0.6905  & 0.6112  & 0.4751  & 0.4608  & 0.5056  & 0.5749  & 0.5150  & 0.8897  \\
          & FL    & 0.3171  & 0.3123  & 0.2972  & 0.2714  & 0.2998  & 0.3217  & 0.3554  & 0.3221  & 0.3635  \\
          & FAL   & 0.3250  & 0.3385  & 0.3135  & 0.2978  & 0.3242  & 0.3289  & 0.3599  & 0.3862  & 0.3840  \\
          & FS    & 0.4135  & 0.4274  & 0.3887  & 0.3657  & 0.3687  & 0.3722  & 0.3765  & 0.3420  & 0.4699  \\
          & FAS   & 0.4240  & 0.4209  & 0.3764  & 0.3665  & 0.3608  & 0.3514  & 0.3730  & 0.3537  & 0.4267  \\
          &       & \multicolumn{9}{c}{II: $b = 0.5,~c_0 = 0,~c_1 = 0.2$} \\
          & RQ    & 0.5580  & 0.3408  & 0.3053  & 0.3099  & 0.2482  & 0.2472  & 0.2909  & 0.3254  & 0.4617  \\
          & FL    & 0.1949  & 0.1621  & 0.1686  & 0.1763  & 0.1823  & 0.1820  & 0.1830  & 0.1881  & 0.2157  \\
          & FAL   & 0.2013  & 0.1549  & 0.1644  & 0.1731  & 0.1754  & 0.1765  & 0.1878  & 0.1843  & 0.1933  \\
          & FS    & 0.2917  & 0.2272  & 0.2267  & 0.2136  & 0.1907  & 0.1982  & 0.1857  & 0.2158  & 0.2823  \\
          & FAS   & 0.2947  & 0.2145  & 0.2173  & 0.2185  & 0.1897  & 0.1941  & 0.2047  & 0.2210  & 0.2737  \\
    \multicolumn{1}{l}{$\lambda = 0.2$} &       & \multicolumn{9}{c}{III: $b = 0.5,~c_0 = 0.1,~c_1 = 0$} \\
          & RQ    & 0.7528  & 0.6191  & 0.5385  & 0.4096  & 0.4338  & 0.4160  & 0.5480  & 0.4863  & 0.6749  \\
          & FL    & 0.3001  & 0.3052  & 0.3056  & 0.2909  & 0.2579  & 0.2538  & 0.2722  & 0.2625  & 0.2995  \\
          & FAL   & 0.2914  & 0.2994  & 0.2899  & 0.2813  & 0.2751  & 0.2657  & 0.2614  & 0.2641  & 0.2733  \\
          & FS    & 0.3966  & 0.3516  & 0.3631  & 0.3254  & 0.3099  & 0.3101  & 0.3136  & 0.2948  & 0.3602  \\
          & FAS   & 0.3884  & 0.3876  & 0.3481  & 0.3098  & 0.2958  & 0.3178  & 0.3250  & 0.3269  & 0.3552  \\
          &       & \multicolumn{9}{c}{IV: $b = 0.5,~c_0 = 0,~c_1 = 0$} \\
          & RQ    & 0.5243  & 0.2756  & 0.2604  & 0.2807  & 0.2193  & 0.2225  & 0.2736  & 0.3078  & 0.4781  \\
          & FL    & 0.1504  & 0.1452  & 0.1547  & 0.1565  & 0.1535  & 0.1378  & 0.1319  & 0.1266  & 0.1257  \\
          & FAL   & 0.1608  & 0.1511  & 0.1552  & 0.1466  & 0.1512  & 0.1512  & 0.1498  & 0.1434  & 0.1590  \\
          & FS    & 0.2565  & 0.1778  & 0.1978  & 0.2169  & 0.1954  & 0.1693  & 0.1648  & 0.1711  & 0.2379  \\
          & FAS   & 0.2359  & 0.1791  & 0.1969  & 0.2011  & 0.1849  & 0.1588  & 0.1381  & 0.1776  & 0.2206  \\
          &       &       &       &       &       &       &       &       &       &  \\
          &       & \multicolumn{9}{c}{I: $b = 0.5,~c_0 = 0.1,~c_1 = 0.2$} \\
          & RQ    & 1.1589  & 0.9232  & 0.8547  & 0.8133  & 0.7533  & 0.8164  & 1.0451  & 1.2126  & 1.9087  \\
          & FL    & 0.4223  & 0.4224  & 0.4194  & 0.4478  & 0.4443  & 0.4732  & 0.4551  & 0.4439  & 0.4976  \\
          & FAL   & 0.4094  & 0.4163  & 0.4104  & 0.4762  & 0.4448  & 0.5036  & 0.4843  & 0.4590  & 0.5056  \\
          & FS    & 0.4645  & 0.4472  & 0.4872  & 0.4819  & 0.4737  & 0.4915  & 0.5183  & 0.5250  & 0.5645  \\
          & FAS   & 0.5026  & 0.4251  & 0.4910  & 0.5114  & 0.4871  & 0.5441  & 0.5629  & 0.5339  & 0.6779  \\
          &       & \multicolumn{9}{c}{II: $b = 0.5,~c_0 = 0,~c_1 = 0.2$} \\
          & RQ    & 0.5790  & 0.3768  & 0.3172  & 0.2952  & 0.2585  & 0.2803  & 0.2697  & 0.3597  & 0.5286  \\
          & FL    & 0.2036  & 0.1673  & 0.1550  & 0.1606  & 0.1552  & 0.1471  & 0.1521  & 0.1645  & 0.1947  \\
          & FAL   & 0.1784  & 0.1540  & 0.1587  & 0.1504  & 0.1515  & 0.1423  & 0.1573  & 0.1847  & 0.2267  \\
          & FS    & 0.2299  & 0.2068  & 0.2110  & 0.1969  & 0.1785  & 0.1827  & 0.1720  & 0.2095  & 0.2617  \\
          & FAS   & 0.2409  & 0.1891  & 0.2126  & 0.1800  & 0.1895  & 0.1755  & 0.1746  & 0.2045  & 0.2548  \\
    \multicolumn{1}{l}{$\lambda = 0.5$} &       & \multicolumn{9}{c}{III: $b = 0.5,~c_0 = 0.1,~c_1 = 0$} \\
          & RQ    & 0.9851  & 0.7831  & 0.7467  & 0.6325  & 0.6333  & 0.7397  & 0.9280  & 1.0139  & 1.7633  \\
          & FL    & 0.3679  & 0.3601  & 0.3656  & 0.3699  & 0.3969  & 0.4188  & 0.3714  & 0.3458  & 0.3680  \\
          & FAL   & 0.4052  & 0.3703  & 0.3872  & 0.3933  & 0.3893  & 0.4176  & 0.3914  & 0.4079  & 0.4725  \\
          & FS    & 0.4579  & 0.3823  & 0.4248  & 0.4117  & 0.4532  & 0.4551  & 0.5116  & 0.4774  & 0.4883  \\
          & FAS   & 0.4613  & 0.4257  & 0.4425  & 0.4378  & 0.4305  & 0.4906  & 0.5226  & 0.5170  & 0.5758  \\
          &       & \multicolumn{9}{c}{IV: $b = 0.5,~c_0 = 0,~c_1 = 0$} \\
          & RQ    & 0.4993  & 0.3060  & 0.2737  & 0.2614  & 0.2150  & 0.2355  & 0.2818  & 0.3435  & 0.5247  \\
          & FL    & 0.1549  & 0.1537  & 0.1409  & 0.1399  & 0.1343  & 0.1261  & 0.1197  & 0.1343  & 0.1408  \\
          & FAL   & 0.1352  & 0.1357  & 0.1387  & 0.1372  & 0.1350  & 0.1240  & 0.1226  & 0.1238  & 0.1394  \\
          & FS    & 0.1957  & 0.1678  & 0.1721  & 0.1798  & 0.1433  & 0.1365  & 0.1442  & 0.1754  & 0.1941  \\
          & FAS   & 0.1831  & 0.1583  & 0.1653  & 0.1698  & 0.1297  & 0.1379  & 0.1398  & 0.1424  & 0.1728  \\
          &       &       &       &       &       &       &       &       &       &  \\
          &       & \multicolumn{9}{c}{I: $b = 0.5,~c_0 = 0.1,~c_1 = 0.2$} \\
          & RQ    & 3.1306  & 2.6236  & 3.5506  & 3.7609  & 3.7372  & 4.0245  & 4.7277  & 8.1941  & 13.3519  \\
          & FL    & 1.1891  & 1.0885  & 1.1833  & 1.2415  & 1.2536  & 1.1719  & 1.3045  & 1.3128  & 1.4532  \\
          & FAL   & 1.2363  & 1.1497  & 1.2735  & 1.3569  & 1.3353  & 1.3351  & 1.6303  & 1.6757  & 2.0037  \\
          & FS    & 1.1303  & 1.0580  & 1.1879  & 1.2767  & 1.2881  & 1.3063  & 1.4461  & 1.4823  & 1.8388  \\
          & FAS   & 1.5815  & 1.3213  & 1.6268  & 1.6415  & 1.7149  & 1.7599  & 2.1459  & 2.2368  & 3.4451  \\
          &       & \multicolumn{9}{c}{II: $b = 0.5,~c_0 = 0,~c_1 = 0.2$} \\
          & RQ    & 0.9571  & 0.8117  & 0.5327  & 0.5378  & 0.4504  & 0.4627  & 0.5944  & 0.8268  & 0.7881  \\
          & FL    & 0.1939  & 0.1626  & 0.1721  & 0.1939  & 0.1980  & 0.1942  & 0.2120  & 0.2194  & 0.2460  \\
          & FAL   & 0.1717  & 0.1677  & 0.1906  & 0.1963  & 0.1926  & 0.1911  & 0.2072  & 0.1898  & 0.2068  \\
          & FS    & 0.2135  & 0.2110  & 0.1916  & 0.1872  & 0.2216  & 0.2128  & 0.1993  & 0.2416  & 0.2651  \\
          & FAS   & 0.2907  & 0.2728  & 0.2332  & 0.2454  & 0.2388  & 0.2136  & 0.2033  & 0.2381  & 0.3052  \\
    \multicolumn{1}{l}{$\lambda = 0.8$} &       & \multicolumn{9}{c}{III: $b = 0.5,~c_0 = 0.1,~c_1 = 0$} \\
          & RQ    & 3.3983  & 2.8706  & 3.6200  & 3.4432  & 3.4382  & 4.3822  & 4.4855  & 6.9316  & 13.4683  \\
          & FL    & 1.0305  & 0.9830  & 1.0662  & 1.1598  & 1.2005  & 1.1905  & 1.2710  & 1.2058  & 1.3128  \\
          & FAL   & 1.0790  & 1.0557  & 1.0053  & 1.1204  & 1.1138  & 1.1969  & 1.5089  & 1.6264  & 1.8188  \\
          & FS    & 1.0290  & 1.0074  & 1.0765  & 1.2595  & 1.2421  & 1.3246  & 1.4660  & 1.4274  & 1.7690  \\
          & FAS   & 1.2692  & 1.0785  & 1.1312  & 1.5628  & 1.5772  & 1.8234  & 2.0001  & 2.3276  & 3.0492  \\
          &       & \multicolumn{9}{c}{IV: $b = 0.5,~c_0 = 0,~c_1 = 0$} \\
          & RQ    & 0.8900  & 0.6533  & 0.5951  & 0.5080  & 0.4479  & 0.4311  & 0.4449  & 0.6375  & 0.8158  \\
          & FL    & 0.1756  & 0.1642  & 0.1708  & 0.1635  & 0.1642  & 0.1618  & 0.1384  & 0.1416  & 0.1529  \\
          & FAL   & 0.1676  & 0.1656  & 0.1673  & 0.1656  & 0.1640  & 0.1435  & 0.1501  & 0.1462  & 0.1544  \\
          & FS    & 0.2049  & 0.1955  & 0.1875  & 0.1706  & 0.1860  & 0.1656  & 0.1288  & 0.1680  & 0.1683  \\
          & FAS   & 0.2388  & 0.2419  & 0.2177  & 0.2168  & 0.2263  & 0.1724  & 0.1408  & 0.2111  & 0.2362  \\ \\ \hline
    \end{tabular}%
\label{tab:4}
\end{table}%

\begin{table}[htbp]
\tiny
\caption{\small{The MedSE of coefficients for four settings with $n=120$, where $F_n$ is $t(3)$ in Example 2.}}
\center
    \begin{tabular}{rlrrrrrrrrr} \hline \hline
          &       & \multicolumn{9}{c}{$\tau$ } \\
          &       & 0.1   & 0.2   & 0.3   & 0.4   & 0.5   & 0.6   & 0.7   & 0.8   & 0.9 \\ \hline
          &       & \multicolumn{9}{c}{I: $b = 0.5,~c_0 = 0.1,~c_1 = 0.2$} \\
          & RQ    & 0.4104  & 0.4361  & 0.3280  & 0.2880  & 0.2693  & 0.3108  & 0.3802  & 0.4710  & 0.5317  \\
          & FL    & 0.2118  & 0.2002  & 0.1890  & 0.1844  & 0.2010  & 0.2178  & 0.2423  & 0.2572  & 0.2688  \\
          & FAL   & 0.2145  & 0.1899  & 0.1834  & 0.1953  & 0.2002  & 0.2130  & 0.2350  & 0.2624  & 0.2851  \\
          & FS    & 0.2874  & 0.2706  & 0.2321  & 0.2404  & 0.2121  & 0.2253  & 0.2664  & 0.2877  & 0.3488  \\
          & FAS   & 0.2837  & 0.2610  & 0.2123  & 0.2263  & 0.2047  & 0.2371  & 0.2522  & 0.3010  & 0.3334  \\
          &       & \multicolumn{9}{c}{II: $b = 0.5,~c_0 = 0,~c_1 = 0.2$} \\
          & RQ    & 0.3257  & 0.2137  & 0.2095  & 0.1974  & 0.1949  & 0.1898  & 0.2230  & 0.2040  & 0.3143  \\
          & FL    & 0.1469  & 0.1153  & 0.1183  & 0.1406  & 0.1267  & 0.1240  & 0.1310  & 0.1430  & 0.1866  \\
          & FAL   & 0.1398  & 0.1193  & 0.1151  & 0.1247  & 0.1317  & 0.1251  & 0.1385  & 0.1520  & 0.1875  \\
          & FS    & 0.2049  & 0.1729  & 0.1602  & 0.1560  & 0.1443  & 0.1592  & 0.1576  & 0.1688  & 0.1964  \\
          & FAS   & 0.1735  & 0.1372  & 0.1378  & 0.1457  & 0.1309  & 0.1553  & 0.1411  & 0.1521  & 0.1780  \\
    \multicolumn{1}{l}{$\lambda = 0.2$} &       & \multicolumn{9}{c}{III: $b = 0.5,~c_0 = 0.1,~c_1 = 0$} \\
          & RQ    & 0.3891  & 0.3996  & 0.2927  & 0.3027  & 0.2430  & 0.2814  & 0.3272  & 0.3952  & 0.5016  \\
          & FL    & 0.1793  & 0.1876  & 0.1789  & 0.1872  & 0.1806  & 0.1835  & 0.1915  & 0.2071  & 0.2162  \\
          & FAL   & 0.1837  & 0.1933  & 0.1750  & 0.1854  & 0.1741  & 0.1697  & 0.1799  & 0.1929  & 0.2057  \\
          & FS    & 0.2574  & 0.2589  & 0.2316  & 0.2193  & 0.2038  & 0.1993  & 0.2545  & 0.2660  & 0.2834  \\
          & FAS   & 0.2414  & 0.2387  & 0.2158  & 0.2106  & 0.1836  & 0.2029  & 0.2554  & 0.2647  & 0.2742  \\
          &       & \multicolumn{9}{c}{IV: $b = 0.5,~c_0 = 0,~c_1 = 0$} \\
          & RQ    & 0.3383  & 0.2078  & 0.1946  & 0.1637  & 0.1712  & 0.1711  & 0.1808  & 0.1960  & 0.2987  \\
          & FL    & 0.1239  & 0.1209  & 0.1113  & 0.1206  & 0.1195  & 0.1216  & 0.1181  & 0.1127  & 0.1149  \\
          & FAL   & 0.1215  & 0.1188  & 0.1069  & 0.1170  & 0.1205  & 0.1138  & 0.1105  & 0.1115  & 0.1226  \\
          & FS    & 0.2001  & 0.1548  & 0.1514  & 0.1427  & 0.1300  & 0.1431  & 0.1400  & 0.1480  & 0.1654  \\
          & FAS   & 0.1713  & 0.1367  & 0.1394  & 0.1187  & 0.1238  & 0.1266  & 0.1301  & 0.1247  & 0.1546  \\
          &       &       &       &       &       &       &       &       &       &  \\
          &       & \multicolumn{9}{c}{I: $b = 0.5,~c_0 = 0.1,~c_1 = 0.2$} \\
          & RQ    & 0.5901  & 0.5134  & 0.4643  & 0.4246  & 0.4170  & 0.5018  & 0.5817  & 0.7496  & 1.1272  \\
          & FL    & 0.2184  & 0.2345  & 0.2229  & 0.2336  & 0.2298  & 0.2711  & 0.2609  & 0.2920  & 0.3631  \\
          & FAL   & 0.2472  & 0.2291  & 0.2379  & 0.2334  & 0.2395  & 0.2688  & 0.2763  & 0.3251  & 0.4019  \\
          & FS    & 0.2638  & 0.2550  & 0.2701  & 0.2796  & 0.2552  & 0.2941  & 0.3119  & 0.3239  & 0.4555  \\
          & FAS   & 0.2590  & 0.2507  & 0.2823  & 0.2792  & 0.2652  & 0.3461  & 0.3278  & 0.3574  & 0.4821  \\
          &       & \multicolumn{9}{c}{II: $b = 0.5,~c_0 = 0,~c_1 = 0.2$} \\
          & RQ    & 0.3394  & 0.2319  & 0.2119  & 0.1840  & 0.1903  & 0.2145  & 0.2228  & 0.2425  & 0.3510  \\
          & FL    & 0.1275  & 0.1252  & 0.1125  & 0.1251  & 0.1054  & 0.1091  & 0.1145  & 0.1208  & 0.1573  \\
          & FAL   & 0.1438  & 0.1282  & 0.1129  & 0.1090  & 0.1043  & 0.1003  & 0.1189  & 0.1292  & 0.1515  \\
          & FS    & 0.1633  & 0.1449  & 0.1374  & 0.1338  & 0.1344  & 0.1327  & 0.1474  & 0.1544  & 0.1956  \\
          & FAS   & 0.1854  & 0.1533  & 0.1290  & 0.1339  & 0.1320  & 0.1428  & 0.1270  & 0.1499  & 0.1784  \\
    \multicolumn{1}{l}{$\lambda = 0.5$} &       & \multicolumn{9}{c}{III: $b = 0.5,~c_0 = 0.1,~c_1 = 0$} \\
          & RQ    & 0.5334  & 0.4258  & 0.4504  & 0.3717  & 0.4155  & 0.4792  & 0.5162  & 0.6462  & 1.0090   \\
          & FL    & 0.1969  & 0.1839  & 0.1948  & 0.2016  & 0.2116  & 0.2397  & 0.2340  & 0.2543  & 0.2858  \\
          & FAL   & 0.2025  & 0.2137  & 0.2085  & 0.2112  & 0.2162  & 0.2341  & 0.2414  & 0.2756  & 0.3353  \\
          & FS    & 0.2420  & 0.2316  & 0.2592  & 0.2424  & 0.2461  & 0.2682  & 0.2858  & 0.3090  & 0.4055  \\
          & FAS   & 0.2816  & 0.2330  & 0.2612  & 0.2325  & 0.2627  & 0.3131  & 0.3110  & 0.3458  & 0.4095  \\
          &       & \multicolumn{9}{c}{IV: $b = 0.5,~c_0 = 0,~c_1 = 0$} \\
          & RQ    & 0.3559  & 0.2108  & 0.1768  & 0.1839  & 0.1835  & 0.1894  & 0.1935  & 0.2421  & 0.3322  \\
          & FL    & 0.1167  & 0.1079  & 0.1190  & 0.1205  & 0.1107  & 0.1048  & 0.1017  & 0.1025  & 0.1039  \\
          & FAL   & 0.1112  & 0.1032  & 0.0995  & 0.1025  & 0.0993  & 0.0907  & 0.0897  & 0.0930  & 0.1021  \\
          & FS    & 0.1552  & 0.1367  & 0.1375  & 0.1361  & 0.1279  & 0.1216  & 0.1249  & 0.1362  & 0.1702  \\
          & FAS   & 0.1428  & 0.1324  & 0.1218  & 0.1245  & 0.1253  & 0.1143  & 0.1166  & 0.1220  & 0.1417  \\
          &       &       &       &       &       &       &       &       &       &  \\
          &       & \multicolumn{9}{c}{I: $b = 0.5,~c_0 = 0.1,~c_1 = 0.2$} \\
          & RQ    & 2.1663  & 1.9115  & 2.0349  & 2.1766  & 2.5195  & 2.7271  & 3.3710  & 5.1528  & 12.2799   \\
          & FL    & 0.7338  & 0.6356  & 0.6210  & 0.6897  & 0.7421  & 0.8004  & 0.8878  & 0.9682  & 0.9244  \\
          & FAL   & 0.6572  & 0.5917  & 0.6444  & 0.7348  & 0.7966  & 0.8595  & 0.9245  & 1.0357  & 1.1948  \\
          & FS    & 0.7174  & 0.6077  & 0.6251  & 0.8264  & 0.8702  & 0.9933  & 1.1399  & 1.1942  & 1.3734  \\
          & FAS   & 1.0481  & 0.6457  & 0.8442  & 1.1270  & 1.2220  & 1.0712  & 1.4788  & 1.7912  & 2.3435  \\
          &       & \multicolumn{9}{c}{II: $b = 0.5,~c_0 = 0,~c_1 = 0.2$} \\
          & RQ    & 0.7180  & 0.4630  & 0.3765  & 0.3390  & 0.3006  & 0.3649  & 0.4696  & 0.4130  & 0.6152  \\
          & FL    & 0.1675  & 0.1438  & 0.1344  & 0.1388  & 0.1256  & 0.1397  & 0.1171  & 0.1544  & 0.1574  \\
          & FAL   & 0.1733  & 0.1293  & 0.1180  & 0.1104  & 0.1213  & 0.1145  & 0.1018  & 0.1131  & 0.1565  \\
          & FS    & 0.2060  & 0.1736  & 0.1506  & 0.1393  & 0.1204  & 0.1343  & 0.1171  & 0.1433  & 0.1780  \\
          & FAS   & 0.2504  & 0.1897  & 0.1623  & 0.1561  & 0.1329  & 0.1400  & 0.1463  & 0.1602  & 0.2423  \\
    \multicolumn{1}{l}{$\lambda = 0.8$} &       & \multicolumn{9}{c}{III: $b = 0.5,~c_0 = 0.1,~c_1 = 0$} \\
          & RQ    & 1.9266  & 1.5241  & 1.7587  & 2.1651  & 2.5707  & 2.6144  & 3.4843  & 4.9690  & 11.3622  \\
          & FL    & 0.6688  & 0.6407  & 0.6128  & 0.6248  & 0.6813  & 0.6823  & 0.8301  & 0.9002  & 0.9075  \\
          & FAL   & 0.5757  & 0.5562  & 0.5519  & 0.6574  & 0.7581  & 0.7893  & 0.8592  & 0.9548  & 1.1198  \\
          & FS    & 0.7504  & 0.6530  & 0.6139  & 0.6783  & 0.7094  & 0.7793  & 0.9972  & 1.0546  & 1.3255  \\
          & FAS   & 0.8747  & 0.6562  & 0.8183  & 0.8250  & 1.1498  & 0.8124  & 1.2141  & 1.5672  & 2.2539  \\
          &       & \multicolumn{9}{c}{IV: $b = 0.5,~c_0 = 0,~c_1 = 0$} \\
          & RQ    & 0.6024  & 0.3952  & 0.3613  & 0.2903  & 0.3242  & 0.3734  & 0.3185  & 0.4474  & 0.6622  \\
          & FL    & 0.1121  & 0.1080  & 0.1041  & 0.0977  & 0.0938  & 0.1091  & 0.0919  & 0.1003  & 0.1284  \\
          & FAL   & 0.1098  & 0.1137  & 0.1053  & 0.1063  & 0.1063  & 0.1060  & 0.0913  & 0.1171  & 0.1262  \\
          & FS    & 0.1748  & 0.1766  & 0.1549  & 0.1296  & 0.1114  & 0.1114  & 0.1377  & 0.1425  & 0.1666  \\
          & FAS   & 0.2112  & 0.1921  & 0.1645  & 0.1435  & 0.1221  & 0.1410  & 0.1423  & 0.1428  & 0.2123  \\ \\ \hline
    \end{tabular}%
\label{tab:5}
\end{table}%

\begin{table}[htbp]
\tiny
\caption{\small{The MedSE of coefficients for four settings with $n=160$, where $F_n$ is $t(3)$ in Example 2.}}
\center
    \begin{tabular}{rlrrrrrrrrr} \hline \hline
          &       & \multicolumn{9}{c}{$\tau$ } \\
          &       & 0.1   & 0.2   & 0.3   & 0.4   & 0.5   & 0.6   & 0.7   & 0.8   & 0.9 \\ \hline
          &       & \multicolumn{9}{c}{I: $b = 0.5,~c_0 = 0.1,~c_1 = 0.2$} \\
          & RQ    & 0.3097  & 0.2284  & 0.2403  & 0.2530  & 0.2038  & 0.2071  & 0.2702  & 0.3022  & 0.3597  \\
          & FL    & 0.1474  & 0.1383  & 0.1404  & 0.1374  & 0.1436  & 0.1441  & 0.1551  & 0.1716  & 0.1653  \\
          & FAL   & 0.1621  & 0.1475  & 0.1513  & 0.1484  & 0.1439  & 0.1490  & 0.1544  & 0.1668  & 0.1702  \\
          & FS    & 0.1972  & 0.1790  & 0.1714  & 0.1590  & 0.1735  & 0.1553  & 0.1816  & 0.2043  & 0.2582  \\
          & FAS   & 0.1854  & 0.1781  & 0.1664  & 0.1523  & 0.1541  & 0.1448  & 0.1640  & 0.1929  & 0.2283  \\
          &       & \multicolumn{9}{c}{II: $b = 0.5,~c_0 = 0,~c_1 = 0.2$} \\
          & RQ    & 0.2424  & 0.1734  & 0.1488  & 0.1556  & 0.1471  & 0.1424  & 0.1439  & 0.1996  & 0.2201  \\
          & FL    & 0.1145  & 0.1013  & 0.0997  & 0.1001  & 0.1008  & 0.1045  & 0.1096  & 0.1116  & 0.1390  \\
          & FAL   & 0.1218  & 0.1059  & 0.1063  & 0.0957  & 0.0927  & 0.1004  & 0.1077  & 0.1195  & 0.1434  \\
          & FS    & 0.1449  & 0.1158  & 0.1160  & 0.1211  & 0.1117  & 0.1156  & 0.1171  & 0.1319  & 0.1591  \\
          & FAS   & 0.1227  & 0.1120  & 0.1067  & 0.1048  & 0.0996  & 0.1166  & 0.1148  & 0.1203  & 0.1582  \\
    \multicolumn{1}{l}{$\lambda = 0.2$} &       & \multicolumn{9}{c}{III: $b = 0.5,~c_0 = 0.1,~c_1 = 0$} \\
          & RQ    & 0.3097  & 0.2284  & 0.2403  & 0.2530  & 0.2038  & 0.2071  & 0.2702  & 0.3022  & 0.3597  \\
          & FL    & 0.1474  & 0.1383  & 0.1404  & 0.1374  & 0.1436  & 0.1441  & 0.1551  & 0.1716  & 0.1653  \\
          & FAL   & 0.1621  & 0.1475  & 0.1513  & 0.1484  & 0.1439  & 0.1490  & 0.1544  & 0.1668  & 0.1702  \\
          & FS    & 0.1972  & 0.1790  & 0.1714  & 0.1590  & 0.1735  & 0.1553  & 0.1816  & 0.2043  & 0.2582  \\
          & FAS   & 0.1854  & 0.1781  & 0.1664  & 0.1523  & 0.1541  & 0.1448  & 0.1640  & 0.1929  & 0.2283  \\
          &       & \multicolumn{9}{c}{IV: $b = 0.5,~c_0 = 0,~c_1 = 0$} \\
          & RQ    & 0.1957  & 0.1516  & 0.1328  & 0.1292  & 0.1333  & 0.1230  & 0.1401  & 0.1515  & 0.1895  \\
          & FL    & 0.0915  & 0.0907  & 0.0914  & 0.0924  & 0.0918  & 0.0924  & 0.0896  & 0.0870  & 0.0870  \\
          & FAL   & 0.0960  & 0.0930  & 0.0944  & 0.0950  & 0.0953  & 0.0895  & 0.0858  & 0.0926  & 0.0989  \\
          & FS    & 0.1209  & 0.1034  & 0.1095  & 0.0959  & 0.1055  & 0.1017  & 0.0984  & 0.1125  & 0.1398  \\
          & FAS   & 0.1032  & 0.0953  & 0.0941  & 0.0918  & 0.0975  & 0.0997  & 0.0930  & 0.0995  & 0.1184  \\
          &       &       &       &       &       &       &       &       &       &  \\
          &       & \multicolumn{9}{c}{I: $b = 0.5,~c_0 = 0.1,~c_1 = 0.2$} \\
          & RQ    & 0.4755  & 0.3241  & 0.2922  & 0.3561  & 0.3093  & 0.4004  & 0.4312  & 0.4713  & 1.0290  \\
          & FL    & 0.1856  & 0.1806  & 0.1864  & 0.2019  & 0.2045  & 0.1950  & 0.2251  & 0.2278  & 0.2455  \\
          & FAL   & 0.2224  & 0.2011  & 0.1955  & 0.1996  & 0.2088  & 0.2056  & 0.2225  & 0.2238  & 0.2662  \\
          & FS    & 0.2623  & 0.2011  & 0.2148  & 0.2035  & 0.1958  & 0.2237  & 0.2523  & 0.2772  & 0.3963  \\
          & FAS   & 0.2528  & 0.1916  & 0.2237  & 0.2100  & 0.2110  & 0.2437  & 0.2758  & 0.2753  & 0.4243  \\
          &       & \multicolumn{9}{c}{II: $b = 0.5,~c_0 = 0,~c_1 = 0.2$} \\
          & RQ    & 0.2653  & 0.1738  & 0.1485  & 0.1425  & 0.1562  & 0.1677  & 0.1429  & 0.2087  & 0.2808  \\
          & FL    & 0.1164  & 0.0940  & 0.0938  & 0.0980  & 0.0969  & 0.0917  & 0.0997  & 0.1071  & 0.1406  \\
          & FAL   & 0.1060  & 0.0913  & 0.0917  & 0.0900  & 0.0914  & 0.0907  & 0.0954  & 0.0924  & 0.1196  \\
          & FS    & 0.1218  & 0.1121  & 0.1074  & 0.1087  & 0.1085  & 0.0972  & 0.1017  & 0.1233  & 0.1509  \\
          & FAS   & 0.1452  & 0.1062  & 0.1008  & 0.1032  & 0.1030  & 0.0976  & 0.0954  & 0.1110  & 0.1409  \\
    \multicolumn{1}{l}{$\lambda = 0.5$} &       & \multicolumn{9}{c}{III: $b = 0.5,~c_0 = 0.1,~c_1 = 0$} \\
          & RQ    & 0.4755  & 0.3241  & 0.2922  & 0.3561  & 0.3093  & 0.4004  & 0.4312  & 0.4713  & 1.0290  \\
          & FL    & 0.1856  & 0.1806  & 0.1864  & 0.2019  & 0.2045  & 0.1950  & 0.2251  & 0.2278  & 0.2455  \\
          & FAL   & 0.2224  & 0.2011  & 0.1955  & 0.1996  & 0.2088  & 0.2056  & 0.2225  & 0.2238  & 0.2662  \\
          & FS    & 0.2623  & 0.2011  & 0.2148  & 0.2035  & 0.1958  & 0.2237  & 0.2523  & 0.2772  & 0.3963  \\
          & FAS   & 0.2528  & 0.1916  & 0.2237  & 0.2100  & 0.2110  & 0.2437  & 0.2758  & 0.2753  & 0.4243  \\
          &       & \multicolumn{9}{c}{IV: $b = 0.5,~c_0 = 0,~c_1 = 0$} \\
          & RQ    & 0.2340  & 0.1537  & 0.1398  & 0.1379  & 0.1483  & 0.1234  & 0.1301  & 0.1545  & 0.2085  \\
          & FL    & 0.0888  & 0.0919  & 0.0877  & 0.0963  & 0.0907  & 0.0885  & 0.0900  & 0.0782  & 0.0786  \\
          & FAL   & 0.0918  & 0.0820  & 0.0837  & 0.0882  & 0.0910  & 0.0868  & 0.0807  & 0.0766  & 0.0785  \\
          & FS    & 0.1192  & 0.0935  & 0.1000  & 0.0989  & 0.1101  & 0.0919  & 0.0926  & 0.0948  & 0.1179  \\
          & FAS   & 0.1102  & 0.0912  & 0.0955  & 0.0904  & 0.0994  & 0.0890  & 0.0814  & 0.0888  & 0.1075  \\
          &       &       &       &       &       &       &       &       &       &  \\
          &       & \multicolumn{9}{c}{I: $b = 0.5,~c_0 = 0.1,~c_1 = 0.2$} \\
          & RQ    & 1.7113  & 1.4183  & 1.5008  & 1.1765  & 1.8837  & 2.0089  & 2.8445  & 5.0104  & 7.0400  \\
          & FL    & 0.5696  & 0.5669  & 0.5263  & 0.5340  & 0.5555  & 0.6361  & 0.7080  & 0.7815  & 0.8044  \\
          & FAL   & 0.7013  & 0.6132  & 0.5920  & 0.5374  & 0.5831  & 0.7161  & 0.7684  & 0.8792  & 1.2309  \\
          & FS    & 0.6657  & 0.5929  & 0.5183  & 0.5201  & 0.5871  & 0.7161  & 0.8414  & 0.9373  & 1.1181  \\
          & FAS   & 0.8514  & 0.6363  & 0.5727  & 0.6124  & 0.6867  & 1.0546  & 1.1254  & 1.4436  & 1.6967  \\
          &       & \multicolumn{9}{c}{II: $b = 0.5,~c_0 = 0,~c_1 = 0.2$} \\
          & RQ    & 0.4399  & 0.3142  & 0.2862  & 0.3161  & 0.2716  & 0.2226  & 0.2899  & 0.3135  & 0.5160  \\
          & FL    & 0.1203  & 0.1127  & 0.1095  & 0.1012  & 0.1013  & 0.0880  & 0.0893  & 0.0857  & 0.1196  \\
          & FAL   & 0.1162  & 0.1080  & 0.1089  & 0.0981  & 0.0947  & 0.0879  & 0.0958  & 0.0881  & 0.1352  \\
          & FS    & 0.1333  & 0.1100  & 0.1102  & 0.0995  & 0.1100  & 0.0987  & 0.1106  & 0.1075  & 0.1406  \\
          & FAS   & 0.1508  & 0.1290  & 0.1051  & 0.1065  & 0.1119  & 0.1207  & 0.1229  & 0.1097  & 0.1620  \\
    \multicolumn{1}{l}{$\lambda = 0.8$} &       & \multicolumn{9}{c}{III: $b = 0.5,~c_0 = 0.1,~c_1 = 0$} \\
          & RQ    & 1.7113  & 1.4183  & 1.5008  & 1.1765  & 1.8837  & 2.0089  & 2.8445  & 5.0104  & 7.0400  \\
          & FL    & 0.5696  & 0.5669  & 0.5263  & 0.5340  & 0.5555  & 0.6361  & 0.7080  & 0.7815  & 0.8044  \\
          & FAL   & 0.7013  & 0.6132  & 0.5920  & 0.5374  & 0.5831  & 0.7161  & 0.7684  & 0.8792  & 1.2309  \\
          & FS    & 0.6657  & 0.5929  & 0.5183  & 0.5201  & 0.5871  & 0.7161  & 0.8414  & 0.9373  & 1.1181  \\
          & FAS   & 0.8514  & 0.6363  & 0.5727  & 0.6124  & 0.6867  & 1.0546  & 1.1254  & 1.4436  & 1.6967  \\
          &       & \multicolumn{9}{c}{IV: $b = 0.5,~c_0 = 0,~c_1 = 0$} \\
          & RQ    & 0.3672  & 0.2670  & 0.2461  & 0.2261  & 0.2610  & 0.1985  & 0.3051  & 0.3077  & 0.4630  \\
          & FL    & 0.0984  & 0.0905  & 0.0988  & 0.0983  & 0.0929  & 0.0890  & 0.0864  & 0.0920  & 0.0962  \\
          & FAL   & 0.0765  & 0.0773  & 0.0901  & 0.0870  & 0.0922  & 0.0770  & 0.0780  & 0.0771  & 0.0804  \\
          & FS    & 0.1175  & 0.1144  & 0.1225  & 0.1065  & 0.0911  & 0.1123  & 0.1067  & 0.1073  & 0.1323  \\
          & FAS   & 0.1194  & 0.1205  & 0.1127  & 0.0829  & 0.1154  & 0.1137  & 0.1061  & 0.1201  & 0.1606 \\ \\ \hline
    \end{tabular}%
\label{tab:6}
\end{table}%

\begin{example}
To further study the difference between fused LASSO and fused Sup-norm, we consider a more complex example. The data are generated from model
\begin{equation}
\label{exam:3}
Y_i = \alpha(\tau_{n,i}) + \lambda U_i + {{\beta}(\tau_{n,i})} {X}_{i} + e_i, ~~i = 1, \cdots, n,
\end{equation}
where $\tau_{n,i}$ is randomly from $\mathbb{S}_{\tau}$, $\alpha(\tau_{n,i}) = \alpha +  b F_n^{-1}(\tau_{n,i})$,
$$
\beta(\tau_{n,i}) =
\left \{
\begin{aligned}
&\beta + c_1 F_n^{-1}(\tau_{n,i}) && 0 < \tau_{n_i} < 0.49, \\
&\beta && 0.49 \leq \tau_{n_i} <1.
\end{aligned}
\right.
$$
We only consider the situation where $F_n$ is the standard normal distribution $N(0,1)$ , the sample size $n= 120$. The spatial lag parameter $\lambda$ is a constant, and we consider three situations $\lambda = 0.2,~0.5,~0.8$. $\alpha$ is chosen as $\alpha = 3$, the corresponding varying factor $b=0.5$, and $\beta$ is chosen as $\beta = 3$, the corresponding varying factor $c_1 = 0.2$.
Therefore, the model is heteroscedastic, where the intercept term is varying across all the quantile levels, but the slope coefficient of predictor is not.
$\beta(\tau_{n,i}) $ varies for the quantiles $\{0.1,~0.2,~0.3,~0.4\}$, but remains a constant for the quantiles $\{0.5,~0.6,~0.7,~0.8,~0.9\}$.
\end{example}

\begin{table}[htbp]
\small
\caption{\small{The MedSE of coefficients in Example 3.}}
\center
    \begin{tabular}{lrrrrrrrrr} \hline \hline
                & \multicolumn{9}{c}{$\tau$ } \\
                & 0.1   & 0.2   & 0.3   & 0.4   & 0.5   & 0.6   & 0.7   & 0.8   & 0.9 \\ \hline
          & \multicolumn{9}{c}{$\lambda=0.2$} \\
    RQ & 0.3528  & 0.2100  & 0.1968  & 0.1715  & 0.1655  & 0.1730  & 0.2053  & 0.2288  & 0.3421  \\
    FL    & 0.1525  & 0.1176  & 0.1119  & 0.1182  & 0.1159  & 0.1129  & 0.1188  & 0.1419  & 0.1797  \\
    FAL   & 0.1684  & 0.1224  & 0.1202  & 0.1229  & 0.1167  & 0.1106  & 0.1136  & 0.1325  & 0.1809  \\
    FS    & 0.2295  & 0.1713  & 0.1409  & 0.1451  & 0.1230  & 0.1418  & 0.1288  & 0.1752  & 0.2187  \\
    FAS   & 0.2050  & 0.1558  & 0.1378  & 0.1319  & 0.1299  & 0.1333  & 0.1224  & 0.1523  & 0.1975  \\
          &       &       &       &       &       &       &       &       &  \\
          & \multicolumn{9}{c}{$\lambda=0.5$} \\
    RQ & 0.3437  & 0.2450  & 0.1589  & 0.1630  & 0.1605  & 0.1901  & 0.1848  & 0.2535  & 0.4139  \\
    FL    & 0.1418  & 0.1153  & 0.0962  & 0.1095  & 0.1015  & 0.1009  & 0.1062  & 0.1217  & 0.1637  \\
    FAL   & 0.1411  & 0.1023  & 0.0979  & 0.1023  & 0.0999  & 0.1061  & 0.1037  & 0.1193  & 0.1442  \\
    FS    & 0.1796  & 0.1272  & 0.1236  & 0.1328  & 0.1257  & 0.1353  & 0.1270  & 0.1587  & 0.2367  \\
    FAS   & 0.1878  & 0.1241  & 0.1213  & 0.1247  & 0.1220  & 0.1225  & 0.1141  & 0.1611  & 0.2062  \\
          &       &       &       &       &       &       &       &       &  \\
          & \multicolumn{9}{c}{$\lambda=0.8$} \\
    RQ & 0.5329  & 0.3991  & 0.3059  & 0.3229  & 0.3667  & 0.3626  & 0.3682  & 0.4810  & 0.6016  \\
    FL    & 0.1641  & 0.1564  & 0.1192  & 0.1224  & 0.1148  & 0.1263  & 0.1314  & 0.1298  & 0.1518  \\
    FAL   & 0.1678  & 0.1239  & 0.1163  & 0.1203  & 0.1291  & 0.1262  & 0.1255  & 0.1329  & 0.1551  \\
    FS    & 0.2170  & 0.1963  & 0.1527  & 0.1546  & 0.1463  & 0.1437  & 0.1345  & 0.1445  & 0.1768  \\
    FAS   & 0.2310  & 0.1957  & 0.1552  & 0.1780  & 0.1628  & 0.1457  & 0.1549  & 0.1719  & 0.2528  \\ \hline
    \end{tabular}%
  \label{tab:7}%
\end{table}%

The results are presented in Table {\ref{tab:7}}. Apparently, FAL and FL methods are better than FS and FAS methods when detecting the
the insignificant of partial-varying coefficients.

\begin{example}
The model in this example has bivariate predictors. The data are generated from
\begin{equation}
\label{exam:4}
Y_i = \alpha(\tau_{n,i}) + \lambda(\tau_{n,i}) U_i + {{\beta_1}}(\tau_{n,i}) {X}_{i1} + {{\beta_2}}(\tau_{n,i}) {X}_{i2} + e_i, ~~i = 1, \cdots, n,
\end{equation}
where $\tau_{n,i}$ is still randomly from $\mathbb{S}_{\tau}$, $\alpha(\tau_{n,i}) = \alpha + bF_n^{-1}(\tau_{n,i}) $, $\lambda(\tau_{n,i}) = \lambda + c_0 F_n^{-1}(\tau_{n,i})$, $\beta_1(\tau_{n,i}) = \beta_1 + c_1 F_n^{-1}(\tau_{n,i})$, $\beta_2(\tau_{n,i}) = \beta_2 + c_2 F_n^{-1}(\tau_{n,i})$. In Example 1, we discussed the case with varying intercept term. Without loss of generality, we consider $\alpha=b = 0$.
$F_n^{-1}(\tau_{n,i})$ is the $\tau_{n,i}$ quantile of distribution $N(0,1)$. $e_i \stackrel{i.i.d}\sim N(0, 1) $ and $X_{i1},~X_{i2}$ are generated from $U(0, 1)$ independently.
If $c_0 = c_1 = c_2 = 0$, (\ref{exam:4}) is a homoscedastic model with the constant spatial parameter $\lambda(\tau_{n,i}) = \lambda$ and constant quantile slope $\beta_1(\tau_{n,i}) = \beta_1$, $\beta_2(\tau_{n,i}) = \beta_2$. However, if at least one of $c_0,~c_1,~c_2$ is not zero,
(\ref{exam:4}) becomes a heteroscedastic model with $\lambda(\tau_{n,i})$, $\beta_1(\tau_{n,i})$, or $\beta_2(\tau_{n,i})$ varying in $\tau_{n,i}$. The sample size $n$ is chosen as $n = 120$. $\lambda$ is chosen as $\lambda = 0.2,~0.5,~0.8$. $\beta_1$ is chosen as $\beta_1 = 2$ and $\beta_2$ is chosen as $\beta_2 = 3$.
 We consider the following five settings.
\begin{itemize}
\item[] {\textrm{I}:} ~~$c_0 = 0.1, ~c_1 = 0.3, ~c_2 = 0.5$. In this case, all the parameters are dependent on quantiles.
\item[] {\textrm{II}:} ~$c_0 = 0, ~c_1 = 0.3, ~c_2 = 0.5$. In this case,  spatial lag parameter is constant and the impact $\bm{\beta}$ of the covariate $\bm{X}$ on $\bm{Y}$ is different at different quantiles.
\item[] {\textrm{III}:} $c_0 = 0.1, ~c_1 = 0, ~c_2 = 0.5$. In this case, $\beta_1(\tau_{n,i})$ stays invariant for all quantiles. The spatial lag parameter and $\beta_2(\tau_{n,i})$ vary across quantiles.
\item[] {\textrm{IV}:} $c_0 = 0, ~c_1 = 0, ~c_2 = 0.5$. Only $\beta_2(\tau_{n,i})$ varies across quantiles. The spatial lag parameter and $\beta_1(\tau_{n,i})$ stay invariant for all quantiles.
\item[] {\textrm{V}:} $c_0 = 0, ~c_1 = 0, ~c_2 = 0$. In this case, all the parameters stay invariant for all quantiles.
\end{itemize}

\end{example}

\begin{table}[htbp]
\small
\caption{\small{The MedSE of coefficients for five settings with $n=120$ and $\lambda = 0.2$ in Example 4.}}
\center
    \begin{tabular}{lrrrrrrrrr} \hline \hline
                & \multicolumn{9}{c}{$\tau$ } \\
                & 0.1   & 0.2   & 0.3   & 0.4   & 0.5   & 0.6   & 0.7   & 0.8   & 0.9 \\ \hline
          & \multicolumn{9}{c}{I: $c_0 = 0.1, ~c_1 = 0.3, ~c_2 = 0.5$} \\
    RQ    & 0.8033  & 0.5007  & 0.4440  & 0.3441  & 0.3685  & 0.4010  & 0.4801  & 0.5829  & 0.9362  \\
    FL    & 0.5635  & 0.3454  & 0.2809  & 0.2680  & 0.2609  & 0.3115  & 0.3901  & 0.5320  & 0.9175  \\
    FAL   & 0.5400  & 0.3540  & 0.2742  & 0.2517  & 0.2673  & 0.3117  & 0.3973  & 0.5431  & 0.8370  \\
    FS    & 0.6291  & 0.3911  & 0.3269  & 0.2983  & 0.3160  & 0.3563  & 0.4243  & 0.5442  & 0.8868  \\
    FAS   & 0.5681  & 0.3778  & 0.2781  & 0.2921  & 0.2880  & 0.3439  & 0.3998  & 0.5227  & 0.8777  \\
          & \multicolumn{9}{c}{II: $c_0 = 0, ~c_1 = 0.3, ~c_2 = 0.5$} \\
    RQ    & 0.7965  & 0.4997  & 0.3366  & 0.3215  & 0.3160  & 0.3387  & 0.3817  & 0.5284  & 0.9532  \\
    FL    & 0.5703  & 0.3350  & 0.2513  & 0.2328  & 0.2434  & 0.2684  & 0.3385  & 0.5158  & 0.8735  \\
    FAL   & 0.5484  & 0.3407  & 0.2488  & 0.2115  & 0.2301  & 0.2811  & 0.3452  & 0.5295  & 0.8813  \\
    FS    & 0.6624  & 0.4086  & 0.2847  & 0.2547  & 0.2562  & 0.3110  & 0.3489  & 0.4928  & 0.8329  \\
    FAS   & 0.6230  & 0.3879  & 0.2687  & 0.2352  & 0.2467  & 0.2891  & 0.3541  & 0.5090  & 0.8444  \\
          & \multicolumn{9}{c}{III: $c_0 = 0.1, ~c_1 = 0, ~c_2 = 0.5$} \\
    RQ    & 0.7395  & 0.4766  & 0.4014  & 0.3721  & 0.3785  & 0.4070  & 0.4493  & 0.4778  & 0.7583  \\
    FL    & 0.4862  & 0.3034  & 0.2887  & 0.2640  & 0.2823  & 0.2970  & 0.3362  & 0.4632  & 0.7052  \\
    FAL   & 0.4786  & 0.3280  & 0.2777  & 0.2664  & 0.2713  & 0.2903  & 0.3579  & 0.4379  & 0.6769  \\
    FS    & 0.5622  & 0.3996  & 0.3191  & 0.3172  & 0.3189  & 0.3366  & 0.3870  & 0.4560  & 0.7231  \\
    FAS   & 0.5506  & 0.3748  & 0.2904  & 0.2898  & 0.2849  & 0.3069  & 0.3847  & 0.4518  & 0.7210  \\
          & \multicolumn{9}{c}{IV: $c_0 = 0, ~c_1 = 0, ~c_2 = 0.5$} \\
    RQ    & 0.6641  & 0.4541  & 0.3526  & 0.3057  & 0.2977  & 0.3297  & 0.3825  & 0.4640  & 0.8274  \\
    FL    & 0.4838  & 0.3007  & 0.2290  & 0.2001  & 0.2042  & 0.2490  & 0.2973  & 0.4145  & 0.6819  \\
    FAL   & 0.5098  & 0.3284  & 0.2502  & 0.2224  & 0.2131  & 0.2480  & 0.3072  & 0.4176  & 0.6696  \\
    FS    & 0.5654  & 0.3640  & 0.2986  & 0.2500  & 0.2454  & 0.2688  & 0.3275  & 0.4427  & 0.7348  \\
    FAS   & 0.5460  & 0.3213  & 0.2666  & 0.2291  & 0.2283  & 0.2562  & 0.3048  & 0.4378  & 0.7472  \\
          & \multicolumn{9}{c}{V: $c_0 = 0, ~c_1 = 0, ~c_2 = 0$} \\
    RQ    & 0.5055  & 0.3710  & 0.3219  & 0.2688  & 0.2662  & 0.2597  & 0.3078  & 0.3604  & 0.5291  \\
    FL    & 0.2027  & 0.2005  & 0.1919  & 0.1812  & 0.1860  & 0.1935  & 0.1886  & 0.1922  & 0.1933  \\
    FAL   & 0.2253  & 0.2043  & 0.2025  & 0.1890  & 0.1890  & 0.1816  & 0.1919  & 0.1897  & 0.2037  \\
    FS    & 0.3594  & 0.2992  & 0.2432  & 0.2233  & 0.2109  & 0.2282  & 0.2352  & 0.2740  & 0.3784  \\
    FAS   & 0.2953  & 0.2764  & 0.2317  & 0.2079  & 0.1943  & 0.1992  & 0.2121  & 0.2371  & 0.2996  \\ \hline
    \end{tabular}%
  \label{tab:8}%
\end{table}%

\begin{table}[htbp]
\small
\caption{\small{The MedSE of coefficients for five settings with $n=120$ and $\lambda = 0.5$ in Example 4.}}
\center
    \begin{tabular}{lrrrrrrrrr} \hline \hline
                & \multicolumn{9}{c}{$\tau$ } \\
                & 0.1   & 0.2   & 0.3   & 0.4   & 0.5   & 0.6   & 0.7   & 0.8   & 0.9 \\ \hline
          & \multicolumn{9}{c}{I: $c_0 = 0.1, ~c_1 = 0.3, ~c_2 = 0.5$} \\
    RQ    & 0.8961  & 0.6095  & 0.4799  & 0.5328  & 0.5204  & 0.5220  & 0.6176  & 0.8057  & 1.0785  \\
    FL    & 0.5607  & 0.3738  & 0.2925  & 0.3015  & 0.3251  & 0.3823  & 0.4554  & 0.6125  & 0.9551  \\
    FAL   & 0.5350  & 0.3600  & 0.2950  & 0.3014  & 0.3596  & 0.3886  & 0.4531  & 0.6334  & 0.9590  \\
    FS    & 0.6476  & 0.4382  & 0.3265  & 0.3375  & 0.3315  & 0.4132  & 0.5337  & 0.6160  & 0.9562  \\
    FAS   & 0.6438  & 0.4060  & 0.3355  & 0.3262  & 0.3330  & 0.3882  & 0.5112  & 0.5784  & 0.8777  \\
          & \multicolumn{9}{c}{II: $c_0 = 0, ~c_1 = 0.3, ~c_2 = 0.5$} \\
    RQ    & 0.8822  & 0.5761  & 0.3761  & 0.3572  & 0.3865  & 0.3471  & 0.4494  & 0.6523  & 1.1157  \\
    FL    & 0.5643  & 0.3087  & 0.2390  & 0.2412  & 0.2531  & 0.2639  & 0.3508  & 0.5010  & 0.8318  \\
    FAL   & 0.5683  & 0.3172  & 0.2429  & 0.2402  & 0.2458  & 0.2643  & 0.3314  & 0.4850  & 0.8504  \\
    FS    & 0.6297  & 0.3754  & 0.2775  & 0.2747  & 0.2771  & 0.3062  & 0.3744  & 0.5050  & 0.8529  \\
    FAS   & 0.6531  & 0.3594  & 0.2730  & 0.2630  & 0.2814  & 0.2854  & 0.3809  & 0.5090  & 0.8954  \\
          & \multicolumn{9}{c}{III: $c_0 = 0.1, ~c_1 = 0, ~c_2 = 0.5$} \\
    RQ    & 0.8026  & 0.5894  & 0.5010  & 0.5291  & 0.5281  & 0.5107  & 0.6593  & 0.7243  & 1.0619  \\
    FL    & 0.4746  & 0.3600  & 0.3301  & 0.3232  & 0.3280  & 0.3695  & 0.4449  & 0.5231  & 0.7656  \\
    FAL   & 0.4541  & 0.3330  & 0.3369  & 0.3434  & 0.3431  & 0.3491  & 0.4395  & 0.5630  & 0.7463  \\
    FS    & 0.5348  & 0.3928  & 0.3539  & 0.3607  & 0.3661  & 0.3705  & 0.4764  & 0.5533  & 0.7249  \\
    FAS   & 0.5453  & 0.4206  & 0.3537  & 0.3585  & 0.3572  & 0.3793  & 0.5099  & 0.5410  & 0.7075  \\
          & \multicolumn{9}{c}{IV: $c_0 = 0, ~c_1 = 0, ~c_2 = 0.5$} \\
    RQ    & 0.8468  & 0.5404  & 0.4083  & 0.3515  & 0.3494  & 0.3177  & 0.4899  & 0.6032  & 1.0036  \\
    FL    & 0.4715  & 0.3083  & 0.2475  & 0.2147  & 0.2432  & 0.2602  & 0.3104  & 0.4143  & 0.7018  \\
    FAL   & 0.4920  & 0.3237  & 0.2312  & 0.2187  & 0.2350  & 0.2416  & 0.3092  & 0.4243  & 0.6637  \\
    FS    & 0.5684  & 0.3638  & 0.3157  & 0.2682  & 0.2590  & 0.2714  & 0.3340  & 0.4387  & 0.7120  \\
    FAS   & 0.5145  & 0.3572  & 0.2896  & 0.2353  & 0.2584  & 0.2719  & 0.3260  & 0.4204  & 0.6686  \\
          & \multicolumn{9}{c}{V: $c_0 = 0, ~c_1 = 0, ~c_2 = 0$} \\
    RQ    & 0.5819  & 0.3920  & 0.3408  & 0.3645  & 0.3314  & 0.3042  & 0.3463  & 0.472054 & 0.706793 \\
    FL    & 0.2112  & 0.2069  & 0.1998  & 0.2012  & 0.1990  & 0.2086  & 0.2105  & 0.202647 & 0.212494 \\
    FAL   & 0.2280  & 0.2116  & 0.2059  & 0.2092  & 0.2079  & 0.2020  & 0.2082  & 0.1997  & 0.2219  \\
    FS    & 0.3151  & 0.2753  & 0.2349  & 0.2382  & 0.2285  & 0.2302  & 0.2544  & 0.2795  & 0.3711  \\
    FAS   & 0.2996  & 0.2467  & 0.2249  & 0.2117  & 0.2164  & 0.2173  & 0.2382  & 0.2670  & 0.3311  \\ \hline
    \end{tabular}%
  \label{tab:9}%
\end{table}%

\begin{table}[htbp]
\small
\caption{\small{The MedSE of coefficients for five settings with $n=120$ and $\lambda = 0.8$ in Example 4.}}
\center
    \begin{tabular}{lrrrrrrrrr} \hline \hline
                & \multicolumn{9}{c}{$\tau$ } \\
                & 0.1   & 0.2   & 0.3   & 0.4   & 0.5   & 0.6   & 0.7   & 0.8   & 0.9 \\ \hline
          & \multicolumn{9}{c}{I: $c_0 = 0.1, ~c_1 = 0.3, ~c_2 = 0.5$} \\
    RQ    & 1.9994  & 1.6943  & 1.6610  & 1.7097  & 1.8468  & 2.2542  & 2.2902  & 3.1073  & 5.2059  \\
    FL    & 0.8370  & 0.6886  & 0.6897  & 0.6765  & 0.7409  & 0.8420  & 0.8532  & 1.0659  & 1.4000  \\
    FAL   & 0.8044  & 0.6621  & 0.6261  & 0.6819  & 0.7391  & 0.8003  & 0.9006  & 1.0408  & 1.4038  \\
    FS    & 0.8356  & 0.6922  & 0.6312  & 0.6736  & 0.7508  & 0.8771  & 0.9533  & 1.0263  & 1.3788  \\
    FAS   & 0.9016  & 0.7883  & 0.7651  & 0.7855  & 0.8155  & 0.8308  & 0.9737  & 1.1282  & 1.4742  \\
          & \multicolumn{9}{c}{II: $c_0 = 0, ~c_1 = 0.3, ~c_2 = 0.5$} \\
    RQ    & 1.5138  & 1.0281  & 0.7203  & 0.7340  & 0.6468  & 0.6783  & 0.8807  & 1.0346  & 1.7676  \\
    FL    & 0.5750  & 0.3693  & 0.2583  & 0.2652  & 0.2650  & 0.3129  & 0.4106  & 0.5086  & 0.8215  \\
    FAL   & 0.6591  & 0.3779  & 0.2819  & 0.2785  & 0.2630  & 0.2892  & 0.3832  & 0.5095  & 0.8616  \\
    FS    & 0.6431  & 0.4324  & 0.3196  & 0.2936  & 0.2824  & 0.2995  & 0.3876  & 0.4986  & 0.8399  \\
    FAS   & 0.7495  & 0.4498  & 0.3583  & 0.3119  & 0.2852  & 0.2944  & 0.3884  & 0.5192  & 0.8596  \\
          & \multicolumn{9}{c}{III: $c_0 = 0.1, ~c_1 = 0, ~c_2 = 0.5$} \\
    RQ    & 1.9433  & 1.9610  & 1.7021  & 1.7796  & 1.9362  & 2.2630  & 2.7850  & 3.6951  & 6.2834  \\
    FL    & 0.7960  & 0.7297  & 0.7047  & 0.7242  & 0.7972  & 0.8743  & 0.9738  & 1.1403  & 1.5335  \\
    FAL   & 0.7290  & 0.7189  & 0.6964  & 0.7204  & 0.7888  & 0.8812  & 0.9843  & 1.1594  & 1.5945  \\
    FS    & 0.9143  & 0.7436  & 0.7693  & 0.7740  & 0.8588  & 0.9101  & 1.0490  & 1.1737  & 1.5680  \\
    FAS   & 0.8928  & 0.8439  & 0.8130  & 0.8313  & 0.9437  & 0.9508  & 1.0541  & 1.2682  & 1.7148  \\
          & \multicolumn{9}{c}{IV: $c_0 = 0, ~c_1 = 0, ~c_2 = 0.5$} \\
    RQ    & 1.4731  & 0.9350  & 0.7291  & 0.6092  & 0.6399  & 0.6692  & 0.8357  & 0.8907  & 1.6522  \\
    FL    & 0.4796  & 0.3202  & 0.2616  & 0.2526  & 0.2514  & 0.2806  & 0.3610  & 0.4515  & 0.6561  \\
    FAL   & 0.4962  & 0.3408  & 0.2992  & 0.2399  & 0.2455  & 0.2583  & 0.3529  & 0.4630  & 0.7212  \\
    FS    & 0.5578  & 0.4001  & 0.3301  & 0.2807  & 0.2873  & 0.2827  & 0.3509  & 0.4725  & 0.6851  \\
    FAS   & 0.6292  & 0.4584  & 0.3369  & 0.2831  & 0.2725  & 0.3005  & 0.3639  & 0.4681  & 0.7756  \\
          & \multicolumn{9}{c}{V: $c_0 = 0, ~c_1 = 0, ~c_2 = 0$} \\
    RQ    & 1.0086  & 0.7686  & 0.7119  & 0.5718  & 0.5442  & 0.5357  & 0.6989  & 0.7088  & 1.3848  \\
    FL    & 0.2529  & 0.2472  & 0.2474  & 0.2464  & 0.2251  & 0.2223  & 0.2259  & 0.2436  & 0.2665  \\
    FAL   & 0.2623  & 0.2417  & 0.2337  & 0.2428  & 0.2214  & 0.2143  & 0.2301  & 0.2301  & 0.2563  \\
    FS    & 0.3439  & 0.3035  & 0.2886  & 0.2664  & 0.2389  & 0.2287  & 0.2653  & 0.3195  & 0.4046  \\
    FAS   & 0.3497  & 0.2746  & 0.2906  & 0.2615  & 0.2562  & 0.2268  & 0.2890  & 0.3202  & 0.4469  \\ \hline
    \end{tabular}%
  \label{tab:10}%
\end{table}%

Tabel \ref{tab:8} -  \ref{tab:10} reveals the MedSE of coefficients in the bivariate case with $\lambda= 0.2,~0.5,~0.8$, and $F_n$ is chosen to be standard normal. We consider five settings and the results show that our proposed approaches yield smaller MedSE than RQ method. The MedSE become smaller with more constant coefficients. And with the increasing of spatial lag parameters, the MedSE become greater.

\section{Real data}
We apply our proposed methods to analyze a classical crime dataset, which were originally from Anselin, L.(1988). The dataset contains 49 observations and 22 variables, in which 16 variables are the ID values. The response variable is residential burglaries and vehicle thefts per thousand households in the neighborhood (CRIME). There are 5 covariates: housing value (HOVAL), household income (INC), open space in neighborhood (OPEN), percentage housing units without plumbin (PLUMB), distance to CBD (DISCBD). Our purpose is to investigate the effects of covariates on CRIME in Columbus.


By classical SAR model, only two covariates are significance, that is, HOVAL and INC. The estimated SQAR model is as follows,
\begin{equation}
\label{real data}
{\rm{CRIME}} = {\rm{Constant}}(\tau) + \lambda(\tau)~W \cdot {\rm{CRIME}} + \beta_1(\tau)~{\rm{HOVAL}} +  \beta_2(\tau)~{\rm{INC}} + {\bm \varepsilon}
\end{equation}

We employ the classical quantile regression method(RQ), FAS, FAL to the dataset and $\tau = \{0.1, 0.2, \cdots, 0.9\}$. The results are shown in Figure 1.

\begin{figure}
\centering
{%
\resizebox*{6.5cm}{!}{\includegraphics{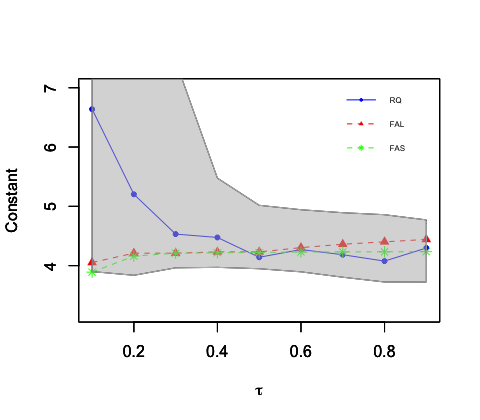}}}\hspace{6pt}
\hfill
{%
\resizebox*{6.5cm}{!}{\includegraphics{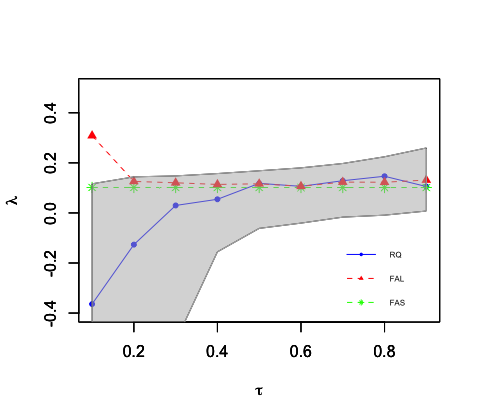}}}\hspace{6pt}
\vfill
{%
\resizebox*{6.5cm}{!}{\includegraphics{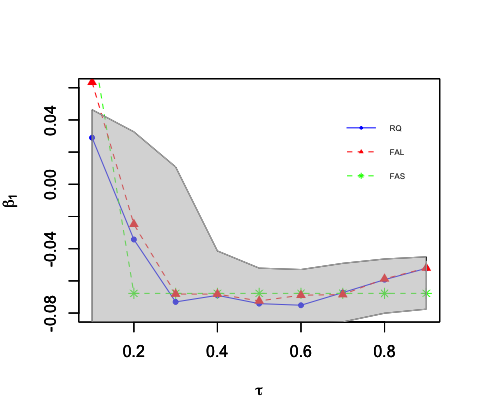}}}\hspace{6pt}
\hfill
{%
\resizebox*{6.5cm}{!}{\includegraphics{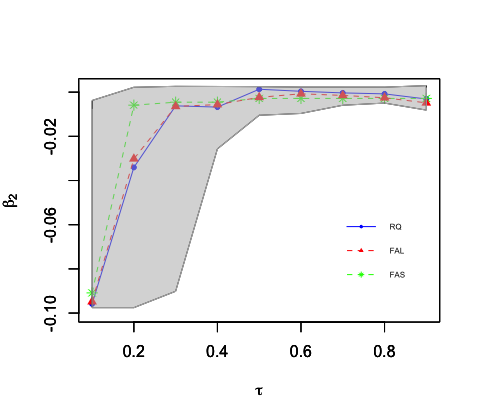}}}\hspace{6pt}
\vfill
\vfill\caption{{Estimated quantile coefficients: $Constant$, $\lambda$, $\beta_1$ and $\beta_2$ from RQ (solid line),
FAL (dashed line with triangle), and FAS (dashed line with stars) methods. Shaded areas are the $90\%$ confident bands from regular RQ method.}}
\label{fig.1}
\end{figure}

It is interesting that the constant estimators by FAL or FAS are almost constant separately on nine quantile levels in Figure but the estimators of traditional RQ method still vary.
The second subfig show that $\lambda$ estimators at nine quantiles by FAL only vary significantly from $\tau = 0.1$ to $\tau = 0.2$, and not very significantly across all the nine quantile levels. To further verify the shrinkage results, we conduct hypothesis tests to check the constancy of slope coefficients. The Wald test for the equality of $\lambda$ on quantile levels $\tau = 0.1, \cdots, 0.9$ returns a $p$-value of 0.01147, and on quantile levels $\tau = 0.2, \cdots, 0.9$ returns a $p$-value of 0.8663. The two $p$-values imply the effect of spatial lag variables vary significantly across the all the nine quantile levels under significance level $\alpha = 0.05$, but even nearly keep constant across $\tau = 0.2, \cdots, 0.9$. This agrees with the results from fused adaptive LASSO. For fused adaptive Lasso, the coefficients differences are penalized individually, so FAL could recognize the situation where the quantile slope coefficients appear constant in certain quantile regions, but vary in others.

On the other hand, for the covariate HOVAL and INC, the equality test on the
quantile coefficients at $\tau = 0.1, \cdots, 0.9$ returns two $p$-values near zero, showing that at a whole the effect of HOVAL and INC vary across the nine quantile levels. The equality test at $\tau = 0.3, \cdots, 0.9$ return $p$-values 0.6438 and 0.2730 separately, which implying the effect of HOVAL and INC almost keep constant across $\tau = 0.3, \cdots, 0.9$. This is consist with the results by FAL. FAS method tends to shrink  all the quantile coefficients to a constant so it couldn't distinguish the single difference of the coefficients when between $\tau = 0.2 $ and $\tau = 0.3$.

\section{Conclusion and discussion}

In this paper, we suggest two new fused penalties methods (FAL and FAS) to shrinkage the interquantile parameters in spatial quantile autoregressive model. Through our method, we realized estimating the parameters and identifying the commonality among parameters in different quantiles at the same time. Lots of simulation experiments and real data analysis also show the greater performance of the new methods than that of traditional estimation method for spatial quantile autoregressive model.

In reality, however, there exist irrelevant variables. Most popular methods to select important predictors are also penalities approaches. How to combine the two penalization methods together in spatial quantile autoregressive model is an interesting research.

\begin{appendix}
\section*{Appendix}
\setcounter{subsection}{0}
\setcounter{equation}{0}

\renewcommand{\thesubsection}{A.\arabic{subsection}}
\renewcommand{\theequation}{A.\arabic{equation}}

\subsection{Proof of Theorem 2.1}
\begin{proof}
Assumption (A1) - (A4) are the basic assmptions for SQAR model and make sure the existence of estimators.

In Assumption (A5), the sequence $\{(u_i, v_i, \mathbb{X}_i)\}$ is independent and identically distributed, that is
\begin{align*}
&P(u_i, v_i) = P(u_i)P( v_i); \\
&P( u_i,  \mathbb{X}_i) = P( u_i)P(  \mathbb{X}_i); \\
&P( v_i,  \mathbb{X}_i) = P( v_i)P( \mathbb{X}_i). \\
\end{align*}
Then
\begin{align*}
P( u_i, {\bm{V}}_i ) &= P( u_i, [1, {\bm{X}}_i, ({\bm{WX}})_i]^{T} ) \\
&= P( u_i, [{\mathbb{X}}_i, {\bm{W}_i}{\bm{X}}]^{T} )  \\
&= P(u_i)P([{\mathbb{X}}_i, {\bm{W}_i}{\bm{X}}]^{T}) \\
&= P( u_i)P( {\bm{V}}_i ) ,
\end{align*}
where ${\bm{W}}_i$ is the $i$th row vector of ${\bm{W}}$.
Similarly, $P( v_i, {\bm{V}}_i ) = P( v_i)P( {\bm{V}}_i ) $. So the sequence $\{(u_i, v_i, {\bm{V}}_i)\}$ is independent and identically distributed. It is satisfied to the Assumption 1 in \cite{Kim-M}.

Denote the $X_{ij}$ as the $(i,j)$th element of $\bm{X}$, ${\bm{X}}_{\cdot j}$ as the $j$th column of of $\bm{X}$.
Then
\begin{align*}
E(||{\bm{V}}_i||^3) &= E(||[1, {\bm{X}}_i, {\bm{W}}_i{\bm{X}}]^T||^3) \\
&= E
\begin{bmatrix}
&1 + {\bm{X}}_i {\bm{X}}_i^T + {\bm{W}}_i{\bm{X}}({\bm{W}}_i{\bm{X}})^T \\
&{\bm{X}}_i^T + {\bm{X}}_i^T{\bm{X}}_i {\bm{X}}_i^T + {\bm{X}}_i^T{\bm{W}}_i{\bm{X}}({\bm{W}}_i{\bm{X}})^T \\
&({\bm{W}}_i{\bm{X}})^T + ({\bm{W}}_i{\bm{X}})^T{\bm{X}}_i {\bm{X}}_i^T + ({\bm{W}}_i{\bm{X}})^T{\bm{W}}_i{\bm{X}}({\bm{W}}_i{\bm{X}})^T \\
\end{bmatrix}
\qquad
\end{align*}
According to Assumption (A6), the second geometric moment of ${\bm X}$ and the third geometric moment of ${\mathbb X}_i$ are finite, so $E(||{\bm{V}}_i||^3) < \infty$, that is the third geometric moment of ${\bm V}_i$ is finite. The Assumption 2(i) in \cite{Kim-M} is proved.

Under Assumption (A8),
\begin{align*}
A_1 &= E\{g_1(0|{\bm{V}}_i){\bm{V}}_i {\bm{V}}_i^{T}\}\\
&= E\{g_1(0|{\bm{V}}_i)\}{\bm{V}}_i {\bm{V}}_i^{T} \\
&= E\{g_1(0|{\bm{V}}_i)\}[1, {\bm{X}}_i, {\bm{W}}_i{\bm{X}}][1, {\bm{X}}_i, {\bm{W}}_i{\bm{X}}]^T \\
&= E\{g_1(0|{\mathbb{X}}_i)\}({\mathbb{X}}_i{\mathbb{X}}_i^T +{\bm{W}}_i{\bm{X}} ({\bm{W}}_i{\bm{X}})^T )\\
& = B_1 + B_1^{'}
\end{align*}
where $B_1^{'} = E\{g_1(0|{\mathbb{X}}_i)\}{\bm{W}}_i{\bm{X}} ({\bm{W}}_i{\bm{X}})^T $.
According to Assumption (A8), $B_1$ and $B_1^{'}$ are finite and positive definite, so $A_1$ is also finite and positive definite.
Similarly,  $A_2$ is finite and positive definite. The Assumption 2(iii) in \cite{Kim-M} is proved.

Under Assumption (A7) and (A9), the Assumption 2(ii) and 2(iv) are easily satisfied.

Thus, according to the Proposition 2 in \cite{Kim-M}, the Theorem 2.1 is proved.
\end{proof}

According to \cite{Jiang-2014}, more general cases ($p >1$) have the similar propositions and theorems. So for ease to illustrate, we just consider $p=1$ in the next subsections.

\subsection{Proof of Proposition 2.1}

\begin{proof}
\noindent 
Without loss of generality, we assume the quantile slopes ${\bm{\beta}= {\beta_1}}$ vary for the first $s_1$ ($s_1 <K$ )quantiles, but remain constant for the remaining $(K-s_1)$ quantile levels.

At first, we consider about minimizing the following formula
$$
    L_{n}(\boldsymbol{\delta})=\sum_{k=1}^{K} \sum_{i=1}^{n}\left[\rho_{\tau_{k}}\left\{Y_{i}-\mathbf{Z}_{k i, \mathcal{A}}^{T}\left(\boldsymbol{\theta}_{\mathcal{A}, 0} + \frac{1}{\sqrt{n}} \boldsymbol{\delta}\right)\right\}-\rho_{\tau_{k}}\left(Y_{i}-\mathbf{Z}_{k i, \mathcal{A}}^{T} \boldsymbol{\theta}_{\mathcal{A}, 0}\right)\right].
$$
The minimizer $\hat{\boldsymbol{\delta}}$ is $n^{1 / 2}\left(\hat{\boldsymbol{\theta}}_{\mathcal{A}}-\boldsymbol{\theta}_{\mathcal{A}, 0}\right)$ and $\boldsymbol{\delta} \in \mathbb{R}^{K+s}$ is a bounded vector. Under the identity in \cite{Knight}, we obtain
$$
    \rho_{\tau}(r-s)-\rho_{\tau}(r)=-s\{\tau-I(r<0)\}+\int_{0}^{s}\{I(r \leq t)-I(r \leq 0)\} d t.
$$\\
Then, $L_{n}$ can be written as
\begin{align*}
    L_{n}(\boldsymbol{\delta}) &=-n^{-1 / 2} \sum_{k=1}^{K} \sum_{i=1}^{n} \mathbf{Z}_{k i, \mathcal{A}}^{T}\left\{\tau_{k}-I\left(y_{i}-\mathbf{Z}_{k i, \mathcal{A}}^{T} \boldsymbol{\theta}_{\mathcal{A}, 0}<0\right)\right\} \boldsymbol{\delta} + \\
    &\sum_{k=1}^{K} \sum_{i=1}^{n} \int_{0}^{n^{-1 / 2} \mathbf{Z}_{k i, \mathcal{A}}^{T} \boldsymbol{\delta}}\left\{I\left(Y_{i}-\mathbf{Z}_{k i, \mathcal{A}}^{T} \boldsymbol{\theta}_{\mathcal{A}, 0} \leq t\right)-I\left(Y_{i}-\mathbf{Z}_{k i, \mathcal{A}}^{T} \boldsymbol{\theta}_{\mathcal{A}, 0} \leq 0\right)\right\} d t \\
    & \triangleq M_{n}^{(1)}(\boldsymbol{\delta}) + \sum_{k=1}^{K} V_{n}^{(k)}.
\end{align*}\\
Denote $M_{n}^{(2)}(\boldsymbol{\delta})=\sum_{k=1}^{K} \frac{1}{2} \boldsymbol{\delta}^{T}\left\{n^{-1} \sum_{i=1}^{n} f_{i}\left(\mathbf{Z}_{k i, \mathcal{A}}^{T} \boldsymbol{\theta}_{\mathcal{A}, 0}\right) \mathbf{Z}_{k i, \mathcal{A}} \mathbf{Z}_{k i, \mathcal{A}}^{T}\right\} \boldsymbol{\delta}.$
we can rewrite $L_{n}$ as
\begin{align*}
    L_{n}(\boldsymbol{\delta}) & = M_{n}^{(1)}(\boldsymbol{\delta}) + M_{n}^{(2)}(\boldsymbol{\delta}) + \left[\sum_{k=1}^{K} V_{n}^{(k)}-M_{n}^{(2)}(\boldsymbol{\delta}) \right]\\
    & \triangleq M_{n}^{(1)}(\boldsymbol{\delta}) + M_{n}^{(2)}(\boldsymbol{\delta}) + M_{n}^{(3)}(\boldsymbol{\delta}).
\end{align*}\\
According to the Assumptions (A10)-(A11), we need to show  $V_{n}^{(k)}$'s mean and variance in detail.
\begin{align*}
E\left\{V_{n}^{(k)}\right\}
&=\sum_{i=1}^{n} \int_{0}^{n^{-1 / 2} \mathbf{Z}_{k i, \mathcal{A}}^{T} \boldsymbol{\delta}}\left\{F_{i}\left(\mathbf{Z}_{k i, \mathcal{A}}^{T} \boldsymbol{\theta}_{\mathcal{A}, 0}+t\right)-F_{i}\left(\mathbf{Z}_{k i, \mathcal{A}}^{T} \boldsymbol{\theta}_{\mathcal{A}, 0}\right)\right\} d t \\
&=\frac{1}{n} \sum_{i=1}^{n} \int_{0}^{\mathbf{Z}_{k i, \mathcal{A}}^{T} {\boldsymbol{\delta}}} n^{1 / 2}\left\{F_{i}\left(\mathbf{Z}_{k i, \mathcal{A}}^{T} \boldsymbol{\theta}_{\mathcal{A}, 0}+n^{-1 / 2} t\right)-F_{i}\left(\mathbf{Z}_{k i, \mathcal{A}}^{T} \boldsymbol{\theta}_{\mathcal{A}, 0}\right)\right\} d t \\
&=\frac{1}{n} \sum_{i=1}^{n} \int_{0}^{\mathbf{Z}_{k i, \mathcal{A}}^{T}{\boldsymbol{\delta}}} f_{i}\left(\mathbf{Z}_{k i, \mathcal{A}}^{T} \boldsymbol{\theta}_{\mathcal{A}, 0}\right) t d t+o_{p}(1) \\
&=\frac{1}{2} \boldsymbol{\delta}^{T}\left\{\frac{1}{n} \sum_{i=1}^{n} f_{i}\left(\mathbf{Z}_{k i, \mathcal{A}}^{T} \boldsymbol{\theta}_{\mathcal{A}, 0}\right) \mathbf{Z}_{k i, \mathcal{A}} \mathbf{Z}_{k i, \mathcal{A}}^{T}\right\} \boldsymbol{\delta}+o_{p}(1).
\end{align*}

\begin{align*}
\operatorname{Var}\left\{V_{n}^{(k)}\right\}
=& E\left[V_{n}^{(k)}-E\left\{V_{n}^{(k)}\right\}\right]^{2} \\
=& E\left[\sum _ { i = 1 } ^ { n } \int _ { 0 } ^ { n ^ { - 1 / 2 } \mathbf{Z}_{k i, \mathcal{A}}^{T} \boldsymbol { \delta } } \left\{I\left(y_{i}-\mathbf{Z}_{k i, \mathcal{A}}^{T} \boldsymbol{\theta}_{\mathcal{A}, 0} \leq t\right)-I\left(y_{i}-\mathbf{Z}_{k i, \mathcal{A}}^{T} \boldsymbol{\theta}_{\mathcal{A}, 0} \leq 0\right)\right.\right.\\
&\left.\left.-F_{i}\left(\mathbf{Z}_{k i, \mathcal{A}}^{T} \boldsymbol{\theta}_{\mathcal{A}, 0}+t\right)+F_{i}\left(\mathbf{Z}_{k i, \mathcal{A}}^{T} \boldsymbol{\theta}_{\mathcal{A}, 0}\right)\right\} d t\right]^{2} \\
\leq & \sum_{i=1}^{n} E[ \mid \int_{0}^{n^{-1 / 2} \mathbf{Z}_{k i, \mathcal{A}}^{T} \delta}\{I\left(y_{i}-\mathbf{Z}_{k i, \mathcal{A}}^{T} \boldsymbol{\theta}_{\mathcal{A}, 0} \leq t\right)-I\left(y_{i}-\mathbf{Z}_{k i, \mathcal{A}}^{T} \boldsymbol{\theta}_{\mathcal{A}, 0} \leq 0\right)\\
&-F_{i}\left(\mathbf{Z}_{k i, \mathcal{A}}^{T} \boldsymbol{\theta}_{\mathcal{A}, 0}+t\right)+F_{i}\left(\mathbf{Z}_{k i, \mathcal{A}}^{T} \boldsymbol{\theta}_{\mathcal{A}, 0}\right)\} d t \mid ] \times 2\left|n^{-1 / 2} \mathbf{Z}_{k i, \mathcal{A}}^{T} \boldsymbol{\delta}\right| \\
\leq & 4 n^{-1 / 2} E\left\{B_{n}^{(k)}\right\} \max _{1 \leq i \leq n}\left\|\mathbf{Z}_{k i, \mathcal{A}}\right\|\|\boldsymbol{\delta}\| .
\end{align*}
It's easy to see that for any fixed $\boldsymbol{\delta}\in \mathbb{R}^{K+s}$,
$$\operatorname{Var}\left\{V_{n}^{(k)}\right\}\rightarrow 0 .$$
Therefore, according to the Lindeberg-Feller Central Limit Theorem and Cramer-Wold Device, the first term and second of $L_{n}(\boldsymbol{\delta})$ have,
\begin{align*}
&M_{n}^{(1)}(\boldsymbol{\delta}) \stackrel{d}{\rightarrow}-\sum_{k=1}^{K} \delta^{T} \boldsymbol{S}_{\boldsymbol{k}}, \\
&M_{n}^{(2)}(\boldsymbol{\delta}) \rightarrow \frac{1}{2} \boldsymbol{\delta}^{T}\left(\sum_{k=1}^{K} {\bm{\Omega}}_{k, \mathcal{A}}\right) \boldsymbol{\delta},
\end{align*}
where $\boldsymbol{S}_{\boldsymbol{k}} \sim N\left(0, \tau_{k}\left(1-\tau_{k}\right) {\bm{\Gamma}}_{k, \mathcal{A}}\right)$.\\
As for the third term $M_{n}^{(3)}(\boldsymbol{\delta})$, we know $M_{n}^{(3)}(\cdot)$ is a convex function and $M_{n}^{(3)}(\boldsymbol{\delta}) \rightarrow 0$ for any fixed $\boldsymbol{\delta}$.
Due to the Convexity Lemma in \cite{Jiang-2013}, this pointwise convergence can be strengthened to the uniform convergence.
In other word, $h_{n}(\boldsymbol{\delta}) \rightarrow 0$ uniformly on any compact subset of $\boldsymbol{R}^{K+s}$.
In conclusion,
$$
L_{n}(\boldsymbol{\delta}) \rightarrow-\sum_{k=1}^{K} \boldsymbol{\delta}^{T} \boldsymbol{S}_{k}+\frac{1}{2} \boldsymbol{\delta}^{T}\left(\sum_{k=1}^{K} {\bm{\Omega}}_{k, \mathcal{A}}\right) \boldsymbol{\delta} .
$$
and the minimizer to $L_{n}(\boldsymbol{\delta})$, defined as $\hat{\boldsymbol{\delta}}$, follows the asymptotic normal distribution $N\left(0, {\bm\Sigma}_{\mathcal{A}}\right)$, where ${\bm{\Sigma}}_{\mathcal{A}}=\left(\sum_{k=1}^{K} {\bm{\Omega}}_{k, \mathcal{A}}\right)^{-1}\left\{\sum_{k=1}^{K} \tau_{k}\left(1-\tau_{k}\right) {\bm{\Gamma}}_{k, \mathcal{A}}\right\}\left(\sum_{k=1}^{K} {\bm{\Omega}}_{k, \mathcal{A}}\right)^{-1}$

\end{proof}

\subsection{Proof of Theorem 2.2}
Without loss of generality, we consider the oracle estimator in the setting with $p=1$. 
Then notation can then be simplified by letting $\theta_{j}$ be the $j$ th element of $\boldsymbol{\theta}$ and $\tilde{\omega}_{j} = \mid \hat{\theta}_{K+j} \mid ^{-1} $  for $j = 2,3,\dots,K$.

\subsubsection{Root-n Consistency Lemma}
\begin{lemm}[Root-n consistency of $\left.\hat{\boldsymbol{\theta}}_{F A L}\right)$]
Assume conditions $A 1-A 11$ hold, if $n^{1 / 2} \tilde{\gamma}_{1 n} \rightarrow 0$, then
 $\hat{\boldsymbol{\theta}}_{F A L}-\boldsymbol{\theta}_{0}=O_{p}\left(n^{-1 / 2}\right) .$
\end{lemm}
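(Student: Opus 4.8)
The plan is to adapt the convexity-based argument from the proof of Proposition 2.1 to the penalized criterion $Q_1$. First I would reparametrize by $\bm\delta = n^{1/2}(\bm\theta-\bm\theta_0)$ and study the recentered objective
\[
D_n(\bm\delta) = Q_1\!\left(\bm\theta_0 + n^{-1/2}\bm\delta\right) - Q_1(\bm\theta_0),
\]
which is convex in $\bm\delta$ because each $\rho_{\tau_k}$ and each $|\cdot|$ is convex while $\mathbf Z_{ki}^T\bm\theta$ and $d_{k,l}(\bm\theta)$ are affine in $\bm\theta$. Its minimizer is $\hat{\bm\delta}=n^{1/2}(\hat{\bm\theta}_{\mathrm{FAL}}-\bm\theta_0)$, so the claim is equivalent to $\hat{\bm\delta}=O_p(1)$. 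I would then split $D_n(\bm\delta)=L_n(\bm\delta)+P_n(\bm\delta)$ into the quantile-loss part and the penalty part.

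For $L_n$, I would reuse the Knight identity together with the mean and variance computations carried out in the proof of Proposition 2.1, now applied to the \emph{full} design vector $\mathbf Z_{ki}$ rather than the oracle subvector, to obtain the pointwise expansion
\[
L_n(\bm\delta) = -\sum_{k=1}^{K}\bm\delta^{T}\bm S_k + \frac12\,\bm\delta^{T}\!\left(\sum_{k=1}^{K}\bm\Omega_k\right)\!\bm\delta + o_p(1),
\]
where $\bm S_k = O_p(1)$ and $\sum_{k=1}^{K}\bm\Omega_k$ is positive definite by (A11). For $P_n$, I would exploit that each $d_{k,l}(\bm\theta)$ is affine, so $d_{k,l}(\bm\theta_0+n^{-1/2}\bm\delta)-d_{k,l}(\bm\theta_0)$ equals $n^{-1/2}$ times a linear form in $\bm\delta$, and split the indices into those with $d_{k,l,0}\neq0$ and those with $d_{k,l,0}=0$. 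For the former, the initial estimators are root-$n$ consistent by Theorem 2.1, so $\tilde\omega_{k,l}\stackrel{p}{\longrightarrow}|d_{k,l,0}|^{-1}=O_p(1)$ and the corresponding increment is bounded in absolute value by $n^{-1/2}\tilde\gamma_{1n}\,O_p(1)\,\|\bm\delta\|=o_p(1)\|\bm\delta\|$, using $n^{1/2}\tilde\gamma_{1n}\to0$. For the latter, the reverse triangle inequality gives that the increment equals $\tilde\gamma_{1n}\tilde\omega_{k,l}\,n^{-1/2}\,|\text{linear}(\bm\delta)|\ge0$; since $\tilde d_{k,l}=O_p(n^{-1/2})$ forces $\tilde\omega_{k,l}=O_p(n^{1/2})$, this term is $\tilde\gamma_{1n}\,O_p(1)\,\|\bm\delta\|=o_p(1)\|\bm\delta\|$ as well. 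Consequently $P_n(\bm\delta)\ge -o_p(1)\|\bm\delta\|$ on compacts.

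Finally I would close the argument with the standard large-ball device. Fixing $\|\bm\delta\|=L$, the quadratic term satisfies $\tfrac12\bm\delta^{T}(\sum_k\bm\Omega_k)\bm\delta\ge \tfrac12 c\,L^2$ with $c>0$ the smallest eigenvalue, while the linear term is $O_p(L)$ and $P_n\ge -o_p(1)L$, so
\[
\inf_{\|\bm\delta\|=L} D_n(\bm\delta) \;\ge\; \tfrac12 c\,L^2 - O_p(L) - o_p(1)L,
\]
which is strictly positive with probability approaching one once $L$ is taken large. Because $D_n$ is convex with $D_n(\mathbf 0)=0$, this confines $\hat{\bm\delta}$ to $\{\|\bm\delta\|\le L\}$ with high probability, yielding $\hat{\bm\delta}=O_p(1)$, i.e.\ $\hat{\bm\theta}_{\mathrm{FAL}}-\bm\theta_0=O_p(n^{-1/2})$.

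I expect the main obstacle to be the bookkeeping for the zero-difference penalty terms, where the adaptive weights $\tilde\omega_{k,l}$ diverge at rate $n^{1/2}$. The observation that rescues the argument is that these increments are \emph{non-negative}, so they can only push the minimizer inward, and that, combined with $n^{1/2}\tilde\gamma_{1n}\to0$, they contribute only $o_p(1)\|\bm\delta\|$ and therefore cannot overpower the dominating quadratic term.
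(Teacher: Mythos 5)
Your proof is correct and follows essentially the same route as the paper's: the Fan--Li large-ball device, the split of $Q_1(\bm{\theta}_0+n^{-1/2}\bm{\delta})-Q_1(\bm{\theta}_0)$ into the loss $L_n$ (handled exactly by Proposition 2.1) plus the penalty, consistency of the adaptive weights on the nonzero differences together with $n^{1/2}\tilde{\gamma}_{1n}\to 0$, and domination by the positive-definite quadratic term on a sphere of large radius. The one place where you claim more than the paper---that the zero-difference penalty increments are $o_p(1)\|\bm{\delta}\|$---is sensitive to whether the penalty carries a factor of $n$ (as it does in the paper's proofs, where these terms can diverge when $n\tilde{\gamma}_{1n}\to\infty$); but since you also observe that these increments are non-negative and can simply be dropped from the lower bound, which is precisely what the paper does, your argument is sound either way.
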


\begin{proof}
As in Fan and Li (2001), we only need to show that for any $\epsilon>0$, there exists a sufficiently large constant $\eta$, such that
$$
P\{\inf _{\| \mathbf{\xi}\|=\eta} Q_{1}(\theta_{0}+n^{-1 / 2} \boldsymbol{\xi})>Q_{1}(\boldsymbol{\theta}_{0})\} \geq 1-\epsilon,
$$
where
$Q_{1}(\cdot)$ is defined in (\ref{Objfunc_p1}) when $p=1$.
In section 2.3.1, we assume the slopes $\boldsymbol{\beta}_{k}$ vary for the first $s_{1}<K$ quantiles, but remain constant for the rest $(K-s_{1})$ quantile levels.
Then, we consider
\begin{align}
\label{Q_unequality}
& Q_{1}\left(\boldsymbol{\theta}_{0}+n^{-1 / 2} \boldsymbol{\xi}\right)-Q_{1}\left(\boldsymbol{\theta}_{0}\right) \\ \nonumber
=& L_{n}(\boldsymbol{\xi})+n \tilde{\gamma}_{1 n} \sum_{j=K+2}^{2 K} \tilde{w}_{j}\left(\left|\theta_{j, 0}+n^{-1 / 2} \xi_{j}\right|-\left|\theta_{j, 0}\right|\right) \\ \nonumber
=& L_{n}(\boldsymbol{\xi})+n \tilde{\gamma}_{1 n} \sum_{j=K+2}^{K+s_{1}} \tilde{w}_{j}\left(\left|\theta_{j, 0}+n^{-1 / 2} \xi_{j}\right|-\left|\theta_{j, 0}\right|\right)+n \tilde{\gamma}_{1 n} \sum_{j=K+s_{1}+1}^{2 K} \tilde{w}_{j}\left|n^{-1 / 2} \xi_{j}\right| \\ \nonumber
\geq & L_{n}(\boldsymbol{\xi})+n \tilde{\gamma}_{1 n} \sum_{j=K+2}^{K+s_{1}} \tilde{w}_{j}\left(\left|\theta_{j, 0}+n^{-1 / 2} \xi_{j}\right|-\left|\theta_{j, 0}\right|\right) ,
\end{align}
where $\tilde{w}_{j} \stackrel{p}{\rightarrow}\left|\theta_{j, 0}\right|^{-1} $ for any $\theta_{j, 0} \neq 0$.
According to the assumption $n^{1 / 2} \tilde{\gamma}_{1 n} \rightarrow 0$ and $\|\boldsymbol{\xi}\|= \eta$, we have
$$
    n \tilde{\gamma}_{1 n} \sum_{j=K+2}^{K+s_{1}} \tilde{w}_{j}\left(\left|\theta_{j, 0}+n^{-1 / 2} \xi_{j}\right|-\left|\theta_{j, 0}\right|\right)=n^{1 / 2} \tilde{\gamma}_{1 n} \sum_{j=K+2}^{K+s_{1}} \tilde{w}_{j} \xi_{j} \operatorname{sgn}\left(\theta_{j, 0}\right) \rightarrow 0
$$
uniformly in any compact set of $\mathbb{R}^{2 K}$. \\
Combined with the Proposition 2.1, the right side of inequality (\ref{Q_unequality}) is dominated by the quadratic term $M_{n}^{(2)}(\boldsymbol{\xi})$ when $\mathrm{n}$ is sufficiently large. Based on assumption (A11), we know $Q_{1}\left(\boldsymbol{\theta}_{0}+n^{-1 / 2} \boldsymbol{\xi}\right)-Q_{1}\left(\boldsymbol{\theta}_{0}\right) \geq 0$ is always holds when $\|\boldsymbol{\xi}\|= \eta$ (as $\eta \rightarrow \infty$).

\end{proof}

\subsubsection{Proof of Theorem 2.2}
\begin{proof}
~\\
\noindent \textbf{Sparsity:}\\
Under the model assumption, $\boldsymbol{\theta}$ can decomposed as $\left(\boldsymbol{\theta}_{\mathcal{A}}^{C}, \boldsymbol{\theta}_{\mathcal{A}^{C}}^{T}\right)^{T}$, where $\boldsymbol{\theta}_{\mathcal{A}} \in \mathbb{R}^{K+s_{1}}$, and $\boldsymbol{\theta}_{\mathcal{A}^{C}} \in \mathbb{R}^{K-s_{1}}$. Denote $\hat{\boldsymbol{\theta}}=\left(\hat{\boldsymbol{\theta}}_{\mathcal{A}}^{T}, \hat{\boldsymbol{\theta}}_{\mathcal{A}^{C}}^{T}\right)^{T}$ and by the way of contradiction suppose $\hat{\boldsymbol{\theta}}_{\mathcal{A}^{C}} \neq 0$. Let $\boldsymbol{\theta}^{*}$ be a vector constructed by
replacing $\hat{\boldsymbol{\theta}}_{\mathcal{A}^{C}}$ with $\boldsymbol{0}^{T}$ in $\hat{\boldsymbol{\theta}}$, that is, $\boldsymbol{\theta}^{*} = \left(\hat{\boldsymbol{\theta}}_{\mathcal{A}}^{T}, \boldsymbol{0}_{k-s_{1}}^{T}\right)^{T}$.We have
 \begin{align}
 \label{Q_1_equation}
 &Q_{1}\left(\boldsymbol{\theta}^{*}\right)-Q_{1}(\hat{\boldsymbol{\theta}}) \\ \nonumber
 =& \left\{Q_{1}\left(\boldsymbol{\theta}^{*}\right)-Q_{1}\left(\boldsymbol{\theta}_{0}\right)\right\}-\left\{Q_{1}(\hat{\boldsymbol{\theta}})-Q_{1}\left(\boldsymbol{\theta}_{0}\right)\right\} \\ \nonumber
 =& L_{n}\left\{n^{1 / 2}\left(\hat{\boldsymbol{\theta}}_{\mathcal{A}}^{T}-\boldsymbol{\theta}_{\mathcal{A}, 0}^{T}, \boldsymbol{0}_{K-s}^{T}\right)^{T}\right\}-L_{n}\left\{n^{1 / 2}\left(\hat{\boldsymbol{\theta}}-\boldsymbol{\theta}_{0}\right)\right\}-n \tilde{\gamma}_{1 n} \sum_{j=K+s+1}^{2 K} \tilde{w}_{j}\left|\hat{\theta}_{j}\right| .
 \end{align}\\
 Following the Lemma 1 of root-n consistent of $\hat{\boldsymbol{\theta}}$, the first and the second term in (\ref{Q_1_equation}) are both $O_{p}(1)$. Then, we consider the thrid term in
 \begin{align*}
     &-n \tilde{\gamma}_{1 n} \sum_{j=K+s_{1}+1}^{2 K} \tilde{w}_{j}\left|\hat{\theta}_{j}\right|\\
     =&-n \tilde{\gamma}_{1 n} n^{1 / 2} \sum_{j=K+s_{1}+1}^{2 K}\left(n^{1 / 2} \tilde{\theta}_{j}\right)^{-1}\left|\hat{\theta}_{j}\right| \rightarrow o(n^{1/2})~~~(as~~n\tilde{\gamma}_{1 n} \rightarrow \infty),
 \end{align*}
 where $\tilde{\theta}_{j}=O_{p}(1)$, when $\theta_{j, 0}=0, n^{1 / 2} $.
 Therefore, (\ref{Q_1_equation}) is dominated by the third term and $Q_{1}\left(\boldsymbol{\theta}^{*}\right)<Q_{1}(\hat{\boldsymbol{\theta}})$ always holds, which contradicts that $\hat{\boldsymbol{\theta}}$ is the minimizer of $Q_{1}(\boldsymbol{\theta})$.

\noindent \textbf{Asymptotic normality:}\\
Let
\begin{align}
 \label{Q_1_A}
 & Q_{1}\left\{\left(\boldsymbol{\theta}_{\mathcal{A}, 0}^{T}+n^{-1 / 2} \boldsymbol{\delta}^{T}, \mathbf{0}_{K-s_{1}}^{T}\right)^{T}\right\}-Q_{1}\left\{\left(\boldsymbol{\theta}_{\mathcal{A}, 0}^{T}, \boldsymbol{0}_{K-s_{1}}^{T}\right)^{T}\right\} \\ \nonumber
 =& L_{n}(\boldsymbol{\delta})+n \tilde{\gamma}_{1 n} \sum_{j=K+2}^{K+s_{1}} \tilde{w}_{j}\left(\left|\theta_{j, 0}+n^{-1 / 2} \delta_{j}\right|-\left|\theta_{j, 0}\right|\right) \\ \nonumber
 =& L_{n}(\boldsymbol{\delta})+n^{1 / 2} \tilde{\gamma}_{1 n} \sum_{j=K+2}^{K+s_{1}} \tilde{w}_{j} \delta_{j} \operatorname{sgn}\left(\theta_{j, 0}\right),
 \end{align}
 where $\boldsymbol{\delta} \in \mathbb{R}^{K+s_{1}}$ is a fixed vector. According to the Proposition 2.1, we know
$$
L_{n}(\boldsymbol{\delta}) \rightarrow-\sum_{k=1}^{K} \boldsymbol{\delta}^{T} \boldsymbol{S}_{k}+\frac{1}{2} \boldsymbol{\delta}^{T}\left(\sum_{k=1}^{K} {\bm{\Omega}}_{k, \mathcal{A}}\right) \boldsymbol{\delta},
$$
and
$$
n^{1 / 2} \tilde{\gamma}_{1 n} \sum_{K+2}^{K+s_{1}} \tilde{w}_{j} \delta_{j} \operatorname{sgn}\left(\theta_{j, 0}\right) \stackrel{p}{\longrightarrow} 0~~~(as~~n^{1 / 2} \tilde{\gamma}_{1 n} \rightarrow 0).
$$\\
Therefore, the (\ref{Q_1_A}) can be deduced to the following conclusion.
 $$
 Q_{1}\left\{\left(\boldsymbol{\theta}_{\mathcal{A}, 0}^{T}+n^{-1 / 2} \boldsymbol{\delta}^{T}, \mathbf{0}_{K-s_{1}}^{T}\right)^{T}\right\}-Q_{1}\left\{\left(\boldsymbol{\theta}_{\mathcal{A}, 0}^{T}, \mathbf{0}_{K-s_{1}}^{T}\right)^{T}\right\} \rightarrow-\sum_{k=1}^{K} \boldsymbol{\delta}^{T} \boldsymbol{S}_{k}+\frac{1}{2} \boldsymbol{\delta}^{T}\left(\sum_{k=1}^{K} {\bm{\Omega}}_{k, \mathcal{A}}\right) \boldsymbol{\delta}
 $$
 and the minimizer to (\ref{Q_1_A}), defined as $\hat{\boldsymbol{\delta}}$, follows the asymptotic normal distribution $N\left(0, {\bm{\Sigma}}_{\mathcal{A}}\right)$, where ${\bm{\Sigma}}_{{\mathcal{A}}}$is the covariance matrix of the oracle estimator given in Proposition 2.1.\\
 From the properties of convex functions, we know that the minimizer to (\ref{Q_1_A}) is unique and $n^{1 / 2}\left(\hat{\boldsymbol{\theta}}_{\mathcal{A}, F A L}-\boldsymbol{\theta}_{\mathcal{A}, 0}\right)$ is a minimizer to (\ref{Q_1_A}). Hence, we get $n^{1 / 2}\left(\hat{\boldsymbol{\theta}}_{\mathcal{A}, F A L}-\boldsymbol{\theta}_{\mathcal{A}, 0}\right)=\hat{\boldsymbol{\delta}} \stackrel{d}{\rightarrow} N\left(0, {\bm{\Sigma}}_{\mathcal{A}}\right)$.
\cite{Jiang-2014} stated that more general cases follow the similar exposition, but with more complicated notations. So the Theorem 2.2 is proved.
\end{proof}

\subsection{Proof of Proposition 2.2}
\begin{proof}
The proof is similar to the proof of Proposition 2.1 and thus is skipped.
\end{proof}

\subsection{Proof of Theorem 2.3}
\begin{proof}
Similar to the proof of Theorem 2.2.
\end{proof}
\subsubsection{Root-n Consistency Lemma}
\begin{lemm}[Root-n consistency of $\left.\hat{\boldsymbol{\theta}}_{FAS}\right)$]
Assume conditions $A 1-A 11$ hold, if $n^{1 / 2} \tilde{\gamma}_{2 n} \rightarrow 0$, then $\hat{\boldsymbol{\theta}}_{F A S}-\boldsymbol{\theta}_{0}=O_{p}\left(n^{-1 / 2}\right)$
\end{lemm}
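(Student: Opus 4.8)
The plan is to mirror the argument used for Lemma 1 (the root-$n$ consistency of $\hat{\boldsymbol{\theta}}_{\mathrm{FAL}}$), adapting it to the group sup-norm penalty. Following Fan and Li (2001), it suffices to show that for every $\epsilon>0$ there is a sufficiently large constant $\eta$ with
\[
P\Big\{\inf_{\|\boldsymbol{\xi}\|=\eta} Q_2\big(\boldsymbol{\theta}_0+n^{-1/2}\boldsymbol{\xi}\big) > Q_2(\boldsymbol{\theta}_0)\Big\}\ge 1-\epsilon,
\]
since this traps any minimizer of $Q_2$ inside an $O_p(n^{-1/2})$ ball around $\boldsymbol{\theta}_0$. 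First I would expand
\[
Q_2\big(\boldsymbol{\theta}_0+n^{-1/2}\boldsymbol{\xi}\big)-Q_2(\boldsymbol{\theta}_0)= L_n(\boldsymbol{\xi}) + n\tilde{\gamma}_{2n}\sum_{l=0}^{p}\tilde{\omega}_{(l)}\Big(\big\|\boldsymbol{d}_{(l),0}+n^{-1/2}\boldsymbol{\xi}_{(l)}\big\|_\infty-\big\|\boldsymbol{d}_{(l),0}\big\|_\infty\Big),
\]
where $L_n$ is the unpenalized check-loss increment analyzed in Proposition 2.2 and $\boldsymbol{\xi}_{(l)}$ collects the coordinates of $\boldsymbol{\xi}$ in group $l$.

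Next I would split the penalty sum into null and non-null groups, recalling that $\|\boldsymbol{d}_{(l),0}\|\neq 0$ for $l<g$ and $\|\boldsymbol{d}_{(l),0}\|=0$ for $l\ge g$. For each null group, $\|\boldsymbol{d}_{(l),0}+n^{-1/2}\boldsymbol{\xi}_{(l)}\|_\infty-\|\boldsymbol{d}_{(l),0}\|_\infty=n^{-1/2}\|\boldsymbol{\xi}_{(l)}\|_\infty\ge 0$, so these terms only enlarge the difference and may be discarded when bounding from below. For each non-null group I would invoke the Lipschitz property of the sup-norm, $\big|\|\boldsymbol{a}\|_\infty-\|\boldsymbol{b}\|_\infty\big|\le\|\boldsymbol{a}-\boldsymbol{b}\|_\infty$, to obtain
\[
\Big|\big\|\boldsymbol{d}_{(l),0}+n^{-1/2}\boldsymbol{\xi}_{(l)}\big\|_\infty-\big\|\boldsymbol{d}_{(l),0}\big\|_\infty\Big|\le n^{-1/2}\|\boldsymbol{\xi}_{(l)}\|_\infty.
\]
Since for $l<g$ the adaptive weight satisfies $\tilde{\omega}_{(l)}\stackrel{p}{\longrightarrow}(\|\boldsymbol{d}_{(l),0}\|_\infty)^{-1}=O_p(1)$, summing over the finitely many non-null groups bounds the whole penalty increment by $n^{1/2}\tilde{\gamma}_{2n}\,O_p(1)\,\|\boldsymbol{\xi}\|$, which vanishes uniformly over $\|\boldsymbol{\xi}\|=\eta$ because $n^{1/2}\tilde{\gamma}_{2n}\to 0$.

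Finally I would appeal to Proposition 2.2 (the sup-norm analogue of Proposition 2.1), by which $L_n(\boldsymbol{\xi})=M_n^{(1)}(\boldsymbol{\xi})+M_n^{(2)}(\boldsymbol{\xi})+o_p(1)$, with $M_n^{(1)}(\boldsymbol{\xi})$ asymptotically linear, hence $O_p(\eta)$ on $\|\boldsymbol{\xi}\|=\eta$, and $M_n^{(2)}(\boldsymbol{\xi})\to\tfrac12\boldsymbol{\xi}^T\big(\sum_{k=1}^K\boldsymbol{\Omega}_{k,\mathcal{B}}\big)\boldsymbol{\xi}$, which is bounded below by a positive multiple of $\eta^2$ by the positive definiteness in (A11). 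Combining the three pieces, the quadratic term dominates both the linear loss term and the vanishing penalty increment once $\eta$ is taken large enough, so the displayed probability bound holds and $\hat{\boldsymbol{\theta}}_{\mathrm{FAS}}-\boldsymbol{\theta}_0=O_p(n^{-1/2})$ follows. The main obstacle is handling the non-smooth \emph{group} sup-norm penalty: unlike the coordinatewise $L_1$ penalty of Lemma 1, one must work at the group level through the reverse-triangle bound for $\|\cdot\|_\infty$, check that null groups contribute only nonnegative increments, and use the Convexity Lemma of Jiang (2013) to strengthen the pointwise convergence of $L_n$ to uniform convergence on $\{\|\boldsymbol{\xi}\|=\eta\}$.
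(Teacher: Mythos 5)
Your proposal is correct and follows essentially the same route as the paper's proof: the Fan--Li reduction to a probability bound on the sphere $\|\boldsymbol{\xi}\|=\eta$, the split of the penalty into null groups (discarded as nonnegative) and non-null groups (shown to vanish since $n^{1/2}\tilde{\gamma}_{2n}\to 0$), and domination by the quadratic term $M_n^{(2)}$ via the decomposition of $L_n$ and positive definiteness in (A11). Your explicit use of the reverse-triangle inequality for $\|\cdot\|_\infty$ on the non-null groups is a detail the paper leaves implicit, but it is the same argument.
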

\begin{proof}
As in Lemma 1, we only need to show that for any $\epsilon>0$, there exists a sufficiently large constant $\eta$, such that
$$
P\left\{\inf _{\|\xi\|=\eta} Q_2\left(\boldsymbol{\theta}_{0}+n^{-1 / 2} \boldsymbol{\xi}\right)>Q_{2}\left(\boldsymbol{\theta}_{0}\right)\right\} \geq 1-\epsilon
$$
where $Q_{2}(\cdot)$ is defined in (\ref{Objfunc_p2}) in section
$2.3.2$. Note that
\begin{align}
 \label{Q_2_inequality}
 & Q_{2}\left(\boldsymbol{\theta}_{0}+n^{-1 / 2} \boldsymbol{u}\right)-Q_{2}\left(\boldsymbol{\theta}_{0}\right) \\ \nonumber
 =& L_{n}(\boldsymbol{u})+n \tilde{\gamma}_{2 n} \sum_{l=1}^{g} \tilde{w}_{(l)}\left\{\left\|\boldsymbol{\theta}_{(l), 0}+n^{-1 / 2} \boldsymbol{u}_{(l)}\right\|_{\infty}-\left\|\boldsymbol{\theta}_{(l), 0}\right\|_{\infty}\right\}+n^{1 / 2} \tilde{\gamma}_{2 n} \sum_{l=g+1}^{p} \tilde{w}_{(l)}\left\{\left\|\boldsymbol{u}_{(l)}\right\|_{\infty}\right\} \\ \nonumber
  \geq & L_{n}(\boldsymbol{u})+n \tilde{\gamma}_{2 n} \sum_{l=1}^{g} \tilde{w}_{(l)}\left\{\left\|\boldsymbol{\theta}_{(l), 0}+n^{-1 / 2} \boldsymbol{u}_{(l)}\right\|_{\infty}-\left\|\boldsymbol{\theta}_{(l), 0}\right\|_{\infty}\right\},
\end{align}
where $\tilde{w}_{(l)}=\left[\max \left\{\left|\tilde{d}_{k, l}\right|, k=2, \cdots, K\right\}\right]^{-1}$ is the group-wise weight for the $l^{t h}$ predictor and $\boldsymbol{\theta}_{0}=\left\{\boldsymbol{\theta}_{(l), 0}, l=-1,0,1, \cdots, p\right\}$. Due to the assumption in section 2.3.2, we have
$$
n \tilde{\gamma}_{2 n} \sum_{l=1}^{g} \tilde{w}_{(l)}\left\{\| \boldsymbol{\theta}_{(l), 0}+n^{-1 / 2} \boldsymbol{\xi}_{(l)}\left\|_{\infty}-\right\| \boldsymbol{\theta}_{(l), 0} \|_{\infty}\right\} \stackrel{p}{\longrightarrow} 0~~~(as~~n^{1 / 2} \tilde{\gamma}_{2 n} \rightarrow 0).
$$\\
Combined with the Proposition 2.1, the right side of inequality (\ref{Q_2_inequality}) is dominated bu the quadratic term $M_{n}^{(2)}(\boldsymbol{\xi}) \geq 0$ when n is sufficiently large.
Based on assumption (A11), we know $Q_{2}\left(\boldsymbol{\theta}_{0}+n^{-1 / 2} \boldsymbol{\xi}\right) \geq Q_{2} \left(\boldsymbol{\theta}_{0}\right)$ is also always holds when $\|\boldsymbol{\xi}\|=\eta$.

\end{proof}

\subsubsection{Proof of Theorem 2.3}
\begin{proof}
~\\
\noindent\textbf{Sparsity:}\\
The proof is taken the similar arguments as theorem 2.1.
Under the assumption in section 2.3.2, $\boldsymbol{\theta}$ can decomposed as $ \boldsymbol{\theta}=\left(\boldsymbol{\theta}_{\mathcal{B}}^{T}, \boldsymbol{\theta}_{\mathcal{B}^{C}}^{T}\right)^{T}$, where $\boldsymbol{\theta}_{\mathcal{B}}=\left(\boldsymbol{\theta}_{(-1)}^{T}, \boldsymbol{\theta}_{(0)}^{T}, \ldots, \boldsymbol{\theta}_{(g)}^{T}\right)^{T}$ and $\boldsymbol{\theta}_{\mathcal{B}^{C}}=\left(\boldsymbol{\theta}_{(g+1)}^{T}, \ldots, \boldsymbol{\theta}_{(p)}^{T}\right)^{T}$.
Denote $\hat{\boldsymbol{\theta}}=\left(\hat{\boldsymbol{\theta}}_{\mathcal{B}}^{T}, \hat{\boldsymbol{\theta}}_{\mathcal{B}^{C}}^{T}\right)^{T}$ and by the way of contradiction suppose $\hat{\boldsymbol{\theta}}_{\mathcal{B}^{C}} \neq \mathbf{0}$.
Let $\boldsymbol{\theta}^{*}$ be a vector constructed by replacing $\hat{\boldsymbol{\theta}}_{\mathcal{B}^{C}}^{T}$ with $\boldsymbol{0}^{T}$ in $\hat{\boldsymbol{\theta}}$, that is,
$\boldsymbol{\theta}^{*}=\left(\hat{\boldsymbol{\theta}}_{\mathcal{B}}^{T}, \boldsymbol{0}_{(p-g) K}^{T}\right)^{T}$. 
Note that
\begin{align}
\label{Q_2_equation}
    & Q_{2}\left(\boldsymbol{\theta}^{*}\right)-Q_{2}(\hat{\boldsymbol{\theta}})\\
    =&\left\{Q_{2}\left(\boldsymbol{\theta}^{*}\right)-Q_{2}\left(\boldsymbol{\theta}_{0}\right)\right\}-\left\{Q_{2}(\hat{\boldsymbol{\theta}})-Q_{2}\left(\boldsymbol{\theta}_{0}\right)\right\}\nonumber \\ \nonumber
    =&L_{n}\left\{n^{1/2}\left(\hat{\boldsymbol{\theta}}_{\mathcal{B}}^{T}-\boldsymbol{\theta}_{\mathcal{B}, 0}^{T}, \boldsymbol{0}_{(p-g) K}^{T}\right)^{T}\right\}-L_{n}\left\{n^{1 / 2}\left(\hat{\boldsymbol{\theta}}-\boldsymbol{\theta}_{0}\right)\right\}-n \tilde{\gamma}_{2 n} \sum_{l=g+1}^{p} \tilde{w}_{(l)}\left\|\hat{\boldsymbol{\theta}}_{(l)}\right\|_{\infty}. \nonumber
\end{align}

Following the Lemma 2 of $\hat{\boldsymbol{\theta}}_{F A S}$, the first and second term in (\ref{Q_2_equation}) are both $O_{p}(1)$.
Similarly, the third term
\begin{align*}
    &-n \tilde{\gamma}_{2 n} \sum_{l=g+1}^{p} \tilde{w}_{(l)}\left\|\hat{\boldsymbol{\theta}}_{(l)}\right\|_{\infty}\\
    =&-n \tilde{\gamma}_{2 n} n^{1 / 2} \sum_{l=g+1}^{p}\left[\max \left\{n^{1 / 2}\left|\tilde{\boldsymbol{\theta}}_{(l)}\right|\right\}\right]^{-1}\left\|\hat{\boldsymbol{\theta}}_{(l)}\right\|_{\infty} \rightarrow o(n^{1/2})~~~(as~~n \tilde{\gamma}_{2 n} \rightarrow \infty),
\end{align*}
where $\max \left\{n^{1 / 2}\left|\tilde{\boldsymbol{\theta}}_{(l)}\right|\right\}=O_{p}(1)$ for $l=g+1, \ldots, p$, when $\tilde{\lambda}_{2 n} \rightarrow \infty$. Therefore, (\ref{Q_2_equation}) is dominated by the third term and $Q_{2}\left(\boldsymbol{\theta}^{*}\right)<Q_{2}(\hat{\boldsymbol{\theta}})$ always holds.

\noindent \textbf{Asymptotic normality:}\\
The proof is similar to the proof of Theorem 2.2 and thus is skipped.

\end{proof}

\end{appendix}

\end{document}